\title{%
  Testing the constancy of Spearman's rho in multivariate time series
}
\author{%
  Ivan Kojadinovic  \\
  \small{Laboratoire de math\'ematiques et applications, UMR CNRS 5142} \\
  \small{Universit\'e de Pau et des Pays de l'Adour} \\
  \small{B.P. 1155, 64013 Pau Cedex, France} \\
  \small{\texttt{ivan.kojadinovic@univ-pau.fr}} \\
  \and
  Jean-Fran\c{c}ois Quessy \\
  \small{D\'epartement de math\'ematiques et d’informatique} \\
  \small{Universit\'e du Qu\'ebec \`a Trois-Rivi\`eres} \\
  \small{Trois-Rivi\`eres, Québec, C.P. 500, G9A 5H7 Canada} \\
  \small{\texttt{jean-francois.quessy@uqtr.ca}}
  \and
  Tom Rohmer \\
  \small{Laboratoire de math\'ematiques Jean Leray} \\
  \small{Universit\'e de Nantes} \\ 
  \small{B.P. 92208, 44322 Nantes Cedex 3, France} \\
  \small{\texttt{tom.rohmer@univ-nantes.fr}} \\[5mm]
}
\numberwithin{equation}{section}
\newcommand{\eps}{\varepsilon}
\newcommand{\N}{\mathbb{N}}
\newcommand{\R}{\mathbb{R}}
\newcommand{\dd}{\mathrm{d}}
\newcommand{\A}{\mathbb{A}}
\newcommand{\FF}{\mathcal{F}}
\newcommand{\LL}{\mathcal{L}}
\newcommand{\I}{\mathcal{I}}
\newcommand{\B}{\mathbb{B}}
\newcommand{\Hb}{\mathbb{H}}
\newcommand{\U}{\mathbb{U}}
\newcommand{\Z}{\mathbb{Z}}
\renewcommand{\S}{\mathbb{S}}
\newcommand{\T}{\mathbb{T}}
\newcommand{\Ex}{\operatorname{E}}
\newcommand{\var}{\operatorname{var}}
\newcommand{\cov}{\operatorname{cov}}
\newcommand{\MSE}{\mathrm{MSE}}
\newcommand{\1}{\mathbf{1}}
\newcommand{\ip}[1]{\lfloor #1 \rfloor}
\renewcommand{\vec}{\bm}
\newcommand{\pobs}[1]{\hat{\bm #1}}
\renewcommand{\Pr}{\mathrm{P}}
\newcommand{\p}{\overset{\Pr}{\to}}
\newcommand{\as}{\overset{\mathrm{a.s.}}{\longrightarrow}}
\theoremstyle{plain}
\newtheorem{prop}{Proposition}
\newtheorem{cor}[prop]{Corollary}
\newtheorem{lem}[prop]{Lemma}
\begin{document}
\maketitle

\begin{abstract} 
A class of tests for change-point detection designed to be particularly sensitive to changes in the cross-sectional rank correlation of multivariate time series is proposed. The derived procedures are based on several multivariate extensions of Spearman's rho. Two approaches to carry out the tests are studied: the first one is based on resampling, the second one consists of estimating the asymptotic null distribution. The asymptotic validity of both techniques is proved under the null for strongly mixing observations. A procedure for estimating a key bandwidth parameter involved in both approaches is proposed, making the derived tests parameter-free. Their finite-sample behavior is investigated through Monte Carlo experiments. Practical recommendations are made and an illustration on trivariate financial data is finally presented.

\medskip

\noindent {\it Keywords:} change-point detection; empirical copula; HAC kernel variance estimator; multiplier central limit theorems; partial-sum processes; ranks; Spearman's rho; strong mixing.
\end{abstract}


\section{Introduction}

Let $\vec X_1,\dots,\vec X_n$ be a multivariate times series of $d$-dimensional observations and, for any $i \in \{1,\dots,n\}$, let $F^{(i)}$ denote the cumulative distribution function (c.d.f.) of $\vec X_i$. We are interested in procedures for testing $H_0: F^{(1)} = \dots = F^{(n)}$ against $\neg H_0$. Notice that the aforementioned null hypothesis can be simply rewritten as
\begin{equation}
\label{H0}
  H_0 : \,\exists \, F \text{ such that } 
  \vec X_1, \ldots, \vec X_n \text{ have c.d.f. } F.
\end{equation}
Such statistical procedures are commonly referred to as {\em tests for change-point detection} \citep[see, e.g.,][for an overview of possible approaches]{CsoHor97}. The majority of tests for $H_0$ developed in the literature deal with the case $d=1$. We aim at developing {\em nonparametric} tests for {\em multivariate} time series that are particularly sensitive to changes in the {\em dependence} among the components of the $d$-dimensional observations. The availability of such tests seems to be of great practical importance for the analysis of economic data, among others. In particular, assessing whether the dependence among financial assets can be considered constant or not over a given time period appears crucial for risk management, portfolio optimization and related statistical modeling \citep[see, e.g.,][and the references therein for a more detailed discussion about the motivation for such statistical procedures]{WieDehvanVog14,DehVogWenWie14}.

The above context, rather naturally, suggests to address the informal notion of {\em dependence} through that of {\em copula} \citep[see, e.g.,][]{Nel06}. Assume that $H_0$ in~\eqref{H0} holds and that, additionally, the common marginal c.d.f.s $F_1,\dots,F_d$ of $\vec X_1,\dots,\vec X_n$ are continuous. Then, from the work of \cite{Skl59}, the common multivariate c.d.f.\ $F$ of the observations can be written as
$$
F(\vec x) = C \{ F_1(x_1),\dots,F_d(x_d) \}, \qquad \vec x \in \R^d,
$$
where the function $C:[0, 1]^d \to [0,1]$ is the unique {\em copula} associated with $F$. It follows that $H_0$ can be rewritten as $H_{0,m} \cap H_{0,c}$, where
\begin{align}
\label{H0m}
  H_{0,m} &: \,\exists \, F_1,\dots, F_d \text{ such that } 
  \vec X_1, \ldots, \vec X_n \text{ have marginal c.d.f.s } F_1, \dots, F_d,
\\
\label{H0c}
  H_{0,c} &: \,\exists \, C \text{ such that } 
  \vec X_1, \ldots, \vec X_n \text{ have copula } C.
\end{align}

Several nonparametric tests designed to be particularly sensitive to certain alternatives under $H_{0,m} \cap  \neg H_{0,c}$ were proposed in the literature. Tests for the constancy of Kendall's tau (which is a functional of $C$) were investigated by \cite{GomHor99} \citep[see also][]{GomHor02} and \cite{QueSaiFav13} in the case of serially independent observations. A version of the previous tests adapted to a very general class of bivariate time series was proposed by \cite{DehVogWenWie14}. Recent multivariate alternatives are the tests studied in \citet[see also the references therein]{BucKojRohSeg14} based on Cram\'er--von Mises functionals of the {\em sequential empirical copula process}.

The aim of this work is to derive tests for the constancy of several multivariate extensions of Spearman's rho (which are also functionals of $C$) in multivariate strongly mixing time series. A similar problem was recently tackled by \cite{WieDehvanVog14}. However, as the functional they considered does not exactly correspond to a multivariate extension of Spearman's rho (because of the way ranks are calculated), the corresponding test turn out to have a rather low power. We remedy to that situation by computing ranks with respect to the relevant subsamples. From a theoretical perspective, as in \cite{WieDehvanVog14}, no assumptions on the first order partial derivatives of the copula are made. The latter is actually an advantage of the studied tests over that investigated in \cite{BucKojRohSeg14}. An inconvenience with respect to the aforementioned approach is however that, as all tests based on moments of copulas (such as Spearman's rho or Kendall's tau), the derived tests will have no power, by construction, against alternatives involving changes in the copula at a constant value of Spearman's rho.

To carry out the tests, we propose two approaches for computing approximate p-values: the first one is based on resampling while the second one consists of estimating the asymptotic null distribution. In addition, a procedure for estimating a key bandwidth parameter involved in both approaches is proposed, making the derived tests fully data-driven. The versions of the studied tests based on the estimation of the asymptotic null distribution can be seen as alternatives to the test based on Kendall's tau recently proposed by \cite{DehVogWenWie14}. 

The paper is organized as follows. The test statistics are defined in the second section and their limiting null distribution is established under strong mixing. Section~\ref{approx_pval} presents two approaches for computing approximate p-values based, respectively, on bootstrapping and on the estimation of an asymptotic variance. The fourth section partially reports the results of Monte Carlo experiments involving bivariate and fourvariate time series generated from autoregressive and GARCH-like models. The fifth section contains practical recommendations and an illustration on trivariate financial data, while the last section concludes.

In the rest of the paper, the arrow~`$\leadsto$' denotes weak convergence in the sense of Definition~1.3.3 in \cite{vanWel96}. Also, given a set $T$, $\ell^\infty(T;\R)$ denotes the space of all bounded real-valued functions on $T$ equipped with the uniform metric. The proofs of the stated theoretical results are available in the online supplementary material and the studied tests for change-point detection are implemented in the package {\tt npcp} \citep{npcp} for the \textsf{R} statistical system \citep{Rsystem}.

\section{Test statistics}

\subsection{Multivariate extensions of Spearman's rho and their estimation}

Spearman's rho is a very well-known measure of bivariate dependence \citep[see, e.g.,][Section 5.1 and the references therein]{Nel06}. For a bivariate random vector with continuous margins and copula $C$, it can be expressed as 
$$
\rho(C) = 12 \int_{[0,1]^2} C(\vec u) \dd \vec u - 3 = 12 \int_{[0,1]^2} u_1 u_2 \dd C(\vec u) - 3.
$$
When the random vector of interest is $d$-dimensional with $d>2$, the following three possible extensions were proposed by \cite{SchSch07a}:
\begin{align*}
\rho_1(C) &=  \frac{d+1}{2^d - d - 1} \left\{ 2^d \int_{[0,1]^d} C(\vec u) \dd \vec u - 1 \right\}, \\
\rho_2(C) &= \rho_1(\bar C), \\
\rho_3(C) &= { d \choose 2 }^{-1} \sum_{1 \leq i < j \leq d} \rho(C^{(i,j)}),
\end{align*}
where $C^{(i,j)}$ is the bivariate margin obtained from $C$ by keeping dimensions $i$ and $j$, and $\bar C$ is the survival function corresponding to $C$. It is well-known that the latter can be expressed in terms of $C$. To see this, let $D = \{1,\dots,d\}$ and, for any $\vec u \in [0,1]^d$ and $A \subseteq D$, let $\vec u^A$ be the vector of $[0,1]^d$ such that $u^A_i = u_i$ if $i \in A$ and  $u^A_i = 1$ otherwise. Then, for any $\vec u \in [0,1]^d$, $\bar C(\vec u) = \sum_{A \subseteq D} (-1)^{|A|} C(\vec u^A)$. Other related $d$-dimensional coefficients are considered in \cite{Que09}.

Let us now discuss the estimation of the above theoretical quantities. Specifically, we assume that we have at hand $n$ copies $\vec X_1, \dots, \vec X_n$ of a $d$-dimensional random vector $\vec X$ with copula $C$ and continuous margins. Given an estimator of $C$, natural estimators of $\rho_1(C)$, $\rho_2(C)$ and $\rho_3(C)$ can be obtained using the plug-in principle. Restricting attention to a sample $\vec X_k, \dots, \vec X_l$, $1 \leq k \leq l \leq n$, for reasons that will become clear in the next subsection, a natural estimator of $C$ is given by 
\begin{equation}
\label{eq:Ckl}
  C_{k:l}(\vec{u}) 
  = \frac{1}{l-k+1} \sum_{i=k}^l \1(\pobs{U}_i^{k:l} \leq \vec{u}), \qquad \vec{u} \in [0, 1]^d,
\end{equation}
where
\begin{equation}
\label{eq:pseudo}
  \pobs{U}_i^{k:l} 
  = \frac{1}{l-k+1} (R_{i1}^{k:l},\dots,R_{id}^{k:l}), \qquad 
  i \in \{ k, \dots, l \},
\end{equation}
with $R_{ij}^{k:l} = \sum_{t=k}^l \1( X_{tj} \le X_{ij} )$ the maximal rank of $X_{ij}$ among $X_{kj},\dots,X_{lj}$. The quantity given by~\eqref{eq:Ckl} is commonly referred to as the {\em empirical copula} of $\vec X_k, \dots, \vec X_l$ \citep[see, e.g.,][]{Rus76,Deh81}. Corresponding natural estimators of the three aforementioned multivariate versions of Spearman's rho are therefore $\rho_1(C_{k:l})$, $\rho_2(C_{k:l})$ and $\rho_3(C_{k:l})$, respectively. 


It is important to notice that we do not necessarily assume the observations to be serially independent. Serial independence {\em and} continuity of the marginal distributions together guarantee the absence of ties in the $d$ component series. However, continuity of the marginal distributions alone is \emph{not} sufficient to guarantee the absence of ties when the observations are serially dependent \cite[see, e.g.,][Example~4.2]{BucSeg14}. This is the reason why maximal ranks are used in~\eqref{eq:pseudo}. The possible presence of ties in the component series makes the study of the tests under consideration substantially more complicated. 

\subsection{Change-point statistics}
\label{sec:cpstat}

To derive tests for change-point detection particularly sensitive to changes in the strength of the cross-sectional dependence, one natural possibility is to base these tests on differences of Spearman's rhos. By analogy with the classical approach to change-point analysis \citep[see, e.g.,][]{CsoHor97}, one could for instance consider the following three test statistics: 
\begin{equation}
\label{eq:Sni}
S_{n,i} = \max_{1 \leq k \leq n-1} \frac{k(n-k)}{n^{3/2}} \left| \rho_i(C_{1:k}) -  \rho_i(C_{k+1:n}) \right|, \qquad i \in \{1,2,3\},
\end{equation}
where $C_{1:k}$ and $C_{k+1:n}$ are the empirical copulas of the subsamples $\vec X_1,\dots,\vec X_k$ and $\vec X_{k+1},\dots,\vec X_n$, respectively, defined analogously to~\eqref{eq:Ckl}. All three statistics above turn out to be particular cases of a generic statistic which is the primary focus of this work. Before we can define it, some additional notation is necessary. 

For any $A \subseteq D = \{1,\dots,d\}$, let $\phi_A$ be the map from $\ell^\infty([0,1]^d;\R)$ to $\R$ defined by 
\begin{equation}
\label{eq:phiA}
\phi_A(g) = \int_{[0,1]^d} g(\vec u^A) \dd \vec u, \qquad g \in \ell^\infty([0,1]^d;\R).
\end{equation}
Then, define the empirical process 
$$
\T_{n,A}(s) =  \sqrt{n} \, \lambda_n(0,s) \, \lambda_n(s,1) \, \{ \phi_A(C_{1:\ip{ns}}) - \phi_A(C_{\ip{ns}+1:n}) \}, \qquad s \in [0,1],
$$
where $\lambda_n(s, t) = (\ip{nt}-\ip{ns})/n$ for $(s,t) \in \Delta = \{ (s, t) \in [0, 1]^2: s \le t \}$, and with the additional convention that $C_{k:l} = 0$ whenever $k > l$. Simple calculations reveal that $\T_{n,\emptyset} = 0$. Next, consider the $\R^{2^d - 1}$-valued empirical process
\begin{equation}
\label{eq:Tn}
\T_n(s) = ( \T_{n,\{1\}}(s), \T_{n,\{2\}}(s), \dots, \T_{n,D}(s) ), \qquad s \in [0,1].
\end{equation}
Finally, given a function $f:\R^{2^d - 1} \to \R$, define the generic change-point statistic
\begin{equation}
\label{eq:Snf}
S_{n,f} = \sup_{s \in [0,1]} |f\{\T_n(s)\}| = \max_{1 \leq k \leq n-1} |f\{\T_n(k/n)\}|.
\end{equation}

We shall now verify that the statistics $S_{n,i}$, $i \in \{1,2,3\}$, given by~\eqref{eq:Sni} are particular cases of $S_{n,f}$ when $f$ is linear, that is, when there exists a vector $\vec a \in \R^{2^d - 1}$ such that, for any $\vec x \in \R^{2^d - 1}$, $f(\vec x) = \vec a^\top \vec x$. As we continue, with some abuse of notation, we index the components of vectors of $\R^{2^d - 1}$ by subsets of $D$ of cardinality greater than 1, i.e., for any $\vec x \in \R^{2^d - 1}$, we write $\vec x = ( x_{\{1\}}, x_{\{2\}}, \dots, x_D )$. Then, we have $S_{n,i} = S_{n,f_i}$, $i \in \{1,2,3\}$, where, for any $\vec x \in \R^{2^d - 1}$,
\begin{gather*}
f_1(\vec x) = \frac{(d+1)2^d}{2^d - d - 1} x_{D}, \quad f_2(\vec x) = \frac{(d+1)2^d}{2^d - d - 1} \sum_{A \subseteq D \atop |A| \geq 1} (-1)^{|A|} x_A, \\ f_3(\vec x) = \frac{24}{d(d-1)} \sum_{A \subseteq D \atop |A| = 2} x_A.
\end{gather*}
Similar relationships hold for the statistics constructed from the additional coefficients mentioned in \cite{Que09}, though the corresponding functions $f$ are not necessarily linear anymore but only continuous. 

Let us make a brief remark concerning the statistic $S_{n,2}$. Expressing it as $S_{n,f_2}$ above is clearly not the most efficient way to compute it. To see this, for any $1 \leq k \leq l \leq n$, define
$$
\bar C_{k:l}(\vec{u}) = \frac{1}{l-k+1} \sum_{i=k}^l \1(\pobs{U}_i^{k:l} > \vec{u}), \qquad \vec u \in [0,1]^d,
$$
where the $\pobs{U}_i^{k:l}$ are defined in~\eqref{eq:pseudo}, and notice that, for any $\vec u \in [0,1]^d$, $\bar C_{k:l}(\vec u) = \sum_{A \subseteq D} (-1)^{|A|} C_{k:l}(\vec u^A)$, where $C_{k:l}$ is defined in~\eqref{eq:Ckl}. Then, by definition of $\rho_2$,
$$
S_{n,2} = \max_{1 \leq k \leq n-1} \frac{k(n-k)}{n^{3/2}} \left| \rho_1(\bar C_{1:k}) -  \rho_1(\bar C_{k+1:n}) \right|.
$$
Under the assumption of no ties in the $d$ component series, some additional simple calculations reveal that the latter is actually nothing else than $S_{n,1}$ computed from the sample $-\vec X_1,\dots,-\vec X_n$.

We end this section by a discussion of the differences between $S_{n,1}$ and the similar statistic considered in \cite{WieDehvanVog14}. Instead of basing their approach on the empirical copula, these authors considered the alternative estimator of $C$ defined, for any $1 \leq k \leq l \leq n$, as
\begin{equation}
\label{eq:Ckln}
  C_{k:l,n}(\vec{u}) 
  = \frac{1}{l-k+1} \sum_{i=k}^l \1(\pobs{U}_i^{1:n} \leq \vec{u}),
  \qquad \vec{u} \in [0, 1]^d,
\end{equation}
with the convention that $C_{k:l,n}=0$ if $k > l$. The apparently subtle yet crucial difference between $C_{k:l}$ in~\eqref{eq:Ckl} and $C_{k:l,n}$ above is that the scaled ranks are computed relative to the complete sample $\vec X_1, \ldots, \vec X_n$ for $C_{k:l,n}$, while, for $C_{k:l}$, they are computed relative to the subsample $\vec X_k, \ldots, \vec X_l$. As a consequence, the analogue of the statistic $S_{n,1}$ considered in \cite{WieDehvanVog14} is not really a maximally selected absolute difference of sample Spearman's rhos. From a practical perspective, as illustrated empirically in \cite{BucKojRohSeg14}, the use of $C_{k:l}$ instead of $C_{k:l,n}$ in a change-point detection framework results in tests that are more powerful when the change in distribution in only due to a change in the copula. We provide similar empirical evidence in Section~\ref{sims}: tests based on $S_{n,1}$ appear substantially more powerful than their analogues based on~\eqref{eq:Ckln} for alternatives involving a change of $\rho_1(C)$ at constant margins. Reasons that explain this improved efficiency are discussed in \citet[Section 2]{BucKojRohSeg14}.

\subsection{Limiting null distribution under strong mixing}

Let us first recall the notion of {\em strongly mixing sequence}. For a sequence of $d$-dimensional random vectors $(\vec Y_i)_{i \in \Z}$, the $\sigma$-field generated by $(\vec Y_i)_{a \leq i \leq b}$, $a, b \in \Z \cup \{-\infty,+\infty \}$, is denoted by $\FF_a^b$. The strong mixing coefficients corresponding to the sequence $(\vec Y_i)_{i \in \Z}$ are defined by
$$
\alpha_r = \sup_{p \in \Z} \sup_{A \in \FF_{-\infty}^p,B\in \FF_{p+r}^{+\infty}} | P(A \cap B) - P(A) P(B) |
$$
for strictly positive integer $r$. The sequence $(\vec Y_i)_{i \in \Z}$ is said to be \emph{strongly mixing} if $\alpha_r \to 0$ as $r \to \infty$. 

The limiting null distribution of the vector-valued empirical process $\T_n$ defined in~\eqref{eq:Tn} can be obtained by rewriting its components in terms of the processes 
\begin{equation}
\label{eq:SnA}
\S_{n,A}(s,t) =  \sqrt{n} \lambda_n(s,t) \{ \phi_A(C_{\ip{ns}+1:\ip{nt}}) - \phi_A(C) \}, \qquad (s,t) \in \Delta,
\end{equation}
for $A \subseteq D, |A| \geq 1$. Indeed, it is easy to verify that, under $H_0$ defined in~\eqref{H0},
\begin{equation}
\label{eq:TnAsH0}
\T_{n,A}(s) = \lambda_n(s,1) \S_{n,A}(0,s) -  \lambda_n(0,s) \S_{n,A}(s,1), \qquad s \in [0,1].
\end{equation}
As we shall see below, the limiting null distribution of $\T_n$ is then a mere consequence of the fact that the empirical processes $\S_{n,A}$, $A \subseteq D$, $|A| \geq 1$, are asymptotically equivalent to continuous functionals of the sequential empirical process 
\begin{equation}
\label{eq:seqep}
  \B_n(s,t, \vec u) = \frac{1}{\sqrt{n}} \sum_{i=\ip{ns} + 1}^{\ip{nt}} \{\1(\vec U_i \leq \vec u) - C(\vec u) \}, \qquad (s,t, \vec u) \in \Delta \times [0, 1]^d,
\end{equation}
where $\vec U_1,\dots,\vec U_n$ is the unobservable sample obtained from $\vec X_1, \dots, \vec X_n$ by the probability integral transforms $U_{ij} = F_j(X_{ij})$, $i \in \{1,\dots,n\}$, $j \in D$. 

If $\vec U_1,\dots,\vec U_n$ is drawn from a strictly stationary sequence $(\vec U_i)_{i \in \Z}$ whose strong mixing coefficients satisfy $\alpha_r = O(r^{-a})$ with $a > 1$, we have from \cite{Buc14} that $\B_n(0,\cdot,\cdot)$ converges weakly in $\ell^\infty([0,1]^{d+1};\R)$ to a tight centered Gaussian process $\B_C^\circ$ with covariance function $\cov\{\B_C^\circ(s,\vec u), \B_C^\circ(t,\vec v)\} = (s \wedge t) \kappa_C(\vec u,\vec v)$, $(s,\vec u), (t, \vec v) \in [0,1]^{d+1}$, where 
\begin{equation}
\label{eq:kappaC}
\kappa_C(\vec u,\vec v) = \cov\{\B_C^\circ(1,\vec u), \B_C^\circ(1,\vec v)\} = \sum_{k \in \Z} \cov\{ \1(\vec U_0 \leq \vec u), \1(\vec U_k \leq \vec v) \}.
\end{equation}
As a consequence of the continuous mapping theorem, $\B_n \leadsto \B_C$ in $\ell^\infty(\Delta \times [0, 1]^d;\R)$, where
\begin{equation}
\label{eq:BC}
  \B_C(s, t,\vec u) = \B_C^\circ(t,\vec u) - \B_C^\circ(s,\vec u), \qquad (s, t,\vec u) \in \Delta \times [0,1]^d.
\end{equation}

The following proposition, proved in Section~\ref{proof:prop:weak_SnA_sm} of the supplementary material, is the key step for obtaining the limiting null distribution of the vector-valued process $\T_n$ defined in~\eqref{eq:Tn}.

\begin{prop}
\label{prop:weak_SnA_sm}
Assume that $\vec X_1,\dots,\vec X_n$ is drawn from a strictly stationary sequence $(\vec X_i)_{i \in \Z}$ with continuous margins and whose strong mixing coefficients satisfy $\alpha_r = O(r^{-a})$, $a > 1$. Then, for any $A \subseteq D$, $|A| \geq 1$,
\begin{equation}
\label{eq:asymequivSnA}
\sup_{(s,t) \in \Delta} | \S_{n,A}(s,t) - \psi_{C,A} \{ \B_n(s,t,\cdot) \} | = o_\Pr(1),
\end{equation}
where $\psi_{C,A}$ is a linear map from $\ell^\infty([0,1]^d;\R)$ to $\R$ defined by
\begin{equation}
\label{eq:psiCA}
\psi_{C,A}(g) = \phi_A(g) - \int_{[0,1]^d} \sum_{j \in A} \prod_{l \in A \setminus \{j\}} (1-v_l) g(\vec v^{\{j\}}) \dd C(\vec v), \qquad g \in \ell^\infty([0,1]^d;\R),
\end{equation}
with $\phi_A$ given in~\eqref{eq:phiA}.
\end{prop}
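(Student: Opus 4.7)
The plan is to derive an explicit formula for $\phi_A(C_{k:l})$ in terms of scaled ranks, linearize it around the unobservable uniformized sample, and match the leading expression with $\psi_{C,A}\{\B_n(s,t,\cdot)\}$. Because each $F_j$ is continuous, the ranks of the $X_{ij}$ within a subsample coincide almost surely with those of $U_{ij}=F_j(X_{ij})$, so one may work throughout with the unobservable sample $\vec U_1,\dots,\vec U_n$. Substituting the empirical copula into~\eqref{eq:phiA} and integrating out the coordinates $u_j$ with $j \notin A$ yields
\begin{equation*}
\phi_A(C_{k:l}) = \frac{1}{l-k+1}\sum_{i=k}^l\prod_{j \in A}(1-\hat U_{ij}^{k:l}), \qquad \phi_A(C)=\Ex\Bigl[\prod_{j \in A}(1-U_{1j})\Bigr].
\end{equation*}
Writing $G_{k:l}$ for the (unobservable) empirical c.d.f.\ of $\vec U_k,\dots,\vec U_l$, Fubini gives the exact identity $\sqrt n\,\lambda_n(s,t)\{\phi_A(G_{\ip{ns}+1:\ip{nt}})-\phi_A(C)\} = \phi_A\{\B_n(s,t,\cdot)\}$, reducing the task to handling the rank-correction $\sqrt n\,\lambda_n(s,t)\{\phi_A(C_{\ip{ns}+1:\ip{nt}})-\phi_A(G_{\ip{ns}+1:\ip{nt}})\}$.

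For this, I would expand
\begin{equation*}
\prod_{j \in A}(1-\hat U_{ij}^{k:l})-\prod_{j \in A}(1-U_{ij}) = \sum_{j \in A}(U_{ij}-\hat U_{ij}^{k:l})\prod_{m \in A\setminus\{j\}}(1-U_{im}) + R_i^{k:l},
\end{equation*}
the remainder $R_i^{k:l}$ collecting terms with at least two factors $U_{ij}-\hat U_{ij}^{k:l}$. Since $\hat U_{ij}^{k:l}$ is the $j$-th marginal empirical c.d.f.\ of the subsample evaluated at $U_{ij}$, one has, for $k=\ip{ns}+1$ and $l=\ip{nt}$, the key identity
\begin{equation*}
U_{ij}-\hat U_{ij}^{k:l} = -\frac{\sqrt n}{\ip{nt}-\ip{ns}}\,\B_n(s,t,\vec U_i^{\{j\}}).
\end{equation*}
Substituting it into the linear part and multiplying by $\sqrt n\,\lambda_n(s,t)$, the first-order contribution becomes
\begin{equation*}
-\sum_{j \in A}\frac{1}{\ip{nt}-\ip{ns}}\sum_{i=\ip{ns}+1}^{\ip{nt}}\B_n(s,t,\vec U_i^{\{j\}})\prod_{m \in A\setminus\{j\}}(1-U_{im}),
\end{equation*}
an integral against the empirical measure of $\vec U_{\ip{ns}+1},\dots,\vec U_{\ip{nt}}$. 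Replacing this empirical measure by $C$ uniformly in $(s,t)$ yields exactly the second summand of~\eqref{eq:psiCA}.

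The bulk of the technical work lies in controlling all error terms uniformly over $\Delta$. The standard strategy is to first prove~\eqref{eq:asymequivSnA} on $\Delta_\varepsilon=\{(s,t)\in\Delta : t-s\ge\varepsilon\}$ for fixed $\varepsilon>0$, and then to show separately that both $\sup_{(s,t)\in\Delta\setminus\Delta_\varepsilon}|\S_{n,A}(s,t)|$ and the analogous supremum of $\psi_{C,A}\{\B_n(s,t,\cdot)\}$ can be made arbitrarily small by choosing $\varepsilon$ small, via stochastic-equicontinuity bounds derived from the weak convergence of $\B_n$ supplied by~\cite{Buc14}. On $\Delta_\varepsilon$, the second-order remainder $\sqrt n\,\lambda_n(s,t)\,(\ip{nt}-\ip{ns})^{-1}\sum_iR_i^{k:l}$ is bounded by a constant multiple of $\sqrt n\,\lambda_n(s,t)\,\max_j\sup_u(G_{n,j}^{k:l}(u)-u)^2 = o_\Pr(1)$ via the marginal sequential empirical process under $\alpha_r=O(r^{-a})$, $a>1$, and is therefore negligible. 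The main obstacle is the step replacing the empirical measure of $\vec U_{\ip{ns}+1},\dots,\vec U_{\ip{nt}}$ by $C$: since the integrand depends on the data through $\B_n(s,t,\cdot)$, classical Glivenko--Cantelli does not apply directly, and the uniformity over $(s,t)\in\Delta_\varepsilon$ must be established by combining the uniform stochastic boundedness of $\B_n$ with the bounded-variation structure of $\vec v\mapsto\prod_{m\in A\setminus\{j\}}(1-v_m)$.
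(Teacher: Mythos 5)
Your proposal reproduces the paper's proof architecture almost step for step: the explicit rank formula for $\phi_A(C_{k:l})$ via $\hat U_{ij}^{k:l}=H_{k:l,j}(U_{ij})$, the Fubini identity turning the oracle term into $\phi_A\{\B_n(s,t,\cdot)\}$, the split of $\Delta$ into a thin band $\{t-s<\varepsilon\}$ (where the difference is pointwise bounded by a multiple of $\sup_{\vec u}|\B_n(s,t,\vec u)|$ and killed by asymptotic equicontinuity of $\B_n$) and $\Delta_\varepsilon$, and the reduction of the main term to swapping the empirical measure $H_{\ip{ns}+1:\ip{nt}}$ for $C$ in $\int\prod_{m\in A\setminus\{j\}}(1-v_m)\,\B_n(s,t,\vec v^{\{j\}})\,\dd(\cdot)$. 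Your exact multilinear expansion with a second-order remainder is a clean substitute for the paper's mean-value-theorem step, and your bound on that remainder over $\Delta_\varepsilon$ is correct, since $\sup_u|H_{\ip{ns}+1:\ip{nt},j}(u)-u|=O_\Pr(\sqrt n/(\ip{nt}-\ip{ns}))$ there.

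The gap is in the mechanism you propose for the step you yourself flag as the main obstacle. Combining the \emph{uniform stochastic boundedness} of $\B_n$ with the bounded variation of $\vec v\mapsto\prod_{m\in A\setminus\{j\}}(1-v_m)$ does not close it: the data-dependence of the integrand sits entirely in the $v_j$-coordinate, through $v_j\mapsto\B_n(s,t,\vec v^{\{j\}})$, whose total variation is of order $\sqrt n\,\lambda_n(s,t)$; an integration-by-parts (Koksma--Hlawka type) bound therefore yields only $O(\sqrt n)\cdot\|H_{\ip{ns}+1:\ip{nt}}-C\|_\infty=O(\sqrt n)\cdot O_\Pr(n^{-1/2})=O_\Pr(1)$, not $o_\Pr(1)$. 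What is actually needed is the asymptotic equicontinuity of $\B_n$ \emph{in the $\vec u$-argument} (tightness, i.e.\ continuity of the paths of the limit $\B_C$): this allows one to approximate $v_j\mapsto\B_n(s,t,\vec v^{\{j\}})$ uniformly by a function that is piecewise constant on a fixed finite grid and then invoke the uniform convergence of $H_{\ip{ns}+1:\ip{nt}}$ to $C$. Equivalently, one establishes the joint weak convergence of the integral functional with $\B_n$ and applies the continuous mapping theorem; this is exactly the content of the paper's Lemma~\ref{lem:prop1}, which also supplies the uniformity in $(s,t)$ over $\Delta_\varepsilon$ (a point your sketch needs but does not address) through a separate equicontinuity argument in $(s,t)$, with the finite-dimensional convergence coming from Lemma~3 in \citet{HolKojQue13}. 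Once this substitution is made, your argument is complete.
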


From the work of \cite{Mok88}, we know that the strong mixing conditions stated in the previous proposition (as well as those stated in the forthcoming propositions and corollaries) are for instance satisfied (with much to spare) when $\vec X_1,\dots,\vec X_n$ is drawn from a stationary vector ARMA process with absolutely continuous innovations. A similar conclusion holds for a large class of GARCH processes \cite[see][Section~5, and the references therein]{Lin09}.

The next result, proved in Section~\ref{proof:cor:weak_TnA} of the supplementary material, is a consequence of the previous proposition and establishes the limiting null distribution of the generic statistic $S_{n,f}$ defined in~\eqref{eq:Snf} under strong mixing.

\begin{cor}
\label{cor:weak_TnA}
Under the conditions of Proposition~\ref{prop:weak_SnA_sm}, 
\begin{equation}
\label{eq:wcTn}
\T_n \leadsto s \mapsto \T_C(s) = \left( \T_{C,\{1\}}(s), \T_{C,\{2\}}(s), \dots, \T_{C,D}(s) \right)
\end{equation}
in $\ell^\infty([0,1];\R^{2^d-1})$, where
\begin{equation}
\label{eq:TC}
\T_C(s) = \psi_C\{ \B_C(0,s,\cdot) - s \B_C(0,1,\cdot) \}, \qquad s \in [0,1], 
\end{equation}
with $\B_C$ defined in~\eqref{eq:BC} and $\psi_C$ a map from $\ell^\infty([0,1]^d;\R)$ to $\R^{2^d - 1}$ defined by
\begin{equation}
\label{eq:psiC}
\psi_C(g) = \left( \psi_{C,\{1\}}(g), \psi_{C,\{2\}}(g), \dots, \psi_{C,D}(g) \right), \qquad g \in \ell^\infty([0,1]^d;\R).
\end{equation}
As a consequence, for any $f:\R^{2^d - 1} \to \R$ continuous, 
$$
S_{n,f} = \sup_{s \in [0,1]} | f \{ \T_n(s) \} | \leadsto S_{C,f} = \sup_{s \in [0,1]} | f \{ \T_C(s) \} |,
$$
and, if $f$ is additionally linear and $\sigma_{C,f}^2 = \var[ f \circ \psi_C\{\B_C(0,1,\cdot)\} ] > 0$, the weak limit of $\sigma_{C,f}^{-1} S_{n,f}$ is equal in distribution to $\sup_{s \in [0,1]} |\U(s)|$, where $\U$ is a standard Brownian bridge on $[0,1]$.
\end{cor}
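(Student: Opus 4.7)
The overall plan is to reduce the vector-valued process $\T_n$ to a continuous linear image of the sequential empirical process $\B_n$ and then combine the weak convergence $\B_n \leadsto \B_C$ with the continuous mapping theorem and Slutsky's lemma. Starting from~\eqref{eq:TnAsH0} and the asymptotic equivalence~\eqref{eq:asymequivSnA} of Proposition~\ref{prop:weak_SnA_sm}, and using that $0 \leq \lambda_n \leq 1$, I would first obtain uniformly in $s \in [0,1]$
$$
\T_{n,A}(s) = \lambda_n(s,1)\,\psi_{C,A}\{\B_n(0,s,\cdot)\} - \lambda_n(0,s)\,\psi_{C,A}\{\B_n(s,1,\cdot)\} + o_\Pr(1).
$$
Linearity of $\psi_{C,A}$ together with the two telescoping identities $\lambda_n(0,s)+\lambda_n(s,1)=1$ and $\B_n(0,s,\cdot)+\B_n(s,1,\cdot)=\B_n(0,1,\cdot)$ then collapses this to $\T_{n,A}(s) = \psi_{C,A}\{\B_n(0,s,\cdot) - \lambda_n(0,s)\B_n(0,1,\cdot)\} + o_\Pr(1)$. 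Since $\sup_s |\lambda_n(0,s)-s|\leq 1/n$, since $\B_n(0,1,\cdot)=O_\Pr(1)$ in $\ell^\infty([0,1]^d;\R)$, and since each $\psi_{C,A}$ is a bounded linear form on $\ell^\infty([0,1]^d;\R)$ (the integrals in~\eqref{eq:psiCA} being of finite total variation), one may replace $\lambda_n(0,s)$ by $s$ modulo another $o_\Pr(1)$. Stacking components yields, uniformly in $s$, $\T_n = \Phi(\B_n) + o_\Pr(1)$ with $\Phi(g)(s) := \psi_C\{g(0,s,\cdot) - s\,g(0,1,\cdot)\}$ a bounded linear, hence continuous, map from $\ell^\infty(\Delta\times[0,1]^d;\R)$ to $\ell^\infty([0,1];\R^{2^d-1})$.

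The continuous mapping theorem applied to $\Phi$ combined with Slutsky's lemma then produces $\T_n \leadsto \Phi(\B_C) = \T_C$, which is~\eqref{eq:wcTn}. For any continuous $f$, the functional $h \mapsto \sup_{s\in[0,1]}|f\{h(s)\}|$ is continuous on $\ell^\infty([0,1];\R^{2^d-1})$ (since $f$ is uniformly continuous on compacta and $h([0,1])$ is relatively compact in $\R^{2^d-1}$), so a further continuous mapping argument delivers $S_{n,f} \leadsto S_{C,f}$. When $f$ is additionally linear, the composition $L := f\circ\psi_C$ is a bounded linear form on $\ell^\infty([0,1]^d;\R)$, so, using $\B_C(0,s,\cdot) = \B_C^\circ(s,\cdot)$, the process $Z(s) := f\{\T_C(s)\} = L\{\B_C^\circ(s,\cdot) - s\,\B_C^\circ(1,\cdot)\}$ is centered Gaussian on $[0,1]$. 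From $\cov\{\B_C^\circ(s,\vec u),\B_C^\circ(t,\vec v)\} = (s\wedge t)\kappa_C(\vec u,\vec v)$ and the bilinearity of $L$, a direct computation gives
$$
\cov\{Z(s), Z(t)\} = \{(s\wedge t) - st\}\,\sigma_{C,f}^2,
$$
so $\sigma_{C,f}^{-1}Z$ has the covariance of a standard Brownian bridge $\U$ on $[0,1]$ and, being centered Gaussian with matching covariance, is equal to $\U$ in distribution. The final assertion follows from $\sigma_{C,f}^{-1}S_{n,f} \leadsto \sigma_{C,f}^{-1}\sup_s|Z(s)|$ together with the equality in distribution $\sigma_{C,f}^{-1}\sup_s|Z(s)| = \sup_s|\U(s)|$.

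The main obstacle is the bookkeeping in the linearization step: one must verify that the uniform $o_\Pr(1)$ remainder from~\eqref{eq:asymequivSnA} indeed survives multiplication by $\lambda_n$ throughout, deploy linearity of $\psi_{C,A}$ in combination with the two telescoping identities above to reach the clean form $\psi_C\{\B_n(0,s,\cdot) - s\B_n(0,1,\cdot)\}$, and justify the replacement of $\lambda_n(0,s)$ by $s$ under the bounded linear form $\psi_C$. The remaining steps are standard continuous-mapping arguments and a routine Gaussian covariance calculation.
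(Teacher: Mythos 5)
Your proposal is correct and follows essentially the same route as the paper's proof: reduce $\T_{n,A}$ via~\eqref{eq:TnAsH0} and Proposition~\ref{prop:weak_SnA_sm} to $\psi_{C,A}\{\B_n(0,s,\cdot) - \lambda_n(0,s)\B_n(0,1,\cdot)\}$ plus a uniform $o_\Pr(1)$, conclude by the continuous mapping theorem, and identify the limit of $\sigma_{C,f}^{-1}f\{\T_C(\cdot)\}$ with a Brownian bridge by computing the covariance through the bilinearity of $f\circ\psi_C$ and a Fubini-type interchange. You are merely more explicit than the paper about the telescoping identities and the replacement of $\lambda_n(0,s)$ by $s$, but the substance is identical.
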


\section{Computation of approximate p-values}
\label{approx_pval}

Corollary~\ref{cor:weak_TnA} suggests two related ways to compute p-values for the generic test statistic $S_{n,f}$ defined in~\eqref{eq:Snf}. The first approach, based on resampling, consists of exploiting the fact that, under $H_0$, $\T_n$ defined in~\eqref{eq:Tn} is asymptotically equivalent to a continuous functional of the sequential empirical process $\B_n$ defined in~\eqref{eq:seqep} and can be applied as soon as $f:\R^{2^d - 1} \to \R$ is continuous. The second approach, restricted to the situation when $f$ is linear, is motivated by the last claim of Corollary~\ref{cor:weak_TnA}. It consists of estimating $\sigma_{C,f}^2$ and thus the asymptotic null distribution of~$S_{n,f}$.

\subsection{Approximate p-values by bootstrapping}
\label{sec:resampling}

The first approach that we consider consists of bootstrapping the vector-valued empirical process $\T_n$ defined in~\eqref{eq:Tn} using a bootstrap for the sequential empirical process $\B_n$. This way of proceeding actually allows us to consider not only linear but also {\em continuous} functions~$f$ in~\eqref{eq:Snf}. More specifically, we consider a {\em multiplier bootstrap} for $\B_n$ in the spirit of \citet[Chapter 2.9]{vanWel96} when observations are serially independent, or \citet[Section 3.3]{Buh93} when they are serially dependent. In the latter case, we rely on the recent work of \cite{BucKoj14}.

The notion of {\em multiplier sequence} is central to this resampling technique. We say that a sequence of random variables $(\xi_{i,n})_{i \in \Z}$ is an \emph{i.i.d.\ multiplier sequence} if:
\begin{enumerate}[({M}0)]
\item 
$(\xi_{i,n})_{i \in \Z}$ is i.i.d., independent of $\vec{X}_1, \ldots, \vec{X}_n$, with distribution not changing with~$n$, having mean 0, variance 1, and being such that $\int_0^\infty \{ \Pr(|\xi_{0,n}| > x) \}^{1/2} \dd x < \infty$.
\end{enumerate}
We say that a sequence of random variables $(\xi_{i,n})_{i \in \Z}$ is a \emph{dependent multiplier sequence} if:
\begin{enumerate}[({M}1)]
\item 
The sequence $(\xi_{i,n})_{i \in \Z}$ is strictly stationary with $\Ex(\xi_{0,n}) = 0$, $\Ex(\xi_{0,n}^2) = 1$ and $\sup_{n \geq 1} \Ex(|\xi_{0,n}|^\nu) < \infty$ for all $\nu \geq 1$, and is independent of the available sample $\vec X_1,\dots,\vec X_n$.
\item 
There exists a sequence $\ell_n \to \infty$ of strictly positive constants such that $\ell_n = o(n)$ and the sequence $(\xi_{i,n})_{i \in \Z}$ is $\ell_n$-dependent, i.e., $\xi_{i,n}$ is independent of $\xi_{i+h,n}$ for all $h > \ell_n$ and $i \in \N$. 
\item 
There exists a function $\varphi:\R \to [0,1]$, symmetric around 0, continuous at $0$, satisfying $\varphi(0)=1$ and $\varphi(x)=0$ for all $|x| > 1$ such that $\Ex(\xi_{0,n} \xi_{h,n}) = \varphi(h/\ell_n)$ for all $h \in \Z$.
\end{enumerate}
The choice of the function $\varphi$ and an approach to generate dependent multiplier sequences is briefly discussed in Section~\ref{sims}. More details can be found in \citet[Section~5.2]{BucKoj14}.

Let $M$ be a large integer and let $(\xi_{i,n}^{(1)})_{i \in \Z},\dots,(\xi_{i,n}^{(M)})_{i \in \Z}$ be $M$ independent copies of the same multiplier sequence. Then, following \cite{BucKoj14} and \cite{BucKojRohSeg14}, for any $m \in \{1,\dots,M\}$ and $(s,t,\vec u) \in \Delta \times [0,1]^d$, let
\begin{align}
\nonumber
\hat{\B}_n^{(m)}(s, t,\vec{u}) &= \frac{1}{\sqrt{n}} \sum_{i=\ip{ns}+1}^{\ip{nt}} \xi_{i,n}^{(m)} \{ \1 ( \pobs{U}_i^{1:n} \leq \vec{u} ) - C_{1:n}(\vec{u}) \}, \\
\label{eq:checkBnm}
\check{\B}_n^{(m)}(s, t,\vec{u}) &= \frac{1}{\sqrt{n}} \sum_{i=\ip{ns}+1}^{\ip{nt}} ( \xi_{i,n}^{(m)} - \bar  \xi_{\ip{ns}+1:\ip{nt}}^{(m)} ) \1 ( \pobs{U}_i^{\ip{ns}+1:\ip{nt}} \leq \vec{u} ),
\end{align}
where $\bar \xi_{k:l}^{(m)}$ is the arithmetic mean of $\xi_{i,n}^{(m)}$ for $i \in \{ k, \ldots, l \}$.

The following proposition is a consequence of Theorem~1 in \cite{HolKojQue13}, Theorem~2.1 and the proof of Proposition 4.2 in \cite{BucKoj14}, as well as the proof of Proposition~4.3 in \cite{BucKojRohSeg14}. It suggests interpreting the multiplier replicates $\hat \B_n^{(1)},\dots,\hat \B_n^{(M)}$ (resp.\ $\check \B_n^{(1)},\dots,\check \B_n^{(M)}$) as ``almost'' independent copies of $\B_n$ as $n$ increases.

\begin{prop}
\label{prop:multBn}
Assume that either
\begin{enumerate}[\bf (i)]
\item the random vectors $\vec X_1,\dots,\vec X_n$ are i.i.d.\ with continuous margins and the sequences $(\xi_{i,n}^{(1)})_{i \in \Z},\dots,(\xi_{i,n}^{(M)})_{i \in \Z}$ are independent copies of a multiplier sequence satisfying~(M0),
\item or the random vectors $\vec X_1,\dots,\vec X_n$ are drawn from a strictly stationary sequence $(\vec X_i)_{i \in \Z}$ with continuous margins whose strong mixing coefficients satisfy $\alpha_r = O(r^{-a})$ for some $a > 3 + 3d/2$, and $(\xi_{i,n}^{(1)})_{i \in \Z}$, \dots, $(\xi_{i,n}^{(M)})_{i \in \Z}$ are independent copies of a dependent multiplier sequence satisfying~(M1)--(M3) with $\ell_n = O(n^{1/2 - \eps})$ for some $0 < \eps < 1/2$. 
\end{enumerate}
Then,
\begin{align*}
\left(\B_n,\hat \B_n^{(1)},\dots,\hat \B_n^{(M)} \right) &\leadsto \left(\B_C,\B_C^{(1)},\dots,\B_C^{(M)} \right), \\
\left(\B_n,\check \B_n^{(1)},\dots,\check \B_n^{(M)} \right) &\leadsto \left(\B_C,\B_C^{(1)},\dots,\B_C^{(M)} \right)
\end{align*}
in $\{\ell^\infty(\Delta \times [0,1]^d;\R)\}^{M+1}$, where $\B_C$ is given in~\eqref{eq:BC} and $\B_C^{(1)},\dots,\B_C^{(M)}$ are independent copies of $\B_C$. 
\end{prop}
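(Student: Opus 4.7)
The goal is to establish joint weak convergence of $\B_n$ together with its $M$ multiplier replicates toward $\B_C$ and $M$ independent copies of it. My plan is to prove the statement for a single multiplier sequence ($M=1$) in each of the two bootstrap versions and under each hypothesis (i)--(ii), and then lift to arbitrary $M$ using that the multiplier copies $(\xi_{i,n}^{(1)})_{i\in\Z},\dots,(\xi_{i,n}^{(M)})_{i\in\Z}$ are mutually independent and independent of the data. Under these independence properties, marginal tightness yields joint tightness and the finite-dimensional marginals factorize in the limit, which simultaneously gives joint convergence and the mutual independence of $\B_C^{(1)},\dots,\B_C^{(M)}$.

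First I handle $\hat\B_n^{(m)}$. In case \textbf{(i)}, this is exactly the multiplier sequential empirical process built from the full-sample pseudo-observations treated in Theorem~1 of \cite{HolKojQue13}, so that result delivers the joint weak convergence $(\B_n,\hat\B_n^{(m)})\leadsto(\B_C,\B_C^{(m)})$ directly. In case \textbf{(ii)}, I would first establish convergence for the unobservable version in which $\vec U_i$ replaces $\pobs{U}_i^{1:n}$ and $C$ replaces $C_{1:n}$: this is the content of Theorem~2.1 in \cite{BucKoj14}, whose hypotheses are covered by the assumption $\alpha_r=O(r^{-a})$ with $a>3+3d/2$ together with $\ell_n=O(n^{1/2-\eps})$. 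I would then transfer to the rank-based version via the uniform asymptotic-equivalence argument used in the proof of Proposition~4.2 of \cite{BucKoj14}, whose core ingredients are $\sup_{\vec u}|\sqrt{n}\{C_{1:n}(\vec u)-C(\vec u)\}|=O_\Pr(1)$ and standard empirical-process estimates applied to the multiplier-weighted sums.

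Next I address $\check\B_n^{(m)}$. My plan is to show that
\[
\sup_{(s,t,\vec u)\in\Delta\times[0,1]^d}\bigl|\check\B_n^{(m)}(s,t,\vec u)-\hat\B_n^{(m)}(s,t,\vec u)\bigr|=o_\Pr(1),
\]
so that Slutsky and the convergence of $\hat\B_n^{(m)}$ give the result. Expanding the summands, this difference decomposes into two contributions: (a) one coming from replacing the indicator at $\pobs{U}_i^{1:n}$ by the one at $\pobs{U}_i^{\lfloor ns\rfloor+1:\lfloor nt\rfloor}$, and (b) one coming from replacing the centering $C_{1:n}(\vec u)$ attached to the uncentered multiplier $\xi_{i,n}^{(m)}$ by the centering implicit in the subsample mean $\bar\xi^{(m)}_{\lfloor ns\rfloor+1:\lfloor nt\rfloor}$. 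Both are controlled exactly as in the proof of Proposition~4.3 in \cite{BucKojRohSeg14}: contribution (b) is handled by showing that $\sqrt{n}\,\lambda_n(s,t)\,\bar\xi^{(m)}_{\lfloor ns\rfloor+1:\lfloor nt\rfloor}$ is $O_\Pr(1)$ uniformly in $(s,t)$ and invoking the uniform closeness of subsample empirical copulas to $C$, while contribution (a) is handled through the uniform bound on the difference between full-sample and subsample pseudo-observations.

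The main obstacle I foresee is achieving uniformity over $(s,t)\in\Delta$ in the comparison $\check\B_n^{(m)}-\hat\B_n^{(m)}$, because near the diagonal $s\approx t$ the subsample ranks $\pobs{U}_i^{\lfloor ns\rfloor+1:\lfloor nt\rfloor}$ can deviate noticeably from the full-sample ranks $\pobs{U}_i^{1:n}$. This is compensated, however, by the fact that the effective weight $\sqrt{n}\,\lambda_n(s,t)$ is also small in that regime, and the precise bookkeeping already carried out in \cite{BucKojRohSeg14} delivers the required $o_\Pr(1)$ bound. The strong mixing condition $a>3+3d/2$ is what ultimately enables all these uniform controls, for both the unobservable bootstrap process and the rank-based corrections.
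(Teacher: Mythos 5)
Your proposal is correct and follows essentially the same route as the paper: the paper offers no standalone proof of this proposition but derives it precisely as a consequence of Theorem~1 in \cite{HolKojQue13} for case~(i), Theorem~2.1 and the proof of Proposition~4.2 in \cite{BucKoj14} for case~(ii), and the proof of Proposition~4.3 in \cite{BucKojRohSeg14} for the passage to $\check \B_n^{(m)}$, which is exactly the decomposition you use. Your handling of joint convergence and independence of the limits via the mutual independence of the multiplier sequences, and your treatment of the near-diagonal regime through asymptotic equicontinuity, match the intended argument.
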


Starting from the quantities defined above, we shall now define appropriate multiplier replicates under $H_0$ of $\T_n$ defined in~\eqref{eq:Tn}. From~\eqref{eq:TnAsH0}, we see that to do so, we first need to define multiplier replicates of the processes $\S_{n,A}$, $A \subseteq D$, $|A| \geq 1$, defined in~\eqref{eq:SnA}. From~\eqref{eq:asymequivSnA} and Proposition~\ref{prop:multBn}, natural candidates would be the processes $(s,t) \mapsto \psi_{C,A}\{\hat \B_n^{(m)}(s,t,\cdot)\}$ or the processes $(s,t) \mapsto \psi_{C,A}\{\check \B_n^{(m)}(s,t,\cdot)\}$, $m \in \{1,\dots,M\}$, where the map $\psi_{C,A}$ is defined in~\eqref{eq:psiCA}. These however still depend on the unknown copula~$C$. The latter could be estimated either by $C_{1:n}$ or by $C_{\ip{ns}+1:\ip{nt}}$, which led us to consider the following two computable versions instead:
$$
\hat{\S}_{n,A}^{(m)}(s,t) = \psi_{C_{1:n},A}\{\hat \B_n^{(m)}(s,t,\cdot)\}, \quad \check{\S}_{n,A}^{(m)}(s,t) = \psi_{C_{\ip{ns}+1:\ip{nt}},A}\{\check \B_n^{(m)}(s,t,\cdot)\}, 
$$
for $(s,t) \in \Delta$. The processes $\check{\S}_{n,A}^{(m)}$ were found to lead to better behaved tests than the $\hat{\S}_{n,A}^{(m)}$ in our Monte Carlo experiments, which is why, from now on, we focus solely on the former. It is easy to verify that the $\check{\S}_{n,A}^{(m)}$ can be rewritten as
$$
\check{\S}_{n,A}^{(m)}(s,t) = \frac{1}{\sqrt{n}} \sum_{i=\ip{ns}+1}^{\ip{nt}}( \xi_{i,n}^{(m)} - \bar  \xi_{\ip{ns}+1:\ip{nt}}^{(m)} ) \I_{C_{\ip{ns}+1:\ip{nt}},A} (\pobs{U}_i^{\ip{ns}+1:\ip{nt}}),
$$
where, for any $\vec u \in [0,1]^d$,
\begin{align}
\nonumber
\I_{C,A}(\vec u) &= \psi_{C,A}\{\1(\vec u \leq \cdot)\} \\ 
\label{eq:ICA}
&= \prod_{l \in A}(1-u_l) - \int_{[0,1]^d} \sum_{j \in A} \prod_{l \in A \setminus \{j\}} (1-v_l) \1(u_j\leq v_j) \dd C(\vec v).
\end{align}
Next, by analogy with~\eqref{eq:TnAsH0}, for any $m \in \{1,\dots,M\}$, $A \subseteq D$, $|A| \geq 1$, let
$$
\check \T_{n,A}^{(m)}(s) = \lambda_n(s,1) \check{\S}_{n,A}^{(m)}(0,s) -  \lambda_n(0,s) \check{\S}_{n,A}^{(m)}(s,1), \qquad s \in [0,1],
$$
and let $\check \T_n^{(m)}$ be the corresponding version of $\T_n$ in~\eqref{eq:Tn}. Finally, for some continuous function $f:\R^{2^d - 1} \to \R$, let $\check S_{n,f}^{(m)} = \sup_{s \in [0,1]} | f\{\check \T_n^{(m)}(s)\} |$ by analogy with~\eqref{eq:Snf}. Interpreting the $\check S_{n,f}^{(m)}$ as multiplier replicates of $S_{n,f}$ under $H_0$, it is natural to compute an approximate p-value for the test as 
\begin{equation}
\label{eq:pval}
\frac{1}{M} \sum_{m=1}^M \1 \left( \check S_{n,f}^{(m)} \geq S_{n,f} \right).
\end{equation}
The null hypothesis is rejected if the estimated p-value is smaller than the desired significance level. 

The following result, proved in Section~\ref{proof:prop:multTn} of the supplementary material, can be combined with Proposition F.1 in \cite{BucKoj14} to show that a test based on $S_{n,f}$ whose p-value is computed as in~\eqref{eq:pval} will hold its level asymptotically as $n \to \infty$ followed by $M \to \infty$.

\begin{prop}
\label{prop:multTn}
Under the conditions of Proposition~\ref{prop:multBn}, for any $A \subseteq D$, $|A| \geq 1$, 
\begin{align*}
\left(\S_{n,A}, \check \S_{n,A}^{(1)}, \dots, \check \S_{n,A}^{(M)} \right) &\leadsto \left( \S_{C,A}, \S_{C,A}^{(1)}, \dots, \S_{C,A}^{(M)} \right)
\end{align*}
in $\{\ell^\infty(\Delta;\R)\}^{M+1}$, where, for any $(s,t) \in \Delta$, $\S_{C,A}(s,t) = \psi_{C,A}\{ \B_C(s,t,\cdot) \}$ and $\S_{C,A}^{(1)},\dots,\S_{C,A}^{(M)}$ are independent copies of $S_{C,A}$. As a consequence, 
\begin{align*}
\left(\T_n, \check \T_n^{(1)}, \dots, \check \T_n^{(M)} \right) &\leadsto \left(\T_C, \T_C^{(1)}, \dots, \T_C^{(M)} \right)
\end{align*}
in $\{\ell^\infty([0,1];\R^{2^d-1})\}^{M+1}$, where $\T_C$ is given in~\eqref{eq:TC} and $\T_C^{(1)},\dots,\T_C^{(M)}$ are independent copies of $T_C$, and, for any continuous function $f:\R^{2^d - 1} \to \R$,
\begin{align*}
\left(S_{n,f}, \check S_{n,f}^{(1)}, \dots, \check S_{n,f}^{(M)} \right) &\leadsto \left(S_{C,f}, S_{C,f}^{(1)}, \dots, S_{C,f}^{(M)} \right)
\end{align*}
in $\R^{M+1}$, where $S_{C,f} = \sup_{s \in [0,1]} | f\{\T_C(s)\}|$ and $S_{C,f}^{(1)}, \dots, S_{C,f}^{(M)}$ are independent copies of $S_{C,f}$.
\end{prop}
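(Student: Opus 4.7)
The plan is to reduce all three joint weak convergences to the joint convergence of $(\B_n, \check \B_n^{(1)}, \dots, \check \B_n^{(M)})$ furnished by Proposition~\ref{prop:multBn}, combined with Proposition~\ref{prop:weak_SnA_sm} and repeated applications of the continuous mapping theorem. The key missing ingredient is a multiplier analogue of~\eqref{eq:asymequivSnA}: for every $A \subseteq D$ with $|A|\geq 1$ and every $m \in \{1,\dots,M\}$,
\begin{equation}
\label{eq:approx_check}
\sup_{(s,t) \in \Delta} \left| \check \S_{n,A}^{(m)}(s,t) - \psi_{C,A}\{ \check \B_n^{(m)}(s,t,\cdot) \} \right| = o_\Pr(1).
\end{equation}
Granted~\eqref{eq:approx_check}, the first claim follows quickly: since $\psi_{C,A}$ is a bounded linear map from $\ell^\infty([0,1]^d;\R)$ to $\R$, the continuous mapping theorem applied pointwise in $(s,t)$ to the joint limit of Proposition~\ref{prop:multBn} yields joint convergence of $(\psi_{C,A}\{\B_n\}, \psi_{C,A}\{\check \B_n^{(1)}\}, \dots, \psi_{C,A}\{\check \B_n^{(M)}\})$ in $\{\ell^\infty(\Delta;\R)\}^{M+1}$ to the claimed limit, and a Slutsky-type assembly using~\eqref{eq:asymequivSnA} on the $\S_{n,A}$-coordinate and~\eqref{eq:approx_check} on each $\check \S_{n,A}^{(m)}$-coordinate completes it.

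For~\eqref{eq:approx_check}, I would exploit the linearity of $\psi_{\cdot,A}$ in its measure argument to rewrite the difference as
$$
\int_{[0,1]^d} \sum_{j \in A} \prod_{l \in A \setminus \{j\}} (1-v_l)\, \check \B_n^{(m)}(s,t,\vec v^{\{j\}})\, \dd(C - C_{\ip{ns}+1:\ip{nt}})(\vec v),
$$
and bound it by splitting $\Delta$ along a cutoff $\lambda_n(s,t) \geq \delta$ versus $\lambda_n(s,t) < \delta$. On $\{\lambda_n(s,t) \geq \delta\}$, a strong-mixing Glivenko--Cantelli argument, uniform in $(s,t)$ in the spirit of the proof of Proposition~4.3 in \cite{BucKojRohSeg14}, makes $C_{\ip{ns}+1:\ip{nt}}$ uniformly close to $C$, while asymptotic tightness of $\check \B_n^{(m)}$ from Proposition~\ref{prop:multBn} controls the integrand, giving an $o_\Pr(1)$ bound. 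On $\{\lambda_n(s,t) < \delta\}$, the estimate $\sup_{\vec u} |\check \B_n^{(m)}(s,t,\vec u)| = O_\Pr(\sqrt{\lambda_n(s,t)})$, inherited from the weak limit $\B_C^{(m)}(s,t,\cdot)$ vanishing at $s=t$, combined with a uniform total-variation control on $C_{\ip{ns}+1:\ip{nt}} - C$, yields an $O_\Pr(\sqrt{\delta})$ contribution that vanishes as $\delta \to 0$. This uniform-in-$(s,t)$ control is the main technical obstacle, precisely because $C_{\ip{ns}+1:\ip{nt}}$ ceases to be a consistent estimator of $C$ whenever $\lambda_n(s,t)$ is small; the argument is in the spirit of those in \cite{BucKoj14} and \cite{BucKojRohSeg14}.

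The remaining two convergences then follow by continuous mapping. By~\eqref{eq:TnAsH0} and its multiplier analogue implicit in the definition of $\check \T_n^{(m)}$, the vector $(\T_n, \check \T_n^{(1)}, \dots, \check \T_n^{(M)})$ is a continuous functional of the collection $(\S_{n,A}, \check \S_{n,A}^{(1)}, \dots, \check \S_{n,A}^{(M)})_{A \subseteq D,\,|A|\geq 1}$ through the uniformly converging weights $\lambda_n(0,s) \to s$ and $\lambda_n(s,1) \to 1-s$, which delivers the second convergence in $\{\ell^\infty([0,1];\R^{2^d-1})\}^{M+1}$. A final continuous-mapping step with the functional $\vec z \mapsto \sup_{s \in [0,1]} |f\{\vec z(s)\}|$, continuous from $\ell^\infty([0,1];\R^{2^d-1})$ to $\R$ for every continuous $f$, yields the third.
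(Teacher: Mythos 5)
Your proposal is correct and follows essentially the same route as the paper's proof: reduce everything to the joint convergence in Proposition~\ref{prop:multBn} via the continuous mapping theorem, with the key step being the uniform asymptotic equivalence $\sup_{(s,t)\in\Delta}|\check\S_{n,A}^{(m)}(s,t)-\psi_{C,A}\{\check\B_n^{(m)}(s,t,\cdot)\}|=o_\Pr(1)$, handled exactly as in the paper by splitting $\Delta$ at $t-s=\delta$, using asymptotic uniform equicontinuity of $\check\B_n^{(m)}$ near the diagonal and the uniform consistency of $C_{\ip{ns}+1:\ip{nt}}$ away from it. The only place you stay at sketch level is the integral comparison on $\Delta^\delta$, which the paper carries out by passing through the intermediate empirical measure $H_{\ip{ns}+1:\ip{nt}}$ and a dedicated weak-convergence lemma (Lemma~\ref{lem:prop1}); your outline correctly identifies the needed ingredients and references.
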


The finite-sample behavior of the tests under consideration based on the processes $\check{\S}_{n,A}^{(m)}$ is not however completely satisfactory: the tests appear too liberal for multivariate time series with strong cross sectional dependence. This prompted us to try other asymptotically equivalent versions of the $\check{\S}_{n,A}^{(m)}$. Under an additional assumption on the partial derivatives of the copula, the generic test statistic $S_{n,f}$ defined in~\eqref{eq:Snf} can be written under $H_0$ as a functional of the {\em two-sided sequential empirical copula process} studied in \cite{BucKoj14}, and could therefore be bootstrapped via the multiplier processes defined in~(4.4) of \cite{BucKojRohSeg14}. Without imposing any condition on the partial derivatives of the copula, the latter remark led us to consider, instead of the processes
\begin{multline}
\label{eq:checkSnA}
\check{\S}_{n,A}^{(m)}(s,t) = \phi_A \{ \check \B_n^{(m)}(s,t,\cdot) \} \\ - \int_{[0,1]^d} \sum_{j \in A} \prod_{l \in A \setminus \{j\}} (1-v_l) \check \B_n^{(m)}(s,t,\vec v^{\{j\}}) \dd C_{\ip{ns}+1:\ip{nt}}(\vec v),
\end{multline}
the processes 
\begin{multline}
\label{eq:tildeSnA}
\tilde{\S}_{n,b_n,A}^{(m)}(s,t) = \phi_A \{ \check \B_n^{(m)}(s,t,\cdot) \} \\ - \int_{[0,1]^d} \sum_{j \in A} \prod_{l \in A \setminus \{j\}} (1-v_l) \tilde \B_{n,b_n,j}^{(m)}(s,t,v_j) \dd C_{\ip{ns}+1:\ip{nt}}(\vec v),
\end{multline}
where, for any $j \in D$, $\tilde \B_{n,b_n,j}^{(m)}$ is a linearly smoothed version of $(s,t,u) \mapsto \check \B_n^{(m)}(s,t,\vec u_j)$ with $\vec u_{j}$ the vector of $[0,1]^d$ whose components are all equal to~1 except the $j$th one which is equal to $u$, and $b_n$ a strictly positive sequence of constants converging to~0. Specifically, for $(s,t,v) \in \Delta \times [0,1]$,
$$
\tilde \B_{n,b_n,j}^{(m)}(s,t,v) = \frac{1}{\sqrt{n}} \sum_{i=\ip{ns}+1}^{\ip{nt}} ( \xi_{i,n}^{(m)} - \bar  \xi_{\ip{ns}+1:\ip{nt}}^{(m)} ) \LL_{b_n}( \hat U_{ij}^{\ip{ns}+1:\ip{nt}},v ),
$$
where
$$
\LL_{b_n}( u, v) = \frac{u_+ \wedge v - u_- \wedge v}{u_+ - u-}, \qquad u,v \in [0,1],
$$
with $u_+ = (u + b_n) \wedge 1$ and $u_- = (u - b_n) \vee 0$. It is easy to verify that, for any $u \in [0,1]$, $\LL_{b_n}(u, \cdot)$ differs from $\1(u \leq \cdot)$ only on the interval $(u_-,u_+)$ on which it linearly increases from 0 to 1. 

Notice that~\eqref{eq:tildeSnA} can be rewritten as
$$
\tilde{\S}_{n,b_n,A}^{(m)}(s,t) = \frac{1}{\sqrt{n}} \sum_{i=\ip{ns}+1}^{\ip{nt}}( \xi_{i,n}^{(m)} - \bar  \xi_{\ip{ns}+1:\ip{nt}}^{(m)} ) \I_{b_n,C_{\ip{ns}+1:\ip{nt}},A} (\pobs{U}_i^{\ip{ns}+1:\ip{nt}}),
$$
where, for any $\vec u \in [0,1]^d$,
\begin{equation}
\label{eq:IbnCA}
\I_{b_n,C,A}(\vec u) = \prod_{l \in A}(1-u_l) - \int_{[0,1]^d} \sum_{j \in A} \prod_{l \in A \setminus \{j\}} (1-v_l) \LL_{b_n}(u_j,v_j) \dd C(\vec v).
\end{equation}

For any $m \in \{1,\dots,M\}$, let $\tilde \T_{n,b_n}^{(m)}$ and $\tilde S_{n,b_n,f}^{(m)}$ be the analogues of $\check \T_n^{(m)}$ and $\check S_{n,f}^{(m)}$, respectively, defined from the processes $\tilde{\S}_{n,b_n,A}^{(m)}$ in~\eqref{eq:tildeSnA}. The following result, proved in Section~\ref{proof:prop:multTn} of the supplementary material, is then the analogue of Proposition~\ref{prop:multTn} above.

\begin{prop}
\label{prop:multbarTn}
If $b_n = o(n^{-1/2})$, Proposition~\ref{prop:multTn} holds with $\check{\S}_{n,A}^{(m)}$ replaced by $\tilde{\S}_{n,b_n,A}^{(m)}$, $\check \T_n^{(m)}$ replaced by $\tilde \T_{n,b_n}^{(m)}$ and $\check S_{n,f}^{(m)}$ replaced by $\tilde S_{n,b_n,f}^{(m)}$.
\end{prop}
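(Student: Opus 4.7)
The plan is to reduce the claim to its analogue in Proposition~\ref{prop:multTn} by establishing that the smoothing-induced error between $\tilde{\S}_{n,b_n,A}^{(m)}$ and $\check{\S}_{n,A}^{(m)}$ is uniformly negligible in probability. Once we show that, for each $A \subseteq D$ with $|A| \ge 1$ and each $m \in \{1, \ldots, M\}$,
$$
\sup_{(s,t) \in \Delta} \left| \tilde{\S}_{n,b_n,A}^{(m)}(s,t) - \check{\S}_{n,A}^{(m)}(s,t) \right| = o_\Pr(1),
$$
the first stated joint weak convergence will follow from Proposition~\ref{prop:multTn} via Slutsky's lemma, and the analogous conclusions for $\tilde \T_{n,b_n}^{(m)}$ and $\tilde S_{n,b_n,f}^{(m)}$ will be immediate consequences of the continuous mapping arguments already used in Proposition~\ref{prop:multTn}.

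Subtracting \eqref{eq:checkSnA} from \eqref{eq:tildeSnA} and bounding $\prod_{l \in A \setminus \{j\}} |1-v_l| \le 1$, the task reduces to controlling, for each $j \in D$, the smoothing error
$$
D_{n,j}^{(m)}(s,t,v) = \frac{1}{\sqrt n} \sum_{i=\ip{ns}+1}^{\ip{nt}} \left( \xi_{i,n}^{(m)} - \bar \xi_{\ip{ns}+1:\ip{nt}}^{(m)} \right) \left\{ \LL_{b_n}(\hat U_{ij}^{\ip{ns}+1:\ip{nt}},v) - \1(\hat U_{ij}^{\ip{ns}+1:\ip{nt}} \le v) \right\}
$$
uniformly in $(s,t) \in \Delta$ and $v \in [0,1]$. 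The key structural fact is that $\LL_{b_n}(u,\cdot) - \1(u \le \cdot)$ is bounded by $1$ in absolute value and vanishes outside the interval $(u_-, u_+)$ of width at most $2 b_n$. Because the maximal ranks $\hat U_{ij}^{\ip{ns}+1:\ip{nt}}$ live on a regular grid of spacing $1/(\ip{nt}-\ip{ns})$, for each fixed $v$ at most $2 n b_n + 2$ summands in the definition of $D_{n,j}^{(m)}$ are nonzero.

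Conditionally on $\vec X_1, \ldots, \vec X_n$, $D_{n,j}^{(m)}(s,t,v)$ is therefore a centered weighted sum of at most $2 n b_n + 2$ multipliers, whose conditional variance is of order $b_n$ in the i.i.d.\ case and of order $b_n \ell_n$ in the dependent case (because of the $\ell_n$-range covariances induced by (M3)). A Bernstein-type exponential inequality, valid under the moment condition in~(M0) or under (M1)--(M3), combined with a union bound over the $O(n^3)$ configurations of $(\ip{ns}, \ip{nt}, v)$ at which the sum changes, will yield
$$
\sup_{(s,t) \in \Delta} \sup_{v \in [0,1]} \left| D_{n,j}^{(m)}(s,t,v) \right| = O_\Pr\!\left( \sqrt{b_n \ell_n \log n} \right) = o_\Pr(1)
$$
under $b_n = o(n^{-1/2})$ and $\ell_n = o(n^{1/2-\eps})$. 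The main obstacle is the joint uniformity in $(s,t)$ and $v$ despite the multipliers not being uniformly bounded: the assumption $b_n = o(n^{-1/2})$ is just barely sufficient to absorb the $\sqrt{\log n}$ factor coming from the discretization and, in the mixing regime, the extra $\sqrt{\ell_n}$ factor coming from the dependence of the multiplier sequence.
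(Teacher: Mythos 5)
Your reduction is the right one and coincides with the paper's: everything hinges on showing that $\sup_{(s,t)\in\Delta}|\tilde{\S}_{n,b_n,A}^{(m)}(s,t)-\check{\S}_{n,A}^{(m)}(s,t)|=o_\Pr(1)$, which by the bound $\prod_{l\in A\setminus\{j\}}|1-v_l|\leq 1$ comes down to controlling the smoothing error $\sup_{(s,t,u)}|\tilde\B_{n,b_n,j}^{(m)}(s,t,u)-\check\B_n^{(m)}(s,t,\vec u_j)|$; Proposition~\ref{prop:multTn} and the continuous mapping theorem then finish the job. The route you take to control that error, however, has two genuine gaps. First, the Bernstein-type exponential inequality you invoke is not available under the stated hypotheses on the multipliers: condition~(M0) only requires $\int_0^\infty\{\Pr(|\xi_{0,n}|>x)\}^{1/2}\,\dd x<\infty$, which is barely stronger than a finite second moment, and (M1) gives all polynomial moments but not sub-exponential tails, so in neither case do you get the exponential concentration needed to absorb a union bound over $O(n^3)$ configurations. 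You flag the unboundedness of the multipliers as ``the main obstacle'' but do not resolve it, and under~(M0) even a Rosenthal-type polynomial substitute would fail. Second, the claim that at most $2nb_n+2$ summands are nonzero for fixed $v$ assumes the scaled maximal ranks are distinct points of a regular grid; this holds almost surely only for serially independent data. The paper explicitly allows ties in the component series under strong mixing, in which case many observations share a rank value and the deterministic count can be much larger.

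The paper's proof avoids both issues. It first assumes $\xi_{i,n}^{(m)}\geq -K$ (the restriction is removed at the end by the argument of B\"ucher et al.), uses the pointwise bound $|\LL_{b_n}(u,v)-\1(u\leq v)|\leq \1(u_-\leq v)-\1(u_+\leq v)$, and splits the error into (i) an increment of $\check\B_n^{(m)}$ over pairs of arguments at $\|\cdot\|_1$-distance at most $2b_n$, which vanishes by the asymptotic uniform equicontinuity in probability already delivered by Proposition~\ref{prop:multBn} --- so no new concentration inequality is needed --- and (ii) a counting term $\sup_{(s,t,u)}n^{-1/2}\sum_{i}\1(u_-<\hat U_{ij}^{\ip{ns}+1:\ip{nt}}\leq u_+)$, which is shown to vanish by empirical-process arguments (valid with ties) in which the hypothesis $b_n=o(n^{-1/2})$ enters through the elementary bound $\sqrt{n}\,\lambda_n(s,t)\,|u_+-u_-|\leq 2\sqrt{n}\,b_n\to 0$. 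To repair your argument you would need to replace the exponential inequality by an appeal to this equicontinuity and replace the deterministic rank count by a stochastic bound that survives ties.
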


Finally, notice that it is possible to consider a version of the above construction in which the smoothing sequence is $b_{\ip{nt} - \ip{ns}}$ instead of $b_n$. We focused above only on the latter approach as it led to better behaved tests in our Monte Carlo experiments.

\subsection{Estimating the asymptotic null distribution}
\label{sec:asymp}

When the function $f$ used in the definition of $S_{n,f}$ in~\eqref{eq:Snf} is linear, Corollary~\ref{cor:weak_TnA} gives conditions under which, provided $\sigma_{C,f}^2 = \var[ f \circ \psi_C\{\B_C(0,1,\cdot)\} ] > 0$, the weak limit of $\sigma_{C,f}^{-1} S_{n,f}$ under $H_0$ is equal in distribution to $\sup_{s \in [0,1]} |\U(s)|$. The distribution of the latter random variable can be approximated very well (this aspect is discussed in more detail in Section~\ref{sims}). To be able to estimate an asymptotic p-value for $S_{n,f}$, it thus remains to estimate the unknown variance~$\sigma_{C,f}^2$.

Let $\Ex_\xi$ and $\var_\xi$ denote the expectation and variance, respectively, conditional on the data. By analogy with the classical way of proceeding when estimating variances using resampling procedures \citep[see, e.g.,][]{Kun89,Sha10}, in our context, a first natural estimator of the unknown variance under $H_0$ is of the form
\begin{equation}
\label{eq:checksigmaBn}
\check \sigma_{n,C,f}^2 = \var_\xi [f \circ \psi_C\{\check \B_n^{(m)}(0,1,\cdot)\}],
\end{equation}
where $\check \B_n^{(m)}$ is defined in~\eqref{eq:checkBnm}. To simplify the notation, we shall drop the superscript~$(m)$ in the rest of this section.
 The previous estimator is not computable as $C$ is unknown, which is why we will eventually consider the estimator $\check \sigma_{n,C_{1:n},f}^2$ instead. 

To obtain a more explicit expression of $\check \sigma_{n,C,f}^2$, first, let
\begin{equation}
\label{eq:IC}
\I_C(\vec u) = \left( \I_{C,\{1\}}(\vec u), \I_{C,\{2\}}(\vec u), \dots, \I_{C,D}(\vec u) \right), \qquad \vec u \in [0,1]^d,
\end{equation}
where $\I_{C,A}$, $A \subseteq D$, $|A| \geq 1$, is defined in~\eqref{eq:ICA}. From the linearity of $f \circ \psi_{C}$, we then obtain that
\begin{align*}
\check \sigma_{n,C,f}^2 &= \var_\xi \left\{ \frac{1}{\sqrt{n}} \sum_{i=1}^n (\xi_{i,n} - \bar \xi_{1:n}) f \circ \I_{C}(\pobs{U}^{1:n}_i) \right\} \\
&= \var_\xi \left[ \frac{1}{\sqrt{n}} \sum_{i=1}^n \xi_{i,n} \left\{ f \circ \I_{C}(\pobs{U}^{1:n}_i) - \frac{1}{n} \sum_{j=1}^n f \circ \I_{C}(\pobs{U}^{1:n}_j) \right\} \right]. 
\end{align*}
Using the fact that, from~\eqref{eq:ICA} and~\eqref{eq:IC}, 
$$
\frac{1}{n} \sum_{i=1}^n f \circ \I_{C}(\pobs{U}^{1:n}_i) = \frac{1}{n} \sum_{i=1}^n f \circ \psi_{C} \{ \1(\pobs{U}^{1:n}_i \leq \cdot) \} = f \circ \psi_{C}(C_{1:n}),
$$
we obtain that
\begin{multline*}
\check \sigma_{n,C,f}^2 = \frac{1}{n} \sum_{i,j=1}^n \Ex_\xi ( \xi_{i,n} \xi_{j,n} ) f \left\{ \I_{C}(\pobs{U}^{1:n}_i) - \psi_{C}(C_{1:n}) \right\} \\ \times f \left\{\I_{C}(\pobs{U}^{1:n}_j) - \psi_{C}(C_{1:n}) \right\}.
\end{multline*}
On one hand, should the sequence $(\xi_{i,n})_{i \in \Z}$ be an i.i.d.\ multiplier sequence, that is, should it satisfy (M0), unsurprisingly, the above estimator simplifies to
\begin{equation}
\label{eq:checksigmaiid}
\check \sigma_{n,C,f}^2 = \frac{1}{n} \sum_{i=1}^n \left[ f \left\{ \I_{C}(\pobs{U}^{1:n}_i) - \psi_{C}(C_{1:n}) \right\} \right]^2.
\end{equation}
On the other hand, if the multiplier sequence satisfies (M1)--(M3), one obtains 
\begin{multline}
\label{eq:checksigma}
\check \sigma_{n,C,f}^2 = \frac{1}{n} \sum_{i,j=1}^n \varphi \left(\frac{i-j}{\ell_n}\right) f \left\{ \I_{C}(\pobs{U}^{1:n}_i) - \psi_{C}(C_{1:n}) \right\} \\ \times f \left\{\I_{C}(\pobs{U}^{1:n}_j) - \psi_{C}(C_{1:n}) \right\}, 
\end{multline}
which has the form of the HAC kernel estimator of \cite{deJDav00}.

Very naturally, once $C$ has been replaced by~$C_{1:n}$, we use the form in~\eqref{eq:checksigmaiid} (resp.~\eqref{eq:checksigma}) for serially independent (resp.\ weakly dependent) observations. The following result, proved in Section~\ref{proof:prop:convsigma} of the supplementary material, establishes the consistency of $\check \sigma_{n,C_{1:n},f}^2$ under $H_0$.

\begin{prop}
\label{prop:convsigma}
Assume that $f:\R^{2^d-1} \to \R$ in the definition of~\eqref{eq:Snf} is linear and that either
\begin{enumerate}[\bf (i)]
\item the random vectors $\vec X_1,\dots,\vec X_n$ are i.i.d.\ with continuous margins,
\item or the random vectors $\vec X_1,\dots,\vec X_n$ are drawn from a strictly stationary sequence $(\vec X_i)_{i \in \Z}$ with continuous margins whose strong mixing coefficients satisfy $\alpha_r = O(r^{-a})$ for some $a > 6$, and $\ell_n = O(n^{1/2 - \eps})$ for some $0 < \eps < 1/2$ such that, additionally, $\varphi$ defined in~(M3) is twice continuously differentiable on $[-1,1]$ with $\varphi''(0) \neq 0$ and is Lipschitz continuous on $\R$. 
\end{enumerate}
Then, $\check \sigma_{n,C_{1:n},f}^2 \p \sigma_{C,f}^2$. As a consequence, the weak limit of $\check \sigma_{n,C_{1:n},f}^{-1} S_{n,f}$ is equal in distribution to $\sup_{s \in [0,1]} |\U(s)|$.
\end{prop}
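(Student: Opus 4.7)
The plan is to establish $\check{\sigma}_{n,C_{1:n},f}^2 \p \sigma_{C,f}^2$ separately under the two sets of assumptions; the distributional conclusion then follows immediately from Corollary~\ref{cor:weak_TnA} via Slutsky's lemma (using that $\sigma_{C,f} > 0$). By linearity of $f \circ \psi_C$, define $h_C(\vec u) = f \circ \I_C(\vec u)$ and $\mu_C = f \circ \psi_C(C)$; a direct computation using~\eqref{eq:seqep}, \eqref{eq:ICA} and the fact that $V_j$ is uniform when $\vec V \sim C$ shows that $f \circ \psi_C\{\B_n(0,1,\cdot)\} = n^{-1/2} \sum_{i=1}^n \{h_C(\vec U_i) - \mu_C\}$, so that $\sigma_{C,f}^2$ coincides with the long-run variance $\sum_{k \in \Z} \cov(h_C(\vec U_0), h_C(\vec U_k))$ of the stationary bounded sequence $\{h_C(\vec U_i)\}_{i \in \Z}$. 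In this notation, the estimators~\eqref{eq:checksigmaiid} and~\eqref{eq:checksigma} share the unified form $\check{\sigma}_{n,C_{1:n},f}^2 = n^{-1} \sum_{i,j=1}^n w_{n,ij} G_{n,i} G_{n,j}$ with $G_{n,i} = h_{C_{1:n}}(\pobs U_i^{1:n}) - f \circ \psi_{C_{1:n}}(C_{1:n})$, where $w_{n,ij}$ equals $\1(i=j)$ in case (i) or $\varphi((i-j)/\ell_n)$ in case (ii).

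The core of the argument is a substitution step: decompose $G_{n,i} = G_i + R_{n,i}$ with $G_i = h_C(\vec U_i) - \mu_C$ the oracle centered score and $R_{n,i}$ absorbing three approximation errors, namely (a)~replacing $\I_{C_{1:n},A}$ by $\I_{C,A}$, (b)~replacing the pseudo-observations $\pobs U_i^{1:n}$ by the latent $\vec U_i$, and (c)~replacing $f \circ \psi_{C_{1:n}}(C_{1:n})$ by $\mu_C$. I would control $\sup_{1 \leq i \leq n} |R_{n,i}| = O_P(n^{-1/2})$ using two standard facts: the uniform rate $\sup_{\vec u} |C_{1:n}(\vec u) - C(\vec u)| = O_P(n^{-1/2})$, which follows from the weak convergence of the sequential empirical copula process under the assumed mixing, and the identity $\pobs U_{ij}^{1:n} = F_{n,j}(X_{ij})$, which together with the empirical c.d.f.\ rate yields $\sup_{i,j} |\pobs U_{ij}^{1:n} - U_{ij}| = O_P(n^{-1/2})$. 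These in turn bound $|h_{C_{1:n}}(\pobs U_i^{1:n}) - h_C(\vec U_i)|$ because $\vec u \mapsto \I_{C,A}(\vec u)$ is Lipschitz in each coordinate: the leading product term is multilinear, while the integral term in~\eqref{eq:ICA} is non-increasing and $1$-Lipschitz in each $u_j$, since for $u_j < u_j'$ one has $\int \prod_{l \in A \setminus \{j\}}(1-v_l) \1(u_j < v_j \leq u_j') \dd C(\vec v) \leq u_j' - u_j$ by uniformity of $V_j$.

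In case~(i), the i.i.d.\ strong law of large numbers applied to the bounded sequence $\{G_i^2\}$ gives $n^{-1} \sum_i G_i^2 \to \sigma_{C,f}^2$ almost surely, and the cross and remainder-squared contributions to $n^{-1} \sum_i (G_i + R_{n,i})^2$ are $o_P(1)$ by Cauchy--Schwarz and $\sup_i |R_{n,i}| = o_P(1)$. For case~(ii), the oracle HAC statistic $n^{-1} \sum_{i,j} \varphi((i-j)/\ell_n) G_i G_j$ converges in probability to $\sigma_{C,f}^2$ by direct application of the consistency theorem of \cite{deJDav00}; the moment, mixing and kernel hypotheses assumed in~(ii) match the requirements of that result, and are inherited by $\{h_C(\vec U_i)\}_{i \in \Z}$ because $h_C$ is bounded and measurable. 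The cross and remainder-squared contributions are bounded crudely: since $\varphi$ vanishes outside $[-1,1]$ there are at most $O(n\ell_n)$ nonzero summands, so these terms are of order $\ell_n \sup_i |R_{n,i}|$ or $\ell_n (\sup_i |R_{n,i}|)^2$, both $o_P(1)$ thanks to $\ell_n = O(n^{1/2 - \eps})$ and $\sup_i |R_{n,i}| = O_P(n^{-1/2})$. I expect the main obstacle to be formally establishing the rate $O_P(n^{-1/2})$ for error source~(a), which requires uniform control in $\vec u$ of integrals of bounded, indicator-type functions against the signed measure $C_{1:n} - C$; this can be handled by an integration-by-parts argument combined with the weak convergence of $\sqrt{n}(C_{1:n} - C)$ that underlies Proposition~\ref{prop:weak_SnA_sm}.
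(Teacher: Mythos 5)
Your overall architecture matches the paper's: compare $\check\sigma_{n,C_{1:n},f}^2$ with an oracle version built from the latent $\vec U_i$ and the true $C$, obtain consistency of the oracle from known HAC results, and absorb the substitution error by playing the counting bound $n^{-1}\sum_{i,j}\varphi\{(i-j)/\ell_n\}=O(\ell_n)=O(n^{1/2-\eps})$ against a uniform $O_\Pr(n^{-1/2})$ rate for the plug-in errors. Your error sources (a)--(c) correspond to the paper's \eqref{eq:one}--\eqref{eq:three} together with the comparison of $\psi_{C_{1:n},A}(C_{1:n})$ with $\psi_{C,A}(C)$, and your Lipschitz argument for source (b) and the $O(n\ell_n)$ counting bound are correct and essentially what the paper does.

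The gap is in source (a), which you flag as the main obstacle but do not resolve. What is needed is $\sup_{u\in[0,1]}\bigl|\int_{[0,1]^d}\prod_{l\in A\setminus\{j\}}(1-v_l)\1(u\le v_j)\,\dd(C_{1:n}-C)(\vec v)\bigr|=O_\Pr(n^{-1/2})$, and your proposed route --- ``integration by parts combined with the weak convergence of $\sqrt n(C_{1:n}-C)$'' --- cannot be used as stated: the empirical copula process does \emph{not} converge weakly under the hypotheses of the proposition, because no smoothness of the first-order partial derivatives of $C$ is assumed (the paper advertises this as a feature of the whole approach). Only the sup-norm rate $\sup_{\vec u}|C_{1:n}(\vec u)-C(\vec u)|=O_\Pr(n^{-1/2})$ is available, and converting that into uniform control of integrals of the non-indicator, $u$-indexed integrands above would require a multivariate integration-by-parts (Koksma--Hlawka-type) inequality for signed measures, with uniform bounds on the Hardy--Krause variation of the integrands, which you neither state nor verify. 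The paper instead routes the comparison through the latent empirical measure $H_{1:n}$ and proves a dedicated functional CLT (Lemma~\ref{lem:wcHn}) for $\Hb_{n,A,j}(t)=n^{-1/2}\sum_{i=1}^n[\prod_{l\in A\setminus\{j\}}(1-U_{il})\1(t\le U_{ij})-\Ex\{Y_{1,A,j}(t)\}]$ via a fourth-moment inequality and a chaining argument; this is precisely where the strengthened mixing rate $a>6$ is consumed. Tellingly, your sketch never uses $a>6$ anywhere, which signals that the genuinely hard step has been deferred rather than carried out. (A smaller omission: the centering error (c) also requires a CLT for $n^{-1/2}\sum_{i=1}^n[\I_{C,A}(\vec U_i)-\Ex\{\I_{C,A}(\vec U_1)\}]$ under mixing, which you leave implicit.)
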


As in the previous subsection, better behaved tests are obtained if~\eqref{eq:IbnCA} is used instead of~\eqref{eq:ICA} in the above developments. Let
$$
\I_{b_n,C}(\vec u) = \left( \I_{b_n,C,\{1\}}(\vec u), \I_{b_n,C,\{2\}}(\vec u), \dots, \I_{b_n,C,D}(\vec u) \right), \qquad \vec u \in [0,1]^d,
$$
and let $\tilde \sigma_{n,b_n,C_{1:n},f}^2$ be the corresponding estimator of $\sigma_{C,f}^2$. Proceeding as above, for serially independent data, the appropriate form of $\tilde \sigma_{n,b_n,C_{1:n},f}^2$ is
\begin{equation}
\label{eq:tildesigmaiid}
\tilde \sigma_{n,b_n,C_{1:n},f}^2 = \frac{1}{n} \sum_{i=1}^n \left[ f \left\{ \I_{b_n,C_{1:n}}(\pobs{U}^{1:n}_i) - \bar \I_{b_n,C_{1:n}} \right\} \right]^2,
\end{equation}
where $\bar \I_{b_n,C_{1:n}}  = n^{-1} \sum_{=1}^n \I_{b_n,C_{1:n}}(\pobs{U}^{1:n}_i)$, while, for weakly dependent observations, 
\begin{multline}
\label{eq:tildesigma}
\tilde \sigma_{n,b_n,C_{1:n},f}^2 = \frac{1}{n} \sum_{i,j=1}^n \varphi \left(\frac{i-j}{\ell_n}\right) f \left\{ \I_{b_n,C_{1:n}}(\pobs{U}^{1:n}_i) - \bar \I_{b_n,C_{1:n}} \right\} \\ \times f \left\{\I_{b_n,C_{1:n}}(\pobs{U}^{1:n}_j) - \bar \I_{b_n,C_{1:n}} \right\}.
\end{multline}

The following analogue of Proposition~\ref{prop:convsigma} is proved in Section~\ref{proof:prop:convsigma} of the supplementary material.

\begin{prop}
\label{prop:convsigmatilde}
If $b_n = o(n^{-1/2})$, Proposition~\ref{prop:convsigma} holds with $\check \sigma_{n,C_{1:n},f}^2$ replaced with $\tilde \sigma_{n,b_n,C_{1:n},f}^2$.
\end{prop}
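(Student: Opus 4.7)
The plan is to deduce this result from Proposition~\ref{prop:convsigma} by establishing the asymptotic equivalence
$$
\tilde \sigma_{n,b_n,C_{1:n},f}^2 - \check \sigma_{n,C_{1:n},f}^2 = o_\Pr(1).
$$
Once this is in hand, the convergence $\tilde \sigma_{n,b_n,C_{1:n},f}^2 \p \sigma_{C,f}^2$ is immediate from Proposition~\ref{prop:convsigma}, and the statement about the weak limit of $\tilde \sigma_{n,b_n,C_{1:n},f}^{-1} S_{n,f}$ then follows from Slutsky's lemma and Corollary~\ref{cor:weak_TnA} exactly as at the end of Proposition~\ref{prop:convsigma}.

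The central ingredient is a uniform bound on the discrepancy between the two integrand families. Comparing~\eqref{eq:ICA} with~\eqref{eq:IbnCA}, for any $A \subseteq D$ with $|A| \geq 1$,
$$
\I_{C,A}(\vec u) - \I_{b_n,C,A}(\vec u) = \sum_{j \in A} \int_{[0,1]^d} \prod_{l \in A \setminus \{j\}} (1 - v_l) \{\LL_{b_n}(u_j, v_j) - \1(u_j \leq v_j)\} \dd C(\vec v).
$$
Since $\LL_{b_n}(u_j, \cdot) - \1(u_j \leq \cdot)$ vanishes outside an interval of length at most $2 b_n$ and is bounded in absolute value by $1$, the modulus of this difference is at most $|A|$ times the largest marginal $C$-mass of an interval of length $2 b_n$. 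Applied with $C$ replaced by $C_{1:n}$, whose $j$-th marginal places mass $1/n$ on each of the points $1/n, 2/n, \dots, 1$, this yields
$$
\max_{A \subseteq D,\ |A| \geq 1}\ \sup_{\vec u \in [0,1]^d}\ \left| \I_{C_{1:n},A}(\vec u) - \I_{b_n,C_{1:n},A}(\vec u) \right| \leq d\,(2 b_n + 1/n).
$$

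Equipped with this uniform bound, I would set $A_i = f\{\I_{b_n,C_{1:n}}(\pobs{U}^{1:n}_i) - \bar \I_{b_n,C_{1:n}}\}$ and $B_i = f\{\I_{C_{1:n}}(\pobs{U}^{1:n}_i) - \psi_{C_{1:n}}(C_{1:n})\}$. Linearity of $f$ combined with the preceding inequality gives $\max_i |A_i - B_i| = O(b_n + n^{-1})$, while $\max_i(|A_i| + |B_i|)$ is bounded by a deterministic constant depending only on $d$ and $f$. Writing $A_i A_j - B_i B_j = (A_i - B_i) A_j + B_i (A_j - B_j)$, the difference of the two estimators is bounded, in the i.i.d.\ case, by $n^{-1} \sum_i |A_i^2 - B_i^2| = O(b_n + n^{-1}) = o(1)$, and, in the weakly dependent case, by a constant times
$$
\frac{1}{n} \sum_{i,j=1}^n \left|\varphi\!\left(\tfrac{i-j}{\ell_n}\right)\right| (b_n + n^{-1}) = O(\ell_n b_n + \ell_n/n) = o(1),
$$
where we use $\ell_n = O(n^{1/2-\eps})$ together with $b_n = o(n^{-1/2})$. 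The chief technical point is the marginal-mass bound for $C_{1:n}$, needed because the empirical copula's marginals have no density; the crude estimate $2 b_n + 1/n$ above nevertheless suffices, as the hypotheses on $b_n$ and $\ell_n$ jointly absorb both the $b_n$ and the $n^{-1}$ contributions.
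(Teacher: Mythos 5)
Your overall strategy coincides with the paper's: reduce the claim to the asymptotic equivalence $\tilde \sigma_{n,b_n,C_{1:n},f}^2 - \check \sigma_{n,C_{1:n},f}^2 \p 0$, and reduce that, via the linearity of $f$, the uniform boundedness of the integrands and the bound $n^{-1}\sum_{i,j}|\varphi\{(i-j)/\ell_n\}| \leq 2\ell_n+1 = O(n^{1/2-\eps})$, to a uniform bound on $\sup_{\vec u}|\I_{b_n,C_{1:n},A}(\vec u) - \I_{C_{1:n},A}(\vec u)|$. The paper performs exactly this reduction. The divergence, and the gap, lies in how that supremum is controlled. You bound it deterministically by $d(2b_n+1/n)$ on the grounds that the $j$-th margin of $C_{1:n}$ places mass $1/n$ at each of the points $1/n,\dots,1$. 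That is only true when the $j$-th component series has no ties. Under assumption (i) there are none almost surely and your argument is complete (and more elementary than the paper's). Under assumption (ii), however, continuity of the margins does \emph{not} preclude ties in a serially dependent component series --- the paper stresses this point and uses maximal ranks precisely for that reason --- and when ties occur the $j$-th margin of $C_{1:n}$ can carry mass $m/n$ at a single atom, with $m$ the size of a tie cluster. Your bound then degrades to $2b_n + M_n/n$ with $M_n$ the largest multiplicity, and nothing in the hypotheses supplies the control $M_n = o(n^{1/2+\eps})$ that your final $O(\ell_n b_n + \ell_n/n)$ estimate requires.

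The paper closes exactly this hole probabilistically: it shows $\sup_{u\in[0,1]}n^{-1/2}\sum_{i=1}^n\1(u_- \leq \hat U_{ij}^{1:n} < u_+) \p 0$ by passing from the scaled ranks $\hat U_{ij}^{1:n} = H_{1:n,j}(U_{ij})$ back to the $U_{ij}$ and invoking the asymptotic uniform equicontinuity of the sequential empirical process $\B_n$ (the treatment of the terms $I_{n,3}$ and $J_{n,j}$ in the supplementary proofs), together with a separate bound for the atom term $\sup_u n^{-1/2}\sum_i\1(\hat U_{ij}^{1:n}=u)$. This yields the needed $o_\Pr(n^{-1/2})$ bound on the marginal mass of an interval of length $2b_n$ without any no-ties assumption. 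To repair your proof in case (ii), replace the combinatorial mass bound by this empirical-process argument (or otherwise establish $M_n = o_\Pr(n^{1/2})$, which in fact follows from the same equicontinuity argument); the rest of your computation then goes through unchanged.
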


\subsection{Estimation of the bandwidth parameter $\ell_n$}
\label{sec:bandwidth}

When the available observations are weakly dependent, both the approach based on resampling presented in Section~\ref{sec:resampling} and the one based on the estimation of the asymptotic null distribution discussed in Section~\ref{sec:asymp} require the choice of the bandwidth parameter~$\ell_n$. The latter quantity appears in the definition of the dependent multiplier sequences and, as mentioned in \cite{BucKoj14}, plays a role somehow analogous to that of the block length in the block bootstrap. The value of $\ell_n$ is therefore expected to have a crucial influence on the finite-sample performance of the two versions of the test based on $S_{n,f}$ described previously.

The aim of this subsection is to propose an estimator of $\ell_n$ in the spirit of that investigated in \cite{PapPol01}, \cite{PolWhi04} and \cite{PatPolWhi09}, among others, for other resampling schemes. By analogy with~\eqref{eq:checksigmaBn}, we start from the non computable estimator of $\sigma_{C,f}^2$ defined by 
\begin{equation}
\label{eq:sigmaBn}
\sigma_{n,C,f}^2 = \var_\xi [f \circ \psi_C\{\bar \B_n(0,1,\cdot)\}], 
\end{equation}
where
$$
\bar \B_n(s, t,\vec{u}) = \frac{1}{\sqrt{n}} \sum_{i=\ip{ns}+1}^{\ip{nt}} \xi_{i,n} \{ \1 ( \vec U_i \leq \vec u ) - C(\vec u) \}, \qquad (s,t,\vec u) \in \Delta \times [0,1]^d,
$$
and $(\xi_{i,n})_{i \in \Z}$ is a dependent multiplier sequence. Proceeding as for~\eqref{eq:checksigmaBn}, it is easy to verify that
\begin{equation}
\label{eq:sigma_nCf}
\sigma_{n,C,f}^2 = \frac{1}{n} \sum_{i,j=1}^n \varphi \left(\frac{i-j}{\ell_n}\right) f \left\{ \I_C(\vec U_i) - \psi_{C}(C) \right\} f \left\{ \I_C(\vec U_j) - \psi_{C}(C) \right\}. 
\end{equation}
Under the conditions of Proposition~\ref{prop:convsigma}~(ii) and from the fact that the random variables $| f \circ \I_C(\vec U_i) |$ are bounded by $\sup_{\vec x \in [-1,1]^{2^d-1}}|f(\vec x)| < \infty$ (since $\sup_{\vec u \in [0,1]^d} |\I_{C,A}(\vec u)| \leq 1$ for all $A \subseteq D$ $|A| \geq 1$), we can proceed as in the proofs of Propositions 5.1 and 5.2 in \cite{BucKoj14} (see also Lemmas~3.12 and~3.13 in \cite{Buh93} and Proposition 2.1 in \cite{Sha10}) to obtain that
$$
\Ex( \sigma_{n,C,f}^2 ) - \sigma_{C,f}^2 = \frac{\Gamma}{\ell_n^2} + o(\ell_n^{-2}) \qquad \mbox{and} \qquad \var( \sigma_{n,C,f}^2  ) = \frac{\ell_n}{n} \Delta +  o(\ell_n/n),
$$
where $\Gamma = \varphi''(0)/2  \sum_{k=-\infty}^\infty k^2 \tau(k)$ with $\tau(k) = \cov\{ f \circ \I_C(\vec U_0),  f \circ \I_C(\vec U_k) \}$, and $\Delta = 2 \sigma_{C,f}^4 \int_{-1}^1 \varphi(x)^2 \dd x$. As a consequence, the mean squared error of $\sigma_{n,C,f}^2$ is
\begin{equation}
\label{eq:MSE}
\MSE ( \sigma_{n,C,f}^2 ) = \frac{\Gamma^2 }{\ell_n^4} + \Delta \frac{\ell_n}{n} + o(\ell_n^{-4}) + o(\ell_n/n).
\end{equation}
Differentiating the function $x \mapsto \Gamma^2/x^4 + \Delta x/n$ and equating the derivative to zero, we obtain that the value of $\ell_n$ that minimizes the mean square error of $\sigma_{n,C,f}^2$ is, asymptotically, 
$$
\ell_n^{opt} = \left( \frac{4 \Gamma^2 }{\Delta} \right)^{1/5} n^{1/5}.
$$
To estimate $\ell_n^{opt}$, it is necessary to estimate the infinite sum $\sum_{k \in \Z} k^2 \tau(k)$ as well as $\sigma_{C,f}^2 = \sum_{k \in \Z} \tau(k)$ through a {\em pilot} estimate. To do so, we adapt the approach described in \citet[page 1111]{PapPol01} and \citet[Section 3]{PolWhi04} to the current context  \citep[see also][]{PatPolWhi09}. Let $\hat \tau_n(k)$ be the sample autocovariance at lag $k$ computed from the sequence $f \circ \I_{b_n,C_{1:n}}(\pobs{U}^{1:n}_1),\dots,f \circ \I_{b_n,C_{1:n}}(\pobs{U}^{1:n}_n)$. Then, we estimate $\Gamma$ and $\Delta$ by
$$
\hat \Gamma_n = \varphi''(0)/2  \sum_{k=-L}^L \lambda(k/L) k^2 \hat \tau_n(k) 
$$
and
$$
\hat \Delta_n = 2 \left\{ \sum_{k=-L}^L \lambda(k/L) \hat \tau_n(k) \right\}^2 \left\{ \int_{-1}^1 \varphi(x)^2 \dd x \right\},
$$
respectively, where $\lambda(x) = [ \{ 2(1-|x|) \} \vee 0 ] \wedge 1$, $x \in \R$, is the ``flat top'' (trapezoidal) kernel of \cite{PolRom95} and $L$ is an integer estimated by adapting the procedure described in \citet[Section 3.2]{PolWhi04}. Let $\hat \varrho_n(k)$ be the sample autocorrelation at lag $k$ estimated from $f \circ \I_{b_n,C_{1:n}}(\pobs{U}^{1:n}_1),\dots,f \circ \I_{b_n,C_{1:n}}(\pobs{U}^{1:n}_n)$. The parameter $L$ is then taken as the smallest integer $k$ after which $\hat \varrho_n(k)$ appears negligible. The latter is determined automatically by means of the algorithm described in detail in \citet[Section 3.2]{PolWhi04}. Our implementation is based on Matlab code by A.J. Patton (available on his web page) and its \textsf{R} version by J. Racine and C. Parmeter.


\section{Monte Carlo experiments}
\label{sims}

In the previous section, two ways to compute approximate p-values for generic change-point tests based on~\eqref{eq:Snf} were studied under the null. These asymptotic results do not however guarantee that such tests will behave satisfactorily in finite-samples, which is why additional numerical simulations are needed. In our experiments, we restricted attention to the three statistics given in~\eqref{eq:Sni}. For each statistic $S_{n,i}$, $i \in \{1,2,3\}$, an approximate p-value was computed using either the resampling approach based on the processes in~\eqref{eq:tildeSnA}, or the estimated asymptotic null distribution based on variance estimators of the form~\eqref{eq:tildesigmaiid} or~\eqref{eq:tildesigma}. To distinguish between these two situations, we shall talk about {\em the test $\tilde S_{n,i}$} and {\em the test $S_{n,i}^a$}, respectively, in the rest of the paper.

The experiments were carried out in the \textsf{R} statistical system using the \texttt{copula} package \citep{copula}. The sequence $b_n$ involved in both classes of tests was taken equal to $n^{-0.51}$. The only (asymptotically negligible) difference with the theoretical developments presented in the previous sections is that the rescaled maximal ranks in~\eqref{eq:pseudo} were computed by dividing the ranks by $l-k+2$ instead of $l-k+1$. 

\paragraph{Data generating procedure} Two multivariate time series models were used to generate $d$-dimensional samples of size $n$ in our Monte Carlo experiments: a simple autoregressive model of order one and a GARCH(1,1)-like model. Apart from $d$, $n$ and the parameters of the models, the other inputs of the procedure are a real $t \in (0,1)$ determining the location of the possible change-point in the innovations, and two $d$-dimensional copulas $C_1$ and $C_2$. The procedure used to generate a $d$-dimensional sample $\vec X_1,\dots,\vec X_n$ then consists of:
\begin{compactenum}
\item generating independent random vectors $\vec U_i$, $i \in \{-100,\dots,0,\dots,n\}$ such that $\vec U_i$, $i \in \{-100,\dots,0,\dots,\ip{nt}\}$ are i.i.d.\ from copula $C_1$ and $\vec U_i$, $i \in \{\ip{nt}+1,\dots,n\}$ are i.i.d.\ from copula $C_2$,
\item computing $\vec \epsilon_i = (\Phi^{-1}(U_{i1}),\dots,\Phi^{-1}(U_{id}))$, where $\Phi$ is the c.d.f.\ of the standard normal distribution, 
\item setting $\vec X_{-100} = \vec \epsilon_{-100}$ and, for any $j \in D$, computing recursively either
\begin{equation}
\tag{AR1}
\label{eq:ar1}
  X_{ij} = \gamma X_{i-1,j} + \epsilon_{ij}, 
\end{equation} 
or  
\begin{equation}
\tag{GARCH}
\label{eq:garch}
\sigma_{ij}^2 = \omega_j + \beta_j \sigma_{i-1,j}^2 + \alpha_j \epsilon_{i-1,j}^2 \quad \mbox{and} \quad X_{ij} = \sigma_{ij} \epsilon_{ij},
\end{equation}
for $i=-99,\dots,0,\dots,n$.
\end{compactenum}
If the copulas $C_1$ and $C_2$ are chosen equal, the above procedure generates samples under~$H_0$ defined in~\eqref{H0}. Three possible values were considered for the parameter $\gamma$ controlling the strength of the serial dependence in~\eqref{eq:ar1}: 0 (serial independence), 0.25 (mild serial dependence), 0.5 (strong serial dependence). Model~\eqref{eq:garch} was only considered in the bivariate case, and following \cite{BucRup13}, with $(\omega_1,\beta_1,\alpha_1)=(0.012,0.919,0.072)$ and $(\omega_2,\beta_2,\alpha_2)=(0.037,0.868,0.115)$. The latter values were estimated by \cite{JonPooRoc07} from SP500 and DAX daily logreturns, respectively.

Samples under $H_{0,m} \cap (\neg H_{0,c})$, where $H_{0,m}$ and $H_{0,c}$ are defined in~\eqref{H0m} and~\eqref{H0c}, respectively, were obtained by taking $C_1 \neq C_2$ and $t \in \{0.1,0.25,0.5\}$. Notice that when $\gamma = 0$ in~\eqref{eq:ar1}, the latter are samples under $H_{0,m} \cap H_{1,c}$, where
\begin{align*}
\nonumber
  H_{1,c} :\, &
  \exists \text{ distinct } C_1 \text { and } C_2\text{, and } 
  t \in (0,1) \text{ such that }\\
  &\vec X_1, \ldots, \vec X_{\ip{nt}} \text{ have copula } C_1 \text{ and } \vec X_{\ip{nt}+1}, \ldots, \vec X_n \text{ have copula } C_2.
\end{align*}
This is not the case anymore when $\gamma > 0$ as the change in cross-sectional dependence is then gradual by~\eqref{eq:ar1}. 

\paragraph{Other factors of the experiments} Five copula families were considered (the Clayton, the Gumbel--Hougaard, the Normal, the Frank and the Student), the cross-sectional dimensional $d$ was taken in $\{2,4\}$, and the values 50, 100, 200, 400 and 500 were used for~$n$. To estimate the power of the tests, 1000 samples were generated under each combination of factors and all the tests were carried out at the 5\% significance level.

\paragraph{Computation of the test statistics and of the corresponding p-values} The data generating procedure above generates multivariate time series whose component series do not contain ties with probability one. Consequently, as explained in Section~\ref{sec:cpstat}, $S_{n,2}$ is merely $S_{n,1}$ computed from the sample $-\vec X_1,\dots,-\vec X_n$. Furthermore, if $d=2$, it is easy to see that $S_{n,1} = S_{n,2} = S_{n,3}$. However, it can be verified that only the approximate p-values for the tests $\tilde S_{n,1}$ and $\tilde S_{n,3}$ (resp.\ $S_{n,1}^a$ and $S_{n,3}^a$) will be equal. Indeed, the multiplier replicates based on the processes in~\eqref{eq:tildeSnA} (resp.\ the variance estimators of the form~\eqref{eq:tildesigmaiid} or~\eqref{eq:tildesigma})  computed from $\vec X_1,\dots,\vec X_n$ do not coincide in general with those computed from $-\vec X_1,\dots,-\vec X_n$, even in dimension two.

\noindent
From Proposition~\ref{prop:convsigmatilde}, we see that, to compute an asymptotic p-value for the tests $S_{n,i}^a$, it is necessary to be able to compute the c.d.f.\ of the random variable $\sup_{s \in [0,1]} | \U(s) |$. The distribution of the latter random variable is known as the Kolmogorov distribution. As classically done in other contexts, we approach this distribution by that of the statistic of the classical Kolmogorov--Smirnov goodness-of-fit test for a simple hypothesis. Specifically, we use the function {\tt pkolmogorov1x} given in the code of the \textsf{R} function {\tt ks.test}.

\setlength{\tabcolsep}{4pt}

\begin{table}[t!]
\centering
\caption{Percentage of rejection of $H_0$ computed from 1000 samples of size $n \in \{50, 100, 200, 400\}$ generated with $\gamma = 0$ in \eqref{eq:ar1} and when $C_1=C_2=C$ is either the $d$-dimensional Clayton (Cl) or Gumbel--Hougaard (GH) copula the bivariate margins of which have a Kendall's tau of $\tau$. The tests $\tilde S_{n,i}$ are carried out with i.i.d.\ multiplier sequences, while the tests $S_{n,i}^a$ use variance estimators of the form~\eqref{eq:tildesigmaiid}.} 
\label{H0iid}
\begin{tabular}{lrrrrrrrrrrrr}
  \hline
  \multicolumn{3}{c}{} & \multicolumn{4}{c}{$d=2$} & \multicolumn{6}{c}{$d=4$} \\ \cmidrule(lr){4-7} \cmidrule(lr){8-13} $C$ & $n$ & $\tau$ & $\tilde S_{n,1}$ & $\tilde S_{n,2}$ & $S_{n,1}^a$ & $S_{n,2}^a$ & $\tilde S_{n,1}$ & $\tilde S_{n,2}$ & $\tilde S_{n,3}$ & $S_{n,1}^a$ & $S_{n,2}^a$ & $S_{n,3}^a$ \\ \hline
Cl & 50 & 0.1 & 6.8 & 7.4 & 2.6 & 3.0 & 4.6 & 5.1 & 4.0 & 1.2 & 2.1 & 0.7 \\ 
   &  & 0.3 & 4.1 & 5.2 & 1.7 & 4.2 & 4.9 & 5.4 & 3.7 & 0.5 & 2.6 & 0.7 \\ 
   &  & 0.5 & 3.1 & 2.7 & 2.5 & 8.6 & 7.1 & 3.9 & 4.9 & 2.8 & 2.8 & 1.2 \\ 
   &  & 0.7 & 3.0 & 0.5 & 8.3 & 23.8 & 7.4 & 4.1 & 3.3 & 5.4 & 10.3 & 3.1 \\ 
   & 100 & 0.1 & 3.5 & 4.3 & 2.3 & 2.7 & 4.1 & 5.3 & 4.4 & 1.6 & 3.4 & 2.5 \\ 
   &  & 0.3 & 4.0 & 4.4 & 2.3 & 3.6 & 5.7 & 4.7 & 4.4 & 2.0 & 2.8 & 1.4 \\ 
   &  & 0.5 & 4.2 & 4.0 & 4.9 & 8.3 & 4.3 & 4.0 & 3.5 & 2.2 & 3.7 & 1.9 \\ 
   &  & 0.7 & 5.7 & 1.6 & 12.6 & 23.1 & 9.1 & 3.9 & 7.6 & 11.3 & 9.5 & 7.4 \\ 
   & 200 & 0.1 & 4.9 & 4.7 & 2.8 & 3.1 & 6.1 & 5.1 & 5.2 & 3.1 & 3.4 & 3.3 \\ 
   &  & 0.3 & 4.9 & 5.3 & 3.7 & 4.9 & 4.1 & 5.6 & 4.2 & 2.3 & 3.6 & 1.9 \\ 
   &  & 0.5 & 4.6 & 4.3 & 4.8 & 6.9 & 4.6 & 5.5 & 4.2 & 4.1 & 4.8 & 3.2 \\ 
   &  & 0.7 & 5.6 & 3.1 & 11.2 & 15.1 & 10.5 & 5.3 & 11.1 & 14.1 & 8.3 & 9.9 \\ 
   & 400 & 0.1 & 4.6 & 4.9 & 3.7 & 3.8 & 6.3 & 6.7 & 6.5 & 4.5 & 5.5 & 4.8 \\ 
   &  & 0.3 & 4.3 & 4.6 & 4.0 & 4.4 & 5.8 & 5.3 & 5.5 & 4.1 & 4.2 & 3.8 \\ 
   &  & 0.5 & 4.8 & 4.6 & 4.2 & 4.8 & 5.8 & 4.5 & 5.5 & 5.5 & 4.0 & 4.7 \\ 
   &  & 0.7 & 5.9 & 4.0 & 9.3 & 10.8 & 8.5 & 6.6 & 8.7 & 13.5 & 8.1 & 8.2 \\ 
  GH & 50 & 0.1 & 6.7 & 6.3 & 3.4 & 2.3 & 5.8 & 5.3 & 4.7 & 2.4 & 0.8 & 2.5 \\ 
   &  & 0.3 & 4.1 & 3.9 & 3.5 & 2.1 & 5.9 & 6.0 & 5.3 & 1.8 & 0.7 & 3.1 \\ 
   &  & 0.5 & 3.1 & 3.4 & 6.9 & 3.4 & 4.6 & 4.9 & 4.0 & 3.0 & 2.5 & 6.5 \\ 
   &  & 0.7 & 2.0 & 1.8 & 15.5 & 10.7 & 3.4 & 6.2 & 2.0 & 6.2 & 4.2 & 10.3 \\ 
   & 100 & 0.1 & 5.2 & 5.1 & 2.7 & 2.5 & 4.3 & 4.8 & 4.1 & 2.5 & 1.5 & 2.1 \\ 
   &  & 0.3 & 5.9 & 5.3 & 5.2 & 3.9 & 6.1 & 6.7 & 6.7 & 3.1 & 1.9 & 4.5 \\ 
   &  & 0.5 & 3.7 & 3.7 & 6.6 & 5.1 & 5.3 & 4.8 & 5.3 & 3.6 & 3.4 & 6.4 \\ 
   &  & 0.7 & 1.3 & 2.3 & 16.9 & 13.8 & 4.5 & 7.0 & 2.7 & 8.6 & 9.0 & 14.2 \\ 
   & 200 & 0.1 & 5.2 & 5.2 & 3.8 & 3.5 & 4.8 & 4.3 & 4.5 & 3.3 & 2.6 & 3.1 \\ 
   &  & 0.3 & 5.2 & 5.1 & 4.7 & 3.9 & 6.0 & 6.5 & 5.3 & 4.7 & 3.3 & 4.3 \\ 
   &  & 0.5 & 4.5 & 4.5 & 5.2 & 4.7 & 4.2 & 3.9 & 4.0 & 3.2 & 3.6 & 3.9 \\ 
   &  & 0.7 & 2.2 & 3.7 & 12.8 & 10.8 & 4.6 & 7.0 & 4.9 & 6.6 & 9.0 & 10.9 \\ 
   & 400 & 0.1 & 6.4 & 6.1 & 4.8 & 4.7 & 5.1 & 5.7 & 4.3 & 4.0 & 3.1 & 3.1 \\ 
   &  & 0.3 & 4.7 & 4.6 & 4.1 & 3.8 & 4.6 & 5.3 & 5.6 & 3.7 & 3.6 & 4.4 \\ 
   &  & 0.5 & 3.3 & 3.3 & 3.5 & 3.0 & 4.3 & 5.1 & 4.5 & 3.9 & 4.5 & 4.7 \\ 
   &  & 0.7 & 4.6 & 5.8 & 10.1 & 9.9 & 5.3 & 7.1 & 5.9 & 6.3 & 9.5 & 10.4 \\ 
   \hline
\end{tabular}
\end{table}

\begin{table}[t!]
\centering
\caption{Percentage of rejection of $H_0$ computed from 1000 samples of size $n \in \{50, 100, 200\}$ generated with $\gamma=0$ in~\eqref{eq:ar1}, $t \in \{0.1,0.25,0.5\}$ and when $C_1$ and $C_2$ are both $d$-dimensional normal (N) or Frank (F) copulas such that the bivariate margins of $C_1$ have a Kendall's tau of 0.2 and those of $C_2$ a Kendall's tau of $\tau$. The colunms CvM give the results for the test studied in \cite{BucKojRohSeg14}. All the tests were carried out with i.i.d.\ multiplier sequences.} 
\label{H1iid}
\begin{tabular}{lrrrrrrrrrr}
  \hline
  \multicolumn{4}{c}{} & \multicolumn{3}{c}{$d=2$} & \multicolumn{4}{c}{$d=4$} \\ \cmidrule(lr){5-7} \cmidrule(lr){8-11} $C$ & $n$ & $\tau$ & $t$  & CvM & $\tilde S_{n,1}$ & $\tilde S_{n,2}$ & CvM & $\tilde S_{n,1}$ & $\tilde S_{n,2}$ & $\tilde S_{n,3}$ \\ \hline
N & 50 & 0.4 & 0.10 & 5.6 & 6.0 & 5.6 & 5.9 & 7.9 & 7.9 & 8.3 \\ 
   &  &  & 0.25 & 9.1 & 8.7 & 8.9 & 12.2 & 17.3 & 18.9 & 19.5 \\ 
   &  &  & 0.50 & 13.4 & 12.6 & 12.6 & 24.3 & 25.1 & 27.6 & 28.2 \\ 
   &  & 0.6 & 0.10 & 9.0 & 8.7 & 8.9 & 7.1 & 20.7 & 21.7 & 22.4 \\ 
   &  &  & 0.25 & 32.3 & 34.7 & 32.6 & 45.6 & 66.3 & 67.0 & 69.9 \\ 
   &  &  & 0.50 & 46.7 & 42.7 & 41.6 & 76.1 & 78.0 & 77.5 & 80.8 \\ 
   & 100 & 0.4 & 0.10 & 5.7 & 7.8 & 7.6 & 7.6 & 11.2 & 12.2 & 12.3 \\ 
   &  &  & 0.25 & 14.9 & 19.7 & 19.1 & 27.0 & 35.3 & 37.2 & 43.0 \\ 
   &  &  & 0.50 & 25.9 & 28.9 & 29.2 & 54.5 & 54.6 & 53.5 & 59.6 \\ 
   &  & 0.6 & 0.10 & 14.6 & 22.7 & 23.4 & 26.1 & 47.5 & 51.1 & 58.8 \\ 
   &  &  & 0.25 & 60.0 & 68.6 & 69.0 & 90.3 & 94.9 & 94.8 & 97.6 \\ 
   &  &  & 0.50 & 81.9 & 84.8 & 84.2 & 98.8 & 98.4 & 99.0 & 99.5 \\ 
   & 200 & 0.4 & 0.10 & 9.1 & 11.7 & 12.3 & 13.2 & 18.2 & 17.9 & 23.3 \\ 
   &  &  & 0.25 & 26.5 & 36.7 & 36.9 & 58.9 & 64.9 & 67.1 & 75.5 \\ 
   &  &  & 0.50 & 47.7 & 54.2 & 53.7 & 83.4 & 83.5 & 83.3 & 88.9 \\ 
   &  & 0.6 & 0.10 & 34.5 & 57.7 & 58.0 & 63.1 & 87.3 & 87.8 & 93.8 \\ 
   &  &  & 0.25 & 92.6 & 96.5 & 96.7 & 100.0 & 100.0 & 100.0 & 100.0 \\ 
   &  &  & 0.50 & 99.1 & 99.5 & 99.5 & 100.0 & 100.0 & 100.0 & 100.0 \\ 
  F & 50 & 0.4 & 0.10 & 6.9 & 5.7 & 6.2 & 4.5 & 7.8 & 9.0 & 8.4 \\ 
   &  &  & 0.25 & 10.8 & 9.7 & 10.0 & 12.9 & 17.9 & 19.7 & 19.9 \\ 
   &  &  & 0.50 & 15.1 & 13.6 & 13.6 & 24.7 & 30.2 & 31.1 & 29.1 \\ 
   &  & 0.6 & 0.10 & 11.1 & 10.6 & 11.3 & 7.3 & 23.3 & 29.7 & 24.8 \\ 
   &  &  & 0.25 & 33.1 & 32.7 & 31.9 & 42.3 & 67.2 & 70.2 & 69.5 \\ 
   &  &  & 0.50 & 50.9 & 46.1 & 46.2 & 78.3 & 81.9 & 82.3 & 85.5 \\ 
   & 100 & 0.4 & 0.10 & 6.1 & 7.0 & 7.4 & 6.5 & 9.2 & 13.6 & 11.9 \\ 
   &  &  & 0.25 & 16.5 & 18.2 & 18.7 & 26.5 & 38.8 & 46.8 & 49.6 \\ 
   &  &  & 0.50 & 26.4 & 28.6 & 28.3 & 48.9 & 52.7 & 58.3 & 61.6 \\ 
   &  & 0.6 & 0.10 & 17.7 & 27.3 & 27.2 & 22.7 & 55.3 & 63.9 & 68.6 \\ 
   &  &  & 0.25 & 66.5 & 73.6 & 74.0 & 91.9 & 97.7 & 98.2 & 99.5 \\ 
   &  &  & 0.50 & 86.2 & 87.3 & 87.5 & 99.3 & 98.8 & 99.4 & 99.8 \\ 
   & 200 & 0.4 & 0.10 & 10.2 & 15.7 & 15.6 & 12.5 & 19.7 & 25.3 & 27.1 \\ 
   &  &  & 0.25 & 34.3 & 41.3 & 41.5 & 53.6 & 64.4 & 76.2 & 78.8 \\ 
   &  &  & 0.50 & 50.7 & 54.3 & 54.4 & 83.2 & 83.9 & 90.4 & 93.2 \\ 
   &  & 0.6 & 0.10 & 39.0 & 64.7 & 65.6 & 60.3 & 88.0 & 92.2 & 96.4 \\ 
   &  &  & 0.25 & 95.4 & 98.3 & 98.3 & 99.9 & 100.0 & 100.0 & 100.0 \\ 
   &  &  & 0.50 & 99.5 & 99.8 & 99.8 & 100.0 & 100.0 & 100.0 & 100.0 \\ 
   \hline
\end{tabular}
\end{table}

\paragraph{Empirical levels and power of the tests based on i.i.d.\ multipliers / a variance estimator of the form~\eqref{eq:tildesigmaiid}} Table~\ref{H0iid} gives the empirical levels of the tests when the observations are serially independent. For the sake of brevity, the results are reported only for two copula families. Overall, we find that the tests $\tilde S_{n,i}$ with multiplier sequences satisfying (M0) (here standard normal sequences) hold there level rather well both for $d=2$ and $d=4$, and all the considered degrees of cross-sectional dependence. This is not the case for the tests $S_{n,i}^a$ which frequently appear way too liberal when the cross-sectional dependence is high.

\noindent
Table~\ref{H1iid} partially reports the percentages of rejection of the i.i.d.\ multiplier tests for serially independent observations generated under $H_{0,m} \cap H_{1,c}$ resulting from a change of the copula parameter within a copula family. The columns CvM give the results of the i.i.d.\ multiplier test based on the maximally selected Cram\'er--von Mises statistic studied in~\citet{BucKojRohSeg14} (with multiplier replicates of the form~(4.6) in the latter reference) and implemented in the \textsf{R} package {\tt npcp}. Overall, we find that the tests $\tilde S_{n,i}$ are more powerful than that studied in~\cite{BucKojRohSeg14} for such scenarios, especially when the change in the copula occurs early or late. Among the tests $\tilde S_{n,i}$, we observed that the test $\tilde S_{n,3}$ (which coincides with the test $\tilde S_{n,1}$ in dimension two) led frequently to slightly higher rejection rates, although this conclusion is based on a limited number of simulation scenarios. The rejection rates of the tests $S_{n,i}^a$ with a variance estimator of the form~\eqref{eq:tildesigmaiid} are not reported for the sake of brevity. They were found to be slightly less powerful than the tests $\tilde S_{n,i}$ when $\tau = 0.4$. For $\tau=0.6$, a comparison of the two classes of tests is not necessarily meaningful as the tests $S_{n,i}^a$ were often found to be way too liberal under strong cross-sectional dependence.

\begin{table}[t!]
\centering
\caption{Percentage of rejection of $H_0$ computed from 1000 samples of size $n \in \{100, 200, 400\}$ when $C_1 = C_2 = C$ is either the bivariate Clayton (Cl), Gumbel--Hougaard (GH) or Frank (F) copula with a Kendall's tau of $\tau$. In the first four vertical blocks of the table, the test $\tilde S_{n,1}$ (resp.\ $S_{n,1}^a$) is carried out using dependent multiplier sequences (resp.\ a variance estimator of the form~\eqref{eq:tildesigma}). In the last vertical block, i.i.d.\ multipliers and a variance estimator of the form~\eqref{eq:tildesigmaiid} are used instead.} 
\label{H0sm}
\begin{tabular}{lrrrrrrrrrrrr}
  \hline
  \multicolumn{3}{c}{} & \multicolumn{2}{c}{$\gamma=0$} & \multicolumn{2}{c}{$\gamma=0.25$} & \multicolumn{2}{c}{$\gamma=0.5$} & \multicolumn{2}{c}{GARCH} & \multicolumn{2}{c}{$\gamma=0.5$/ind} \\ \cmidrule(lr){4-5} \cmidrule(lr){6-7} \cmidrule(lr){8-9} \cmidrule(lr){10-11} \cmidrule(lr){12-13} $C$ & $n$ & $\tau$ & $\tilde S_{n,1}$ & $S_{n,1}^a$ & $\tilde S_{n,1}$ & $S_{n,1}^a$ & $\tilde S_{n,1}$ & $S_{n,1}^a$ & $\tilde S_{n,1}$ & $S_{n,1}^a$ & $\tilde S_{n,1}$ & $S_{n,1}^a$ \\ \hline
Cl & 100 & 0.10 & 5.2 & 2.3 & 6.6 & 3.5 & 8.2 & 3.3 & 6.2 & 2.5 & 14.5 & 10.2 \\ 
   &  & 0.30 & 3.5 & 1.8 & 6.7 & 3.1 & 7.1 & 4.7 & 5.2 & 3.3 & 15.0 & 11.6 \\ 
   &  & 0.50 & 4.0 & 3.4 & 5.0 & 4.5 & 5.2 & 4.7 & 4.6 & 4.5 & 12.0 & 13.5 \\ 
   &  & 0.70 & 8.3 & 12.0 & 7.5 & 11.8 & 7.2 & 11.2 & 7.2 & 13.2 & 8.9 & 20.0 \\ 
   & 200 & 0.10 & 4.2 & 2.3 & 5.1 & 2.8 & 6.9 & 3.6 & 5.0 & 3.1 & 17.2 & 13.5 \\ 
   &  & 0.30 & 5.1 & 2.6 & 6.2 & 3.4 & 7.2 & 4.4 & 5.3 & 3.8 & 15.7 & 13.0 \\ 
   &  & 0.50 & 4.4 & 4.1 & 5.0 & 5.1 & 4.6 & 5.1 & 4.5 & 4.5 & 14.1 & 14.2 \\ 
   &  & 0.70 & 6.5 & 12.2 & 6.6 & 9.8 & 7.4 & 11.2 & 6.5 & 10.8 & 12.4 & 20.0 \\ 
   & 400 & 0.10 & 4.7 & 3.3 & 5.6 & 4.3 & 6.0 & 3.5 & 5.3 & 3.8 & 19.4 & 16.9 \\ 
   &  & 0.30 & 4.4 & 3.4 & 6.3 & 4.3 & 6.0 & 4.2 & 4.0 & 3.5 & 17.3 & 15.2 \\ 
   &  & 0.50 & 4.7 & 4.7 & 5.9 & 5.7 & 5.6 & 5.0 & 6.1 & 5.7 & 14.6 & 14.2 \\ 
   &  & 0.70 & 6.4 & 8.7 & 5.7 & 7.9 & 5.1 & 6.8 & 6.6 & 9.5 & 15.7 & 19.0 \\ 
  GH & 100 & 0.10 & 4.8 & 2.5 & 5.1 & 2.0 & 7.7 & 2.7 & 5.6 & 2.8 & 15.3 & 11.2 \\ 
   &  & 0.30 & 5.0 & 3.7 & 5.9 & 4.4 & 7.5 & 4.5 & 4.9 & 2.9 & 15.0 & 14.2 \\ 
   &  & 0.50 & 4.5 & 6.7 & 4.3 & 7.1 & 6.3 & 7.9 & 4.9 & 6.9 & 10.7 & 15.7 \\ 
   &  & 0.70 & 3.5 & 16.0 & 4.3 & 18.9 & 5.1 & 18.9 & 3.7 & 16.2 & 4.5 & 25.4 \\ 
   & 200 & 0.10 & 6.4 & 3.9 & 5.6 & 3.7 & 7.3 & 3.9 & 5.8 & 3.8 & 18.2 & 14.1 \\ 
   &  & 0.30 & 6.0 & 5.1 & 6.4 & 4.6 & 6.7 & 4.6 & 5.4 & 4.5 & 19.1 & 16.4 \\ 
   &  & 0.50 & 5.1 & 4.9 & 6.0 & 6.4 & 6.9 & 8.0 & 3.7 & 4.9 & 15.6 & 17.2 \\ 
   &  & 0.70 & 3.8 & 14.4 & 2.8 & 13.0 & 4.4 & 12.4 & 3.5 & 12.2 & 10.0 & 25.4 \\ 
   & 400 & 0.10 & 5.0 & 4.0 & 5.8 & 4.8 & 6.3 & 5.1 & 5.2 & 3.9 & 18.5 & 16.3 \\ 
   &  & 0.30 & 4.1 & 3.0 & 5.1 & 4.3 & 6.3 & 4.6 & 4.9 & 4.1 & 18.5 & 17.2 \\ 
   &  & 0.50 & 3.2 & 3.6 & 5.0 & 6.3 & 7.9 & 7.5 & 4.9 & 4.7 & 16.7 & 17.2 \\ 
   &  & 0.70 & 5.2 & 9.8 & 3.8 & 8.7 & 5.4 & 10.6 & 3.8 & 8.2 & 14.5 & 22.4 \\ 
  F & 100 & 0.10 & 5.5 & 2.1 & 5.3 & 2.3 & 10.6 & 4.2 & 5.0 & 2.4 & 15.2 & 10.2 \\ 
   &  & 0.30 & 4.4 & 2.2 & 5.9 & 3.9 & 7.7 & 4.1 & 6.4 & 4.7 & 13.3 & 10.3 \\ 
   &  & 0.50 & 4.0 & 7.6 & 4.0 & 6.0 & 5.4 & 7.1 & 4.2 & 6.7 & 12.8 & 18.0 \\ 
   &  & 0.70 & 5.2 & 29.3 & 4.8 & 26.5 & 5.4 & 18.1 & 5.4 & 23.9 & 5.9 & 28.5 \\ 
   & 200 & 0.10 & 4.0 & 2.1 & 6.0 & 3.9 & 8.3 & 4.5 & 5.1 & 2.9 & 17.5 & 13.4 \\ 
   &  & 0.30 & 5.0 & 3.9 & 5.7 & 4.1 & 7.1 & 3.9 & 5.3 & 3.4 & 17.0 & 14.5 \\ 
   &  & 0.50 & 4.8 & 6.2 & 4.5 & 5.7 & 6.9 & 7.1 & 4.4 & 5.6 & 15.0 & 17.3 \\ 
   &  & 0.70 & 3.2 & 19.9 & 4.0 & 17.5 & 4.6 & 13.4 & 4.9 & 20.1 & 8.9 & 25.1 \\ 
   & 400 & 0.10 & 4.1 & 3.1 & 6.0 & 4.4 & 6.0 & 4.0 & 4.5 & 3.0 & 18.0 & 14.8 \\ 
   &  & 0.30 & 5.5 & 4.6 & 6.7 & 5.6 & 5.9 & 4.2 & 5.2 & 4.3 & 14.7 & 12.5 \\ 
   &  & 0.50 & 4.6 & 4.7 & 4.7 & 5.0 & 4.0 & 3.8 & 4.8 & 5.5 & 15.7 & 16.5 \\ 
   &  & 0.70 & 5.3 & 13.2 & 4.5 & 12.3 & 6.2 & 9.9 & 5.7 & 13.2 & 14.2 & 21.7 \\ 
   \hline
\end{tabular}
\end{table}

\begin{table}[t!]
\centering
\caption{Percentage of rejection of $H_0$ computed from 1000 samples of size $n \in \{100, 200\}$ generated with $t \in \{0.1,0.25,0.5\}$ and when $C_1$ and $C_2$ are both bivariate Clayton (Cl), Gumbel--Hougaard (GH) or normal (N) copulas with a Kendall's tau of 0.2 for $C_1$ and a Kendall's tau of $\tau$ for $C_2$. The colunms CvM give the results for the test studied in \cite{BucKojRohSeg14}. The latter test and the test $\tilde S_{n,1}$ (resp.\ the test $S_{n,1}^a$) are (resp.\ is) carried out using dependent multiplier sequences (resp.\ a variance estimator of the form~\eqref{eq:tildesigma}).} 
\label{H1sm}
\begin{tabular}{lrrrrrrrrrrrr}
  \hline
  \multicolumn{4}{c}{} & \multicolumn{3}{c}{$\gamma=0$} & \multicolumn{3}{c}{$\gamma=0.5$} & \multicolumn{3}{c}{GARCH}  \\ \cmidrule(lr){5-7} \cmidrule(lr){8-10} \cmidrule(lr){11-13} $C$ & $n$ & $\tau$ & $t$ & CvM & $\tilde S_{n,1}$ & $S_{n,1}^a$ & CvM & $\tilde S_{n,1}$ & $S_{n,1}^a$ & CvM & $\tilde S_{n,1}$ & $S_{n,1}^a$ \\ \hline
Cl & 100 & 0.4 & 0.10 & 6.5 & 6.5 & 4.3 & 6.5 & 8.0 & 5.0 & 6.6 & 6.7 & 3.8 \\ 
   &  &  & 0.25 & 17.9 & 20.4 & 13.4 & 14.0 & 19.7 & 10.6 & 17.2 & 18.1 & 11.2 \\ 
   &  &  & 0.50 & 23.5 & 23.2 & 15.0 & 18.3 & 22.4 & 9.7 & 28.6 & 27.6 & 17.1 \\ 
   &  & 0.6 & 0.10 & 12.6 & 20.6 & 19.7 & 9.4 & 17.1 & 17.0 & 13.9 & 20.1 & 19.4 \\ 
   &  &  & 0.25 & 61.3 & 65.7 & 52.7 & 44.2 & 53.6 & 36.4 & 61.1 & 64.8 & 50.7 \\ 
   &  &  & 0.50 & 80.0 & 78.8 & 61.1 & 58.4 & 61.8 & 34.9 & 80.3 & 78.3 & 59.3 \\ 
   & 200 & 0.4 & 0.10 & 8.2 & 9.6 & 7.5 & 6.9 & 10.4 & 7.0 & 8.3 & 11.1 & 8.9 \\ 
   &  &  & 0.25 & 26.5 & 31.8 & 25.2 & 19.9 & 27.7 & 20.2 & 27.8 & 32.0 & 26.2 \\ 
   &  &  & 0.50 & 45.3 & 47.0 & 37.0 & 34.2 & 40.0 & 27.9 & 47.1 & 48.8 & 40.1 \\ 
   &  & 0.6 & 0.10 & 30.4 & 42.1 & 42.3 & 12.6 & 28.8 & 28.6 & 29.7 & 43.9 & 43.4 \\ 
   &  &  & 0.25 & 93.2 & 94.2 & 87.4 & 71.1 & 79.2 & 65.9 & 91.1 & 92.2 & 83.5 \\ 
   &  &  & 0.50 & 98.5 & 98.3 & 94.1 & 89.5 & 90.5 & 80.1 & 98.7 & 98.2 & 94.1 \\ 
  GH & 100 & 0.4 & 0.10 & 5.3 & 8.0 & 7.1 & 5.0 & 8.2 & 7.1 & 6.3 & 7.6 & 6.9 \\ 
   &  &  & 0.25 & 12.4 & 17.1 & 12.1 & 11.6 & 18.6 & 11.1 & 14.9 & 18.6 & 14.9 \\ 
   &  &  & 0.50 & 22.5 & 25.2 & 16.9 & 18.2 & 24.2 & 14.0 & 26.0 & 27.7 & 19.9 \\ 
   &  & 0.6 & 0.10 & 10.4 & 18.5 & 26.1 & 7.7 & 19.4 & 25.7 & 10.2 & 19.9 & 26.6 \\ 
   &  &  & 0.25 & 53.3 & 63.1 & 54.7 & 41.2 & 58.0 & 43.7 & 55.0 & 63.8 & 52.4 \\ 
   &  &  & 0.50 & 78.1 & 80.4 & 67.4 & 62.7 & 69.5 & 46.1 & 76.0 & 76.3 & 63.1 \\ 
   & 200 & 0.4 & 0.10 & 7.0 & 10.5 & 10.0 & 7.1 & 11.4 & 9.9 & 6.9 & 10.2 & 9.0 \\ 
   &  &  & 0.25 & 25.2 & 31.9 & 27.7 & 19.1 & 30.9 & 22.8 & 24.6 & 32.3 & 26.7 \\ 
   &  &  & 0.50 & 43.0 & 48.3 & 42.1 & 31.4 & 39.3 & 30.0 & 43.2 & 49.1 & 41.3 \\ 
   &  & 0.6 & 0.10 & 25.9 & 42.7 & 47.2 & 13.0 & 30.1 & 34.0 & 23.5 & 43.4 & 46.3 \\ 
   &  &  & 0.25 & 89.0 & 92.9 & 86.3 & 72.1 & 83.5 & 70.0 & 88.9 & 94.5 & 85.0 \\ 
   &  &  & 0.50 & 98.3 & 98.5 & 95.9 & 89.6 & 92.0 & 83.4 & 98.4 & 98.7 & 93.6 \\ 
  N & 100 & 0.4 & 0.10 & 6.1 & 7.8 & 6.2 & 6.9 & 10.2 & 7.8 & 6.1 & 7.0 & 5.5 \\ 
   &  &  & 0.25 & 14.4 & 19.3 & 14.7 & 13.7 & 19.2 & 13.2 & 14.7 & 17.8 & 13.3 \\ 
   &  &  & 0.50 & 25.6 & 27.7 & 19.4 & 17.5 & 24.1 & 12.5 & 25.2 & 28.7 & 19.2 \\ 
   &  & 0.6 & 0.10 & 10.6 & 27.1 & 32.0 & 8.2 & 19.7 & 23.7 & 10.2 & 19.3 & 24.7 \\ 
   &  &  & 0.25 & 61.5 & 70.1 & 61.3 & 46.0 & 62.3 & 44.8 & 58.4 & 69.2 & 59.3 \\ 
   &  &  & 0.50 & 82.6 & 85.1 & 72.3 & 64.9 & 71.3 & 44.9 & 79.0 & 82.0 & 65.7 \\ 
   & 200 & 0.4 & 0.10 & 8.0 & 10.8 & 9.2 & 5.9 & 12.6 & 9.2 & 7.0 & 9.3 & 8.9 \\ 
   &  &  & 0.25 & 27.7 & 37.4 & 33.2 & 20.4 & 31.0 & 24.7 & 26.8 & 35.1 & 30.7 \\ 
   &  &  & 0.50 & 47.0 & 51.5 & 43.6 & 33.2 & 41.7 & 30.7 & 43.0 & 49.5 & 41.3 \\ 
   &  & 0.6 & 0.10 & 27.1 & 47.3 & 49.6 & 14.5 & 35.6 & 39.2 & 28.8 & 48.3 & 51.8 \\ 
   &  &  & 0.25 & 91.5 & 96.5 & 88.4 & 72.3 & 85.2 & 71.0 & 90.7 & 96.1 & 85.7 \\ 
   &  &  & 0.50 & 98.8 & 99.7 & 96.3 & 91.7 & 95.5 & 83.6 & 99.1 & 99.3 & 94.8 \\ 
   \hline
\end{tabular}
\end{table}

\begin{table}[t!]
\centering
\caption{Percentage of rejection of $H_0$ computed from 1000 samples of size $n=500$ generated with $\gamma = 0$ in~\eqref{eq:ar1} and when $C_1$ and $C_2$ are both either bivariate Student copulas with 1 d.f.\ ($t_1$), with 3 d.f.\ ($t_3$) or with 5 d.f.\ ($t_5$) with a Spearman's rho of 0.4 for $C_1$ and a Spearman's rho of $\rho$ for $C_2$. The test $\tilde S_{n,1}$ was carried out with dependent multiplier sequences, while the test $S_{n,1}^a$ used a variance estimator of the form~\eqref{eq:tildesigma}. The columns W contain the rejection rates of the similar test studied in \cite{WieDehvanVog14}. The results are taken from Table~1 in the latter reference.} 
\label{Wied}
\begin{tabular}{rrrrrrrrrr}
  \hline
  \multicolumn{1}{c}{} & \multicolumn{3}{c}{$t_1$} & \multicolumn{3}{c}{$t_3$} & \multicolumn{3}{c}{$t_5$} \\ \cmidrule(lr){2-4} \cmidrule(lr){5-7} \cmidrule(lr){8-10} $\rho$ & W & $\tilde S_{n,1}$ & $S_{n,1}^a$ & W & $\tilde S_{n,1}$ & $S_{n,1}^a$ & W & $\tilde S_{n,1}$ & $S_{n,1}^a$ \\ \hline
0.4 & 4.5 & 3.9 & 2.8 & 4.5 & 5.2 & 4.0 & 4.7 & 6.3 & 4.4 \\ 
  0.6 & 8.1 & 43.3 & 38.7 & 8.5 & 57.9 & 54.3 & 8.5 & 66.5 & 63.8 \\ 
  0.8 & 20.5 & 99.4 & 98.6 & 21.7 & 100.0 & 99.9 & 21.5 & 100.0 & 100.0 \\ 
  0.2 & 7.9 & 33.7 & 29.2 & 8.8 & 51.0 & 46.6 & 8.9 & 52.9 & 48.4 \\ 
  0.0 & 19.9 & 87.7 & 84.7 & 23.0 & 95.7 & 94.9 & 24.0 & 97.2 & 96.3 \\ 
  -0.2 & 41.8 & 99.7 & 99.6 & 49.5 & 100.0 & 100.0 & 51.5 & 100.0 & 100.0 \\ 
  -0.4 & 70.2 & 100.0 & 100.0 & 78.6 & 100.0 & 100.0 & 80.4 & 100.0 & 99.9 \\ 
  -0.6 & 91.7 & 100.0 & 99.9 & 95.8 & 100.0 & 100.0 & 96.6 & 100.0 & 100.0 \\ 
   \hline
\end{tabular}
\end{table}

\paragraph{Empirical levels and power of the tests based on dependent multipliers / a variance estimator of the form~\eqref{eq:tildesigma}} Part of Table~\ref{H0sm} reports the empirical levels of the test $\tilde S_{n,1}$ when dependent multiplier sequences satisfying (M1)--(M3) are used. These sequences were generated using the ``moving average approach'' proposed initially in \citet[Section~6.2]{Buh93} and revisited in \citet[Section~5.2]{BucKoj14}. A standard normal sequence was used for the required initial i.i.d.\ sequence. The kernel function $\kappa$ in that approach was chosen to be the Parzen kernel defined by $\kappa_{P}(x) = (1 - 6x^2 + 6|x|^3) \1(|x| \leq 1/2) + 2(1-|x|)^3\1(1/2 < |x| \leq 1)$, $x \in \R$, which amounts to choosing the function $\varphi$ in~(M3) as $x \mapsto (\kappa_P \star \kappa_P)(2x) / (\kappa_P \star \kappa_P)(0)$, where `$\star$' denotes the convolution operator. The value of the bandwidth parameter $\ell_n$ defined in~(M2) was estimated using the data-driven procedure described in Section~\ref{sec:bandwidth}. The same value of $\ell_n$ was used to carry out the test $S_{n,1}^a$ relying on a variance estimator of the form~\eqref{eq:tildesigma}.

\noindent
From the first three vertical blocks of Table~\ref{H0sm}, we see that an increase in the degree of serial dependence in~\eqref{eq:ar1} (controlled by $\gamma$) appears to result in a small inflation of the empirical levels of the test $\tilde S_{n,1}$. As expected, the situation improves as $n$ increases from 100 to~400. For sequences generated using~\eqref{eq:garch}, the empirical levels of the test $\tilde S_{n,1}$ appear always reasonably close to the 5\% nominal level. The test $S_{n,1}^a$ remains overall way too liberal when the cross-sectional dependence is high. 

\noindent 
The last vertical block of Table~\ref{H0sm} reports, for strongly serially dependent observations generated using~\eqref{eq:ar1}, the empirical levels of the test $\tilde S_{n,1}$ based on i.i.d.\ multipliers, as well as those of the test $S_{n,1}^a$ based on an inappropriate variance estimator of the form~\eqref{eq:tildesigmaiid}. As expected, both tests strongly fail to hold their level.

\noindent
Table~\ref{H1sm} partially reports the rejection percentages of the tests based on dependent multipliers / a variance estimator of the form~\eqref{eq:tildesigma} for observations generated under $H_{0,m} \cap (\neg H_{0,c})$ resulting from a change of the copula parameter within a copula family. The rejection rates of the test $S_{n,1}^a$ should be considered with care when $\tau=0.6$ as that test was found to be way too liberal under strong cross-sectional dependence. Despite that issue, the test $\tilde S_{n,1}$ appears almost always more powerful than the test $S_{n,1}^a$. Also, as it could have been expected, the presence of strong serial dependence ($\gamma=0.5$) leads to lower rejection percentages when compared with serial independence ($\gamma=0$). Finally, comparing the results for the test $\tilde S_{n,1}$ when $\gamma=0$ with the analogue results reported in Table~\ref{H1iid} reveals that, rather naturally, the use of dependent multipliers in the case of serially independent observations results in a small loss of power.

\noindent
We end this section by a comparison of the tests $\tilde S_{n,1}$ and $S_{n,1}^a$ with the similar test studied in \cite{WieDehvanVog14}. To do so, we reproduced one of the experiments carried out in the latter reference. The results are reported in Table~\ref{Wied} and confirm that tests for change-point detection based on~\eqref{eq:Ckl} are potentially substantially more powerful than tests based on~\eqref{eq:Ckln}.


\section{Practical recommendations and illustration}

Based on the experiments partially reported in the previous section, we recommend, among the tests $\tilde S_{n,i}$ and $S_{n,i}^a$, the tests $\tilde S_{n,i}$. Indeed, the tests $S_{n,i}^a$ did not hold their level well in the case of strong cross-sectional dependence. Furthermore, because of their form, the tests $S_{n,i}^a$ might suffer from some of the practical issues described in \cite{ShaZha10}, and, in future research, it might be of interest to study a {\em self-normalization} version of these as advocated in the latter reference. 

The pros and cons of the tests $\tilde S_{n,i}$ compared with the test studied in \cite{BucKojRohSeg14} are as follows. The tests $\tilde S_{n,i}$ seem more powerful for alternatives involving a change in Spearman's rho at constant margins; they are also substantially faster to compute. Their main weakness is that, by construction, they have no power against alternatives involving a change in the copula at a constant value of Spearman's rho and constant margins.

Among the tests $\tilde S_{n,i}$, we recommend the test $\tilde S_{n,3}$, merely because of its slightly better finite-sample behavior in our simulations.
 
We end this section by a brief illustration of the studied tests on real financial observations. Specifically, we consider a trivariate version of the data analyzed in \citet[Section~7]{DehVogWenWie14}. The observations consist of $n=990$ daily logreturns computed from the DAX, the CAC 40 and the Standard and Poor 500 indices for the years 2006--2009. 
An approximate p-value of 0.045 was obtained for the test $\tilde S_{n,3}$ with dependent multipliers, providing some evidence against $H_0$. It is however important to bear in mind that it is only under the assumption that $H_{0,m}$ in~\eqref{H0m} holds that it would be fully justified to decide to reject $H_{0,c}$ in~\eqref{H0c}.

\section{Conclusion}

Tests for change-point detection based on the generic statistic $S_{n,f}$ defined in~\eqref{eq:Snf} were first studied theoretically. These tests, designed to be particularly sensitive to changes in the cross-sectional dependence of multivariate time series, can be carried out using either resampling based on multipliers, or by estimating the asymptotic null distribution of $S_{n,f}$. Both approaches were shown to be asymptotically valid under strong mixing and suitable conditions on the underlying function $f$. In addition, a procedure for estimating a key bandwidth parameter involved in both techniques for computing p-values was suggested, making the tests fully data-driven. Next, their finite-sample behavior was investigated by means of extensive simulations for three particular choices of the function $f$ resulting in the test statistics defined in~\eqref{eq:Sni} measuring changes in the cross-sectional dependence in terms of multivariate extensions of Spearman's rho. Practical recommendations and an illustration were finally given.

\section*{Acknowledgements}
The authors are grateful to Axel B\"ucher and Johan Segers for fruitful discussions on related projects that led to improvements in this one.

\bibliographystyle{plainnat}
\bibliography{biblio}

\appendix

\section{Proof of Proposition~\ref{prop:weak_SnA_sm}}
\label{proof:prop:weak_SnA_sm}

Let us first introduce some additional notation. For integers $1 \le k \le l \leq n$, let $H_{k:l}$ denote the empirical c.d.f.\ of the unobservable sample $\vec U_{k}, \dots, \vec U_{l}$ and let $H_{k:l,1},\dots,H_{k:l,d}$ denote its margins. The corresponding empirical quantile functions are
\[
  H_{k:l,j}^{-1}(u) = \inf \{ v \in [0, 1] : H_{k:l,j}(v) \ge u \}, \qquad u \in [0, 1], j \in D.
\]
Finally, for any $\vec{u} \in [0, 1]^d$, let
\begin{equation}
\label{eq:hnkl}
  \vec{h}_{k:l}(\vec{u}) = 
  \bigl( H_{k:l,1}(u_1), \ldots, H_{k:l,d}(u_d) \bigr)
\end{equation}
and
\begin{equation}
\label{eq:hnklinv}
  \vec{h}_{k:l}^{-1}(\vec{u}) = 
  \bigl( H_{k:l,1}^{-1}(u_1), \ldots, H_{k:l,d}^{-1}(u_d) \bigr).
\end{equation}
By convention, all the quantities defined above are taken equal to zero if $k > l$.

\begin{proof}[\bf Proof of Proposition~\ref{prop:weak_SnA_sm}]
Fix $A \subseteq D$, $|A| \geq 1$, and $(s,t) \in \Delta$ such that $\ip{ns} < \ip{nt}$. On one hand, from~\eqref{eq:SnA} and by linearity of $\phi_A$ defined in~\eqref{eq:phiA}, we have
$$
\S_{n,A}(s,t) = \frac{1}{\sqrt{n}} \sum_{i=\ip{ns}+1}^{\ip{nt}}  \prod_{j \in A} \{ 1 - H_{\ip{ns}+1:\ip{nt},j}(U_{ij}) \} -  \sqrt{n} \lambda_n(s,t) \phi_A(C), 
$$
where we have used the fact that $\hat U_{ij}^{\ip{ns}+1:\ip{nt}} = H_{\ip{ns}+1:\ip{nt},j}(U_{ij})$ for all $j \in D$ and all $i \in \{\ip{ns}+1,\dots,\ip{nt}\}$. On the other hand,
\begin{multline*}
\psi_{C,A} \{ \B_n(s,t, \cdot) \} = \frac{1}{\sqrt{n}} \sum_{i=\ip{ns}+1}^{\ip{nt}}  \prod_{j \in A} (1 - U_{ij}) -  \sqrt{n} \lambda_n(s,t) \phi_A(C) \\ - \int_{[0,1]^d} \sum_{j \in A} \prod_{l \in A \setminus \{j\}} (1-v_l) \B_n(s,t,\vec v^{\{j\}}) \dd C(\vec v).
\end{multline*}
Next, let $\pi(\vec u) = \prod_{j \in A} (1-u_j)$, $\vec u \in \R^d$. Then, fix $\vec u \in [0,1]^d$, and, for any $x \in [0,1]$, let $\vec w_{\vec u}(x) = \vec u + x \{ \vec h_{\ip{ns}+1:\ip{nt}}(\vec u) - \vec u \}$ and let $g(x) = \pi\{ \vec w_{\vec u}(x) \}$, where $\vec h_{\ip{ns}+1:\ip{nt}}$ is defined in~\eqref{eq:hnkl}. The function $g$ is clearly continuously differentiable on $[0,1]$. By the mean value theorem, there exists $x_{\vec u,n,s,t}^* \in (0,1)$ such that $g(1) - g(0) = g'(x_{\vec u,n,s,t}^*)$, that is, such that
$$
\pi\{\vec h_{\ip{ns}+1:\ip{nt}}(\vec u)\} - \pi(\vec u) = \sum_{j \in A} \dot \pi_j [ \vec u + x_{\vec u,n,s,t}^* \{ \vec h_{\ip{ns}+1:\ip{nt}}(\vec u) - \vec u \} ] \{ H_{\ip{ns}+1:\ip{nt},j}(u_j) - u_j \}.
$$
It follows that 
\begin{multline*}
\S_{n,A}(s,t) - \psi_{C,A} \{ \B_n(s,t, \cdot)\} \\ = \frac{1}{\sqrt{n}} \sum_{i=\ip{ns}+1}^{\ip{nt}} \sum_{j \in A} \dot \pi_j [ \vec U_i + x_{\vec U_i,n,s,t}^* \{ \vec h_{\ip{ns}+1:\ip{nt}}(\vec U_i) - \vec U_i \} ] \{ H_{\ip{ns}+1:\ip{nt},j}(U_{ij}) - U_{ij} \} \\- \int_{[0,1]^d} \sum_{j \in A} \dot \pi_j(\vec v) \B_n(s,t,\vec v^{\{j\}}) \dd C(\vec v).
\end{multline*}
Notice that, by the triangle inequality and the fact that $\sup_{\vec u \in [0,1]^d} |\dot \pi_j(\vec u)| \leq 1$, $j \in D$,
$$
\sup_{(s,t) \in \Delta} |\S_{n,A}(s,t) - \psi_{C,A} \{ \B_n(s,t, \cdot)\} | \leq 2 |A| \sup_{(s,t,\vec u) \in \Delta \times [0,1]^d} |\B_n(s,t,\vec u)|.
$$
Next, fix $\eps, \eta > 0$. Using the previous inequality and the fact that $\B_n$ vanishes when $s=t$ and is asymptotically uniformly equicontinuous in probability as a consequence of Lemma~2 in \cite{Buc14}, there exists $\delta \in (0,1)$ such that, for all sufficiently large $n$,
\begin{multline*}
\Pr \left( \sup_{(s,t) \in \Delta \atop t-s < \delta} |\S_{n,A}(s,t) - \psi_{C,A} \{ \B_n(s,t, \cdot)\} | > \eps \right) \\ \leq \Pr \left( 2 |A| \sup_{(s,t,\vec u) \in \Delta \times [0,1]^d \atop t-s < \delta} |\B_n(s,t,\vec u)| > \eps \right) < \eta/2. 
\end{multline*}
To show~\eqref{eq:asymequivSnA}, it remains therefore to prove that, for all sufficiently large $n$,
$$
\Pr \left( \sup_{(s,t) \in \Delta \atop t-s \geq \delta} |\S_{n,A}(s,t) - \psi_{C,A} \{ \B_n(s,t, \cdot)\} | > \eps \right)  < \eta/2.
$$
To show the above, we shall now prove that $\sup_{(s,t) \in \Delta^\delta} |\S_{n,A}(s,t) - \psi_{C,A} \{ \B_n(s,t, \cdot)\} |$ converges in probability to zero, where $\Delta^\delta = \{(s,t) \in \Delta : t-s \geq \delta\}$. The latter supremum is smaller than $\sum_{j \in A} (I_{n,j} + II_{n,j})$, where
\begin{multline*}
I_{n,j} \leq \sup_{(s,t) \in \Delta^\delta} \Big| \frac{1}{\sqrt{n}} \sum_{i=\ip{ns}+1}^{\ip{nt}} \left(\dot \pi_j [ \vec U_i + x_{\vec U_i,n,s,t}^* \{ \vec h_{\ip{ns}+1:\ip{nt}}(\vec U_i) - \vec U_i \} ] - \dot \pi_j(\vec U_i) \right) \\ \times \{ H_{\ip{ns}+1:\ip{nt},j}(U_{ij}) - U_{ij} \} \Big|
\end{multline*}
and
\begin{multline*}
II_{n,j} \leq \sup_{(s,t) \in \Delta^\delta} \Big| \int_{[0,1]^d} \dot \pi_j(\vec v) \B_n(s,t,\vec v^{\{j\}}) \dd H_{\ip{ns}+1:\ip{nt}}(\vec v)- \int_{[0,1]^d} \dot \pi_j(\vec v) \B_n(s,t,\vec v^{\{j\}}) \dd C(\vec v) \Big|.
\end{multline*}
Next, notice that
\begin{multline}
\label{eq:wcHnsnt}
\sup_{(s,t,\vec u) \in \Delta^\delta \times [0,1]^d} |H_{\ip{ns}+1:\ip{nt}}(\vec u) - C(\vec u)| \\ \leq \sup_{(s,t,\vec u) \in \Delta^\delta \times [0,1]^d} |\B_n(s,t,\vec u)| \times n^{-1/2} \times \sup_{(s,t) \in \Delta^\delta} \{\lambda_n(s,t)\}^{-1} \p 0.
\end{multline}
Fix $j \in A$. Since the function $\dot \pi_j$ is continuous on $[0,1]^d$, by the continuous mapping theorem, $\sup_{(s,t,\vec u) \in \Delta^\delta \times [0,1]^d} | \dot \pi_j [ \vec u + x_{\vec u,n,s,t}^* \{ \vec h_{\ip{ns}+1:\ip{nt}}(\vec u) - \vec u \} ] - \dot \pi_j(\vec u) | \p 0$. Hence,
\begin{multline*}
I_{n,j} \leq \sup_{(s,t,\vec u) \in \Delta \times [0,1]^d} |\B_n(s,t,\vec u)| \\ \times \sup_{(s,t,\vec u) \in \Delta^\delta \in [0,1]^d} | \dot \pi_j [ \vec u + x_{\vec u,n,s,t}^* \{ \vec h_{\ip{ns}+1:\ip{nt}}(\vec u) - \vec u \} ] - \dot \pi_j(\vec u) | \p 0.
\end{multline*}
It thus remains to show that $II_{n,j} \p 0$.
The latter is mostly a consequence of Lemma~\ref{lem:prop1} below. First, notice that \eqref{eq:wcHnsnt} implies that $H_{\ip{ns}+1:\ip{nt}} \p C$ in $\ell^\infty(\Delta^\delta \times [0,1]^d;\R)$. Hence, $(\B_n,H_{\ip{ns}+1:\ip{nt}}) \leadsto (\B_C,C)$ in $\ell^\infty(\Delta^\delta \times [0,1]^d;\R)$. Next, combining the previous weak convergence with Lemma~3 in \citet{HolKojQue13} and the continuous mapping theorem, we obtain that the finite-dimensional distributions of $(\A_{n,j},\B_n)$ converge weakly to those of $(\A_{C,j},\B_C)$, where $\A_{n,j}$ and $\A_{C,j}$ are defined in Lemma~\ref{lem:prop1}. The fact that $(\A_{n,j},\B_n) \leadsto (\A_{C,j},\B_C)$ in $\{\ell^\infty(\Delta^\delta \times [0,1]^d;\R)\}^2$ then follows from Lemma~\ref{lem:prop1} below and the fact that marginal asymptotic tightness implies joint asymptotic tightness. The latter weak convergence combined with the continuous mapping theorem finally implies that $II_{n,j} \p 0$, which completes the proof.
\end{proof}

\begin{lem}
\label{lem:prop1}
For any $j \in D$ and $\delta \in (0,1)$, $\A_{n,j} \leadsto \A_{C,j}$ in $\ell^\infty(\Delta^\delta;\R)$, where 
\begin{align}
\label{eq:Anj}
\A_{n,j}(s,t) &= \int_{[0,1]^d} \dot \pi_j(\vec v) \B_n(s,t,\vec v^{\{j\}}) \dd H_{\ip{ns}+1:\ip{nt}}(\vec v), \\
\nonumber
\A_{C,j}(s,t) &= \int_{[0,1]^d} \dot \pi_j(\vec v) \B_C(s,t,\vec v^{\{j\}}) \dd C(\vec v).
\end{align}
\end{lem}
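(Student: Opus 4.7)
The plan is to establish $\A_{n,j} \leadsto \A_{C,j}$ in $\ell^\infty(\Delta^\delta;\R)$ by the classical two-step route: finite-dimensional convergence together with asymptotic tightness. The decomposition
$$
\A_{n,j}(s,t) = \bar\A_{n,j}(s,t) + R_n(s,t),
$$
where
$$
\bar\A_{n,j}(s,t) = \int_{[0,1]^d} \dot\pi_j(\vec v)\,\B_n(s,t,\vec v^{\{j\}})\,\dd C(\vec v)
$$
and
$$
R_n(s,t) = \int_{[0,1]^d} \dot\pi_j(\vec v)\,\B_n(s,t,\vec v^{\{j\}})\,\dd (H_{\ip{ns}+1:\ip{nt}}-C)(\vec v),
$$
will be central. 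The advantage of working with $\bar\A_{n,j}$ is that it is a continuous linear image of $\B_n$, whereas $\A_{n,j}$ involves a random, $(s,t)$-dependent measure.

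For the main term $\bar\A_{n,j}$, I would note that the map $g\mapsto \int \dot\pi_j(\vec v)\,g(\cdot,\cdot,\vec v^{\{j\}})\,\dd C(\vec v)$ is a bounded linear operator from $\ell^\infty(\Delta^\delta\times[0,1]^d;\R)$ to $\ell^\infty(\Delta^\delta;\R)$, so the weak convergence $\B_n\leadsto \B_C$ combined with the continuous mapping theorem immediately yields $\bar\A_{n,j}\leadsto \A_{C,j}$ in $\ell^\infty(\Delta^\delta;\R)$. It then suffices to prove $\sup_{(s,t)\in\Delta^\delta}|R_n(s,t)|=o_\Pr(1)$, which would give both finite-dimensional convergence and, via the tightness of $\bar\A_{n,j}$, asymptotic tightness of $\A_{n,j}$.

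To control $R_n$ uniformly, I would exploit the key observation that $\B_n(s,t,\vec v^{\{j\}})$ depends on $\vec v$ only through its $j$th coordinate. After Fubini, the integral against $H_{\ip{ns}+1:\ip{nt}}-C$ reduces to a one-dimensional Riemann--Stieltjes-type integral in $v_j$ with respect to the difference $H_{\ip{ns}+1:\ip{nt},j}-C_j$; an integration by parts transfers this difference onto $\B_n$, and one may then bound the resulting expression by
$$
C_{d,A}\sup_{(s,t)\in\Delta^\delta}\sup_{u\in[0,1]}\bigl|H_{\ip{ns}+1:\ip{nt},j}(u)-u\bigr|\,\sup_{(s,t,\vec u)\in\Delta\times[0,1]^d}|\B_n(s,t,\vec u)|.
$$
The first factor is $o_\Pr(1)$ by~\eqref{eq:wcHnsnt}, while the second is $O_\Pr(1)$ since $\B_n\leadsto \B_C$ in $\ell^\infty(\Delta\times[0,1]^d;\R)$, so $\sup|R_n|=o_\Pr(1)$.

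The main obstacle will be executing cleanly the one-dimensional reduction and the integration-by-parts step in dimension $d$: because $\dot\pi_j(\vec v)=-\prod_{l\in A\setminus\{j\}}(1-v_l)$ does not factor $C$ into a product measure, the Fubini step requires first isolating $\B_n(s,t,\vec v^{\{j\}})$ as a function of $v_j$ and then integrating $\dot\pi_j(\vec v)$ against the conditional $C$-measure of $(v_l)_{l\neq j}$ given $v_j$. Once this is done, the rest proceeds by standard bounds, and the uniformity over $(s,t)\in\Delta^\delta$ follows from the equicontinuity estimates embedded in~\eqref{eq:wcHnsnt} together with the asymptotic uniform equicontinuity of $\B_n$ (Lemma~2 of \cite{Buc14}).
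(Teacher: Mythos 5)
Your decomposition $\A_{n,j}=\bar\A_{n,j}+R_n$ and the treatment of the main term are sound: $\bar\A_{n,j}\leadsto\A_{C,j}$ does follow from $\B_n\leadsto\B_C$ and the continuous mapping theorem, and $\sup_{(s,t)\in\Delta^\delta}|R_n(s,t)|=o_\Pr(1)$ would indeed yield the lemma. The gap is in your control of $R_n$. First, the reduction to a one-dimensional Stieltjes integral against $H_{\ip{ns}+1:\ip{nt},j}-C_j$ is not available: writing $\dd H_{\ip{ns}+1:\ip{nt}}$ as the empirical measure of the block $\vec U_{\ip{ns}+1},\dots,\vec U_{\ip{nt}}$, one has $\int\dot\pi_j(\vec v)h(v_j)\,\dd H_{\ip{ns}+1:\ip{nt}}(\vec v)=(\ip{nt}-\ip{ns})^{-1}\sum_i\dot\pi_j(\vec U_i)h(U_{ij})$, and since $\dot\pi_j$ genuinely depends on the coordinates in $A\setminus\{j\}$, the difference with $\int\dot\pi_j(\vec v)h(v_j)\,\dd C(\vec v)$ is not a functional of the $j$th marginal discrepancy alone (take $h\equiv 1$ to see this). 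The conditional laws of $(v_l)_{l\neq j}$ given $v_j$ under the block empirical measure and under $C$ differ, so your Fubini step does not produce a single kernel integrated against $H_{\ip{ns}+1:\ip{nt},j}-C_j$. Second, even where such a reduction is exact (e.g.\ $A=\{j\}$, $\dot\pi_j\equiv -1$), integration by parts does not give the bound you state: transferring the marginal discrepancy onto $\B_n$ produces an integral of $H_{\ip{ns}+1:\ip{nt},j}(v)-v$ against $\dd_v\B_n(s,t,\vec v_j)$, and the total variation in $v$ of $v\mapsto\B_n(s,t,\vec v_j)$ is of order $\sqrt n$ (a sum of up to $n$ indicator jumps of size $n^{-1/2}$ plus a drift). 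The resulting estimate is $O_\Pr(n^{-1/2})\times O(\sqrt n)=O_\Pr(1)$, not $o_\Pr(1)$; the product $\sup_u|H_{\ip{ns}+1:\ip{nt},j}(u)-u|\times\sup|\B_n|$ you write down is not what the argument delivers.

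Note also that in the paper the relation between these two facts runs in the opposite direction: $\sup_{(s,t)\in\Delta^\delta}|R_n(s,t)|\p 0$ is exactly the term $II_{n,j}$ in the proof of Proposition~\ref{prop:weak_SnA_sm}, and it is \emph{deduced from} Lemma~\ref{lem:prop1} via the joint convergence $(\A_{n,j},\B_n)\leadsto(\A_{C,j},\B_C)$ and the continuous mapping theorem. The lemma itself is proved by verifying the two conditions of Theorem 2.1 of \cite{Kos08}: finite-dimensional convergence, obtained from $(\B_n,H_{\ip{ns}+1:\ip{nt}})\leadsto(\B_C,C)$ together with a Helly--Bray-type result (Lemma~3 of \cite{HolKojQue13}) that handles precisely the random integrating measure; and asymptotic uniform equicontinuity of $\A_{n,j}$ in $(s,t)$, obtained by splitting an increment into the variation of $\B_n$ (controlled by its own asymptotic equicontinuity) and the variation of the block empirical measure (controlled by counting how many summands change when $(s,t)$ moves by $a_n$). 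To salvage your route you would need a uniform-in-$(s,t)$ Helly--Bray argument for $R_n$, which amounts to redoing that tightness step in any case.
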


\begin{proof}
Fix $j \in D$ and $\delta \in (0,1)$. To prove the desired result, we shall show that conditions~(i) and~(ii) of Theorem 2.1 in \cite{Kos08} hold. First, recall that from~\eqref{eq:wcHnsnt}, $H_{\ip{ns}+1:\ip{nt}} \p C$ in $\ell^\infty(\Delta^\delta \times [0,1]^d;\R)$. Then, from the fact that $\B_n \leadsto \B_C$ in $\ell^\infty(\Delta \times [0,1]^d;\R)$, we obtain that, for any $(s_1,t_1),\dots,(s_k,t_k) \in \Delta^\delta$,
\begin{multline*}
\bigl( \B_n(s_1,t_1,\cdot),H_{\ip{ns_1}+1:\ip{nt_1}},\dots,\B_n(s_k,t_k,\cdot),H_{\ip{ns_k} + 1:\ip{nt_k}} \bigr) \\ 
\leadsto \bigl( \B_C(s_1,t_1,\cdot),C,\dots,\B_C(s_k,t_k,\cdot),C \bigr)
\end{multline*}
in $\{ \ell^\infty([0,1]^d;\R) \}^{2k}$. From Lemma~3 in \citet{HolKojQue13} and the continuous mapping theorem, the above implies that $\bigl( \A_{n,j}(s_1,t_1),\dots,\A_{n,j}(s_k,t_k) \bigr) \leadsto \bigl( \A_{C,j}(s_1,t_1),\dots,\A_{C,j}(s_k,t_k) \bigr)$ in $\R^k$. Hence, we have convergence of the finite-dimensional distributions, that is, condition~(i) of Theorem 2.1 in \cite{Kos08} holds.

It remains to prove condition~(ii) of Theorem 2.1 in \cite{Kos08}. Specifically, we shall now show that $\A_{n,j}$ is $\|\cdot\|_1$-asymptotically uniformly equicontinuous in probability, which will complete the proof since $\Delta^\delta$ is totally bounded by $\|\cdot\|_1$. By Problem 2.1.5 in \cite{vanWel96}, we need to show that, for any positive sequence $a_n \downarrow 0$,
\begin{equation}
\label{eq:asymequiAn}
\sup_{(s,t), (s',t') \in \Delta^\delta \atop |s-s'|+|t-t'| \leq a_n} | \A_{n,j}(s,t) - \A_{n,j}(s',t') | \p 0.
\end{equation}
We bound the supremum on the left of the previous display by $I_n + II_n$, where
\begin{multline*}
I_n = \sup_{(s,t), (s',t') \in \Delta^\delta \atop |s-s'|+|t-t'| \leq a_n} \left| \int_{[0,1]^d} \dot \pi_j(\vec v) \B_n(s,t,\vec v^{\{j\}}) \dd H_{\ip{ns}+1:\ip{nt}}(\vec v) \right. \\ \left. - \int_{[0,1]^d} \dot \pi_j(\vec v) \B_n(s',t',\vec v^{\{j\}}) \dd H_{\ip{ns}+1:\ip{nt}}(\vec v) \right|
\end{multline*}
and
\begin{multline*}
II_n =  \sup_{(s,t), (s',t') \in \Delta^\delta \atop |s-s'|+|t-t'| \leq a_n} \left| \int_{[0,1]^d} \dot \pi_j(\vec v) \B_n(s',t',\vec v^{\{j\}}) \dd H_{\ip{ns}+1:\ip{nt}}(\vec v) \right. \\ \left. - \int_{[0,1]^d} \dot \pi_j(\vec v) \B_n(s',t',\vec v^{\{j\}}) \dd H_{\ip{ns'}+1:\ip{nt'}}(\vec v) \right|.
\end{multline*}
Now, 
$$
I_n 
\leq \sup_{\vec u \in [0,1]^d} | \dot \pi_j(\vec u) | \times \sup_{(s,t), (s',t') \in \Delta^\delta, \vec u \in [0,1]^d \atop |s-s'|+|t-t'| \leq a_n} |  \B_n(s,t,\vec u) -  \B_n(s',t',\vec u) | \p 0,
$$
since $\B_n$ is asymptotically uniformly equicontinuous in probability as a consequence of Lemma~2 in \cite{Buc14}. Furthermore, $II_n$ is smaller than
{\small \begin{multline*}
\sup_{(s,t), (s',t') \in \Delta^\delta \atop |s-s'|+|t-t'| \leq a_n} \left| \frac{1}{\ip{nt}-\ip{ns}} \left\{ \sum_{i=\ip{ns}+1}^{\ip{nt}} \dot \pi_j(\vec U_i) \B_n(s',t',\vec U_i^{\{j\}}) - \sum_{i=\ip{ns'}+1}^{\ip{nt'}} \dot \pi_j(\vec U_i) \B_n(s',t',\vec U_i^{\{j\}}) \right\} \right| \\ + \sup_{(s,t), (s',t') \in \Delta^\delta \atop |s-s'|+|t-t'| \leq a_n} \left| \left( \frac{1}{\ip{nt}-\ip{ns}} - \frac{1}{\ip{nt'}-\ip{ns'}}  \right) \sum_{i=\ip{ns'}+1}^{\ip{nt'}} \dot \pi_j(\vec U_i) \B_n(s',t',\vec U_i^{\{j\}}) \right|,
\end{multline*}}
which is smaller than
\begin{multline*}
2 \times \sup_{(s,t), (s',t') \in \Delta^\delta \atop |s-s'|+|t-t'| \leq a_n} \frac{|\ip{nt} - \ip{nt'}| + |\ip{ns} - \ip{ns'}|}{\ip{nt}-\ip{ns}} \\ \times \sup_{\vec u \in [0,1]^d} | \dot \pi_j(\vec u) | \times \sup_{(s,t,\vec u) \in \Delta \times [0,1]^d} |  \B_n(s,t,\vec u) | \p 0.
\end{multline*}
Hence, $II_n \p 0$ and thus \eqref{eq:asymequiAn} holds, which completes the proof.
\end{proof}

\section{Proof of Corollary~\ref{cor:weak_TnA}}
\label{proof:cor:weak_TnA}

\begin{proof}
Starting from~\eqref{eq:TnAsH0}, using Proposition~\ref{prop:weak_SnA_sm}, the linearity of $\psi_{C,A}$ and~\eqref{eq:seqep}, we obtain that, for any $A \subseteq D$, $|A| \geq 1$, 
$$
\sup_{s \in [0,1]} | \T_{n,A}(s) - \psi_{C,A} \{ \B_n(0,s,\cdot) - \lambda(0,s) \B_n(0,1,\cdot) \} | = o_\Pr(1).
$$
Hence, $\T_n$ has the same weak limit as $s \mapsto \psi_C \{ \B_n(0,s,\cdot) - \lambda(0,s) \B_n(0,1,\cdot) \}$ and~\eqref{eq:wcTn} follows from the continuous mapping theorem.

The second to last claim is a consequence of the continuous mapping theorem. To prove the last claim, it suffices to show that the Gaussian process $\sigma_{C,f}^{-1} f \{ \T_C(\cdot)\}$ has the same covariance function as $\U$. For any, $s,t \in [0,1]$, we have
\begin{multline}
\label{eq:covexpr}
\cov[ \sigma_{C,f}^{-1} f\{\T_C(s)\},\sigma_{C,f}^{-1} f\{\T_C(t)\} ]  \\= \sigma_{C,f}^{-2} \Ex [ f \circ \psi_C \{ \B_C(0,s,\cdot)-s\B_C(0,1,\cdot) \} f \circ \psi_C \{ \B_C(0,t,\cdot)-t\B_C(0,1,\cdot) \}  ].
\end{multline}
By linearity of $f \circ \psi_C$ and Fubini's theorem, the expectation in the last display is equal to
$$
f \circ \psi_C \left\{ \vec u \mapsto f \circ \psi_C \left( \vec v \mapsto \Ex [ \{ \B_C(0,s,\vec u)-s\B_C(0,1,\vec u) \} \{ \B_C(0,t,\vec v)-t\B_C(0,1,\vec v) \} ] \right) \right\},
$$
that is, 
$$
(s \wedge t - st) f \circ \psi_C \left[ \vec u \mapsto f \circ \psi_C \left\{ \vec v \mapsto \kappa_C(\vec u,\vec v) \right\} \right] = (s \wedge t - st) \var [ f \circ \psi_C \{ \B_C(0,1,\cdot) \} ],
$$
where $\kappa_C$ is defined in~\eqref{eq:kappaC}. Combining the previous display with~\eqref{eq:covexpr}, we obtain that $\cov[ \sigma_{C,f}^{-1} f\{\T_C(s)\},\sigma_{C,f}^{-1} f\{\T_C(t)\} ] = (s \wedge t - st)$, which completes the proof.
\end{proof}

\section{Proofs of Propositions~\ref{prop:multTn} and~\ref{prop:multbarTn}}
\label{proof:prop:multTn}

\begin{proof}[\bf Proof of Proposition~\ref{prop:multTn}]
We only show the first claim as the subsequent claims then mostly follow from the continuous mapping theorem. Also, we only provide the proof under~(ii) in the statement of Proposition~\ref{prop:multBn}, the proof being simpler under~(i). Fix $A \subseteq D$, $|A| \geq 1$.  For any $(s,t) \in \Delta$, let $\S_{n,A}^{(m)}(s,t) = \psi_{C,A}\{\check \B_n^{(m)}(s,t,\cdot)\}$. Using the linearity of the map $\psi_{C,A}$ defined in~\eqref{eq:psiCA}, Proposition~\ref{prop:multBn} and the continuous mapping theorem, we obtain that 
$$
\left(\S_{n,A}, \S_{n,A}^{(1)}, \dots, \S_{n,A}^{(M)} \right) \leadsto \left( \S_{C,A}, \S_{C,A}^{(1)}, \dots, \S_{C,A}^{(M)} \right)
$$
in $\{\ell^\infty(\Delta;\R)\}^{M+1}$. The first claim is thus proved if we show that, for any $m \in \{1,\dots,M\}$, $\sup_{(s,t) \in \Delta} | \check \S_{n,A}^{(m)}(s,t) -  \S_{n,A}^{(m)}(s,t)|$ is $o_\Pr(1)$. Fix $m \in \{1,\dots,M\}$ and notice that the latter supremum is smaller than $2 |A| \sup_{(s,t,\vec u) \in \Delta \times [0,1]^d} |\check \B_n^{(m)}(s,t,\vec u)|$. We can therefore proceed analogously to the proof of Proposition~\ref{prop:weak_SnA_sm}. Fix $\eps, \eta > 0$. Using the previous inequality as well as the fact that $\check \B_n^{(m)}$ is zero when $s=t$ and is asymptotically uniformly equicontinuous in probability as a consequence of Lemma~A.3 in \cite{BucKoj14}, there exists $\delta \in (0,1)$ such that, for all sufficiently large $n$,
$$
\Pr \left( \sup_{(s,t) \in \Delta \atop t-s < \delta} | \check \S_{n,A}^{(m)}(s,t) -  \S_{n,A}^{(m)}(s,t)| > \eps \right) < \eta/2. 
$$
It remains therefore to prove that $\sup_{(s,t) \in \Delta^\delta} |\check \S_{n,A}^{(m)}(s,t) -  \S_{n,A}^{(m)}(s,t)| \p 0$, where $\Delta^\delta = \{(s,t) \in \Delta : t-s \geq \delta\}$. The latter supremum is smaller than
$$
\sum_{j \in A} \sup_{(s,t) \in \Delta^\delta} \Big| \int_{[0,1]^d} \dot \pi_j(\vec v) \check \B_n^{(m)}(s,t,\vec v^{\{j\}}) \dd C_{\ip{ns}+1:\ip{nt}}(\vec v)- \int_{[0,1]^d} \dot \pi_j(\vec v) \check \B_n^{(m)}(s,t,\vec v^{\{j\}}) \dd C(\vec v) \Big|, 
$$ 
where $\dot \pi_j$ is the $j$th first order partial derivative of the function $\pi(\vec u) = \prod_{j \in A} (1-u_j)$, $\vec u \in \R^d$, introduced in the proof of Proposition~\ref{prop:weak_SnA_sm}. Fix $j \in A$. The $j$th summand in the previous display is smaller than $I_n$ + $II_n$, where
\begin{align*}
I_n &= \sup_{(s,t) \in \Delta^\delta} \Big| \int_{[0,1]^d} \dot \pi_j(\vec v) \check \B_n^{(m)}(s,t,\vec v^{\{j\}}) \dd C_{\ip{ns}+1:\ip{nt}}(\vec v) - \check \A_{n,j}^{(m)}(s,t) \Big|, \\
II_n &= \sup_{(s,t) \in \Delta^\delta} \Big| \check \A_{n,j}^{(m)}(s,t) -  \int_{[0,1]^d} \dot \pi_j(\vec v) \check \B_n^{(m)}(s,t,\vec v^{\{j\}}) \dd C(\vec v) \Big|,
\end{align*}
and $\check \A_{n,j}^{(m)}$ is defined analogously to the process $\A_{n,j}$ in~\eqref{eq:Anj} with $\B_n$ replaced by $\check \B_n^{(m)}$. In addition, it can be verified that Lemma~\ref{lem:prop1} remains true if $\B_n$ and $\B_C$ are replaced by $\check \B_n^{(m)}$ and $\B_C^{(m)}$, respectively, in its statement. It follows that we can proceed as at the end of proof of Proposition~\ref{prop:weak_SnA_sm} to show that $II_n$ above converges to zero in probability.

To show that $I_n \p 0$, we use the fact that $I_n \leq I_n' + I_n''$, where
\begin{align*}
I_n' =& \sup_{(s,t) \in \Delta^\delta} \Big| \frac{1}{\ip{nt} - \ip{ns}} \sum_{i=\ip{ns}+1}^{\ip{nt}} \left[ \dot \pi_j\{ \vec h_{\ip{ns}+1:\ip{nt}}(\vec U_i) \} - \dot \pi_j (\vec U_i) \right] \\ &\times \check \B_n^{(m)} \{ s,t, \vec h_{\ip{ns}+1:\ip{nt}}(\vec U_i)^{\{j\}} \}  \Big|, \\
I_n'' =& \sup_{(s,t) \in \Delta^\delta} \Big| \frac{1}{\ip{nt} - \ip{ns}} \sum_{i=\ip{ns}+1}^{\ip{nt}} \dot \pi_j (\vec U_i) \left[ \check \B_n^{(m)} \{ s,t, \vec h_{\ip{ns}+1:\ip{nt}}(\vec U_i)^{\{j\}} \} - \check \B_n^{(m)} ( s,t, \vec U_i^{\{j\}} ) \right] \Big|. 
\end{align*}
For $I_n'$, we have that 
$$
I_n' \leq \sup_{(s,t,\vec u) \in \Delta \times [0,1]^d} \left|\check \B_n^{(m)}(s,t,\vec u) \right| \times \sup_{(s,t,\vec u) \in \Delta^\delta \times [0,1]^d} \left| \dot \pi_j\{ \vec h_{\ip{ns}+1:\ip{nt}}(\vec u) \} - \dot \pi_j (\vec u) \right| \p 0
$$
as a consequence of the weak convergence of $\check \B_n^{(m)}$,~\eqref{eq:wcHnsnt}, and the continuous mapping theorem. For $I_n''$, using the fact that $\sup_{\vec u \in [0,1]^d} |\dot \pi_j(\vec u)| \leq 1$, we obtain that
$$
I_n'' \leq \sup_{(s,t,\vec u) \in \Delta^\delta \times [0,1]^d} \left| \check \B_n^{(m)} \{ s,t,\vec h_{\ip{ns}+1:\ip{nt}}(\vec u)^{\{j\}} \} - \check \B_n^{(m)} ( s,t, \vec u^{\{j\}} ) \right| \p 0.
$$
The latter convergence is a consequence of the asymptotic equicontinuity in probability of $\check \B_n^{(m)}$ and the fact that $\sup_{(s,t,u) \in \Delta^\delta \times [0,1]} |H_{\ip{ns}+1:\ip{nt},j}(u) - u| \p 0$ \citep[see e.g.\ the treatment of the term (B.9) in][for a detailed proof of a similar convergence]{BucKojRohSeg14}.
\end{proof}

\begin{proof}[\bf Proof of Proposition~\ref{prop:multbarTn}]
We only provide the proof under~(ii) in the statement of Proposition~\ref{prop:multBn}, the proof being simpler under~(i). From Proposition~\ref{prop:multTn}, to prove the desired result it suffices to show that, for any $A \subseteq D$, $|A| \geq 1$,
$$
\sup_{(s,t) \in \Delta} | \tilde{\S}_{n,b_n,A}^{(m)}(s,t) - \check{\S}_{n,A}^{(m)}(s,t) | \p 0.
$$
Fix $A \subseteq D$, $|A| \geq 1$. From~\eqref{eq:checkSnA} and~\eqref{eq:tildeSnA} and the triangle inequality, the latter will hold if, for any $j \in A$, 
$$
\sup_{(s,t,u) \in \Delta \times [0,1]} | \tilde{\B}_{n,b_n,j}^{(m)}(s,t,u) - \check{\B}_n^{(m)}(s,t,\vec u_j) | \p 0.
$$
The previous supremum can actually be restricted to $u \in (0,1)$ as both processes are zero if $u \in \{0,1\}$.

Let $K > 0$ be a constant and let us first suppose that, for any  $n \geq 1$ and $i \in \{1,\dots,n\}$, $\xi_{i,n}^{(m)} \geq -K$. Also, fix $j \in A$. The supremum on the right of the previous display is then smaller than $I_n + II_n$, where
\begin{align*}
I_n &= \sup_{(s,t,u) \in \Delta \times (0,1)}  \frac{1}{\sqrt{n}} \sum_{i = \ip{ns}+1}^{\ip{nt}} ( \xi_{i,n}^{(m)} + K ) \left| \LL_{b_n}( \hat U_{ij}^{\ip{ns}+1:\ip{nt}},u ) - \1( \hat U_{ij}^{\ip{ns}+1:\ip{nt}} \leq u ) \right|, \\
II_n &= \sup_{(s,t,u) \in \Delta \times (0,1)} \frac{K + \bar  \xi_{\ip{ns}+1:\ip{nt}}^{(m)} }{\sqrt{n}} \sum_{i = \ip{ns}+1}^{\ip{nt}}  \left| \LL_{b_n}( \hat U_{ij}^{\ip{ns}+1:\ip{nt}},u ) - \1( \hat U_{ij}^{\ip{ns}+1:\ip{nt}} \leq u ) \right|. 
\end{align*}
Next, some thought reveals that, for any $(u,v) \in [0,1] \times (0,1)$, 
\begin{align}
\label{eq:ineqLL} 
\left| \LL_{b_n}(u,v) - \1(u \leq v) \right| &\leq  \1(u_- \leq v)  - \1(u_+ \leq v) \\\nonumber
&=  \1(u - b_n \leq v)  - \1(u + b_n \leq v) \\
\nonumber
&=  \1(u \leq v_+)  - \1(u \leq v_-). 
\end{align}
Then, we write $I_n \leq I_{n,1} + I_{n,2}$, where
\begin{align*}
I_{n,1} &= \sup_{(s,t,u) \in \Delta \times [0,1]}  \left|  \frac{1}{\sqrt{n}} \sum_{i = \ip{ns}+1}^{\ip{nt}} ( \xi_{i,n}^{(m)} - \bar  \xi_{\ip{ns}+1:\ip{nt}}^{(m)}) \1(u_- < \hat U_{ij}^{\ip{ns}+1:\ip{nt}} \leq u_+ )  \right|, \\
I_{n,2} &= \sup_{(s,t,u) \in \Delta \times [0,1]} \frac{K + \bar  \xi_{\ip{ns}+1:\ip{nt}}^{(m)} }{\sqrt{n}} \sum_{i = \ip{ns}+1}^{\ip{nt}}  \1(u_- < \hat U_{ij}^{\ip{ns}+1:\ip{nt}} \leq u_+ ). 
\end{align*}
For $I_{n,1}$, we have
$$
I_{n,1} \leq \sup_{(s,t,\vec u, \vec v) \in \Delta \times [0,1]^{2d} \atop \|\vec u - \vec v\|_1 \leq 2b_n} \left| \check{\B}_n^{(m)}(s,t,\vec u) - \check{\B}_n^{(m)}(s,t,\vec v) \right| \p 0 
$$
from the asymptotic uniform equicontinuity in probability of $\check{\B}_n^{(m)}$. Before dealing with $I_{n,2}$, let us first show that 
\begin{equation}
\label{eq:In3}
I_{n,3} = \sup_{(s,t,u) \in \Delta \times [0,1]} \frac{1}{\sqrt{n}} \sum_{i = \ip{ns}+1}^{\ip{nt}}  \1(u_- < \hat U_{ij}^{\ip{ns}+1:\ip{nt}} \leq u_+ ) \p 0.
\end{equation}
From the proof of Proposition~3.3 of \cite{BucKojRohSeg14}, we have that
$$
\sup_{(s,t,u) \in \Delta \times [0,1]} \left| \frac{1}{\sqrt{n}} \sum_{i = \ip{ns}+1}^{\ip{nt}}  \left[ \1\{ U_{ij} \leq H_{\ip{ns}+1:\ip{nt},j}^{-1}(u) \} - \1(\hat U_{ij}^{\ip{ns}+1:\ip{nt}} \leq u) \right] \right| \p 0.
$$
Consequently, to prove that $I_{n,3} \p 0$, it suffices to show that
$$
\sup_{(s,t,u) \in \Delta \times [0,1]} \left| \frac{1}{\sqrt{n}} \sum_{i = \ip{ns}+1}^{\ip{nt}}  \left[ \1\{ U_{ij} \leq H_{\ip{ns}+1:\ip{nt},j}^{-1}(u_+) \} -  \1\{ U_{ij} \leq H_{\ip{ns}+1:\ip{nt},j}^{-1}(u_-) \} \right] \right|\p 0.
$$
The supremum on the left of the previous display is smaller than $J_{n,1} + J_{n,2} + 
J_{n,3}$, where
\begin{align*}
J_{n,1} &= \sup_{(s,t,u) \in \Delta \times [0,1]} \left| \B_n\{ s,t,1,H_{\ip{ns}+1:\ip{nt},j}^{-1}(u_+),1 \} - \B_n\{ s,t,1,H_{\ip{ns}+1:\ip{nt},j}^{-1}(u_-),1\}  \right|, \\ 
J_{n,2} &= \sup_{(s,t,u) \in \Delta \times [0,1]}  \sqrt{n} \lambda_n(s,t) \left| H_{\ip{ns}+1:\ip{nt},j}^{-1}(u_+) - u_+ - H_{\ip{ns}+1:\ip{nt},j}^{-1}(u_-) + u_- \right|, \\ 
J_{n,3} &= \sup_{(s,t,u) \in \Delta \times [0,1]}  \sqrt{n} \lambda_n(s,t) \left| u_+ - u_- \right|,
\end{align*}
with some abuse of notation for $J_{n,1}$. We immediately have $J_{n,3} \leq 2 \sqrt{n} b_n \to 0$. The fact $J_{n,2} \p 0$ follows from the asymptotic uniform equicontinuity in probability of the process $(s,t,u) \mapsto \sqrt{n} \lambda_n(s,t) \{ H_{\ip{ns}+1:\ip{nt},j}^{-1}(u) - u \}$, itself following from its weak convergence to $(s,t,u) \mapsto - \B_C(s,t,\vec u_j)$ in $\ell^\infty(\Delta \times [0,1];\R)$. The latter is a consequence of the weak convergence of $\B_n$ to $\B_C$ in $\ell^\infty(\Delta \times [0,1]^d;\R)$, Lemma~B.2 of \cite{BucKoj14} and the extended continuous mapping theorem \cite[Theorem 1.11.1]{vanWel96}. The fact that $J_{n,2} \p 0$ implies that, for any $\delta \in (0,1)$, 
$$
\sup_{(s,t,u) \in \Delta \times [0,1] \atop t-s \geq \delta} \left| H_{\ip{ns}+1:\ip{nt},j}^{-1}(u_+) - H_{\ip{ns}+1:\ip{nt},j}^{-1}(u_-) \right| \p 0.
$$
Combined with the asymptotic uniform equicontinuity in probability of $\B_n$, the latter can be used to prove that $J_{n,1} \p 0$ \citep[see][page 24, term (B.9), for a similar proof]{BucKojRohSeg14}. Hence, $I_{n,3} \p 0$.

Now, $I_{n,2} \leq K \times I_{n,3} + I_{n,4}$, where
$$
I_{n,4} = \sup_{(s,t,u) \in \Delta \times [0,1]} \frac{\bar  \xi_{\ip{ns}+1:\ip{nt}}^{(m)} }{\sqrt{n}} \sum_{i = \ip{ns}+1}^{\ip{nt}}  \1(u_- < \hat U_{ij}^{\ip{ns}+1:\ip{nt}} \leq u_+ ). 
$$
Hence, to show that $I_{n,2} \p 0$, it remains to prove that $I_{n,4} \p 0$. The latter can be shown by proceeding as for the term~(B.8) in \cite{BucKojRohSeg14}.

We therefore have that $I_n \p 0$. The fact that $II_n \p 0$, follows from the fact that $II_n \leq I_{n,2} \p 0$. This completes the proof under the condition $\xi_{i,n}^{(m)} \geq -K$. To show that this condition is not necessary, we use the arguments employed at the end of the proof of Proposition 4.3 of \cite{BucKojRohSeg14}. 
\end{proof}

\section{Proofs of Propositions~\ref{prop:convsigma} and~\ref{prop:convsigmatilde}}
\label{proof:prop:convsigma}

\begin{lem}
\label{lem:wcHn}
Assume that $\vec U_1,\dots,\vec U_n$ is drawn from a strictly stationary sequence $(\vec U_i)_{i \in \Z}$ whose strong mixing coefficients satisfy $\alpha_r = O(r^{-a})$, $a > 6$. Then, for any $A \subseteq D$, $|A| \geq 1$ and $j \in A$, $\Hb_{n,A,j} \leadsto \Hb_{A,j}$ in $\ell^\infty([0,1];\R)$, where, for any $t \in [0,1]$, $\Hb_{n,A,j}(t) = n^{-1/2} \sum_{i=1}^n \left[ Y_{i,A,j}(t) - \Ex \{ Y_{1,A,j}(t) \} \right]$, $Y_{i,A,j}(t) = \prod_{l \in A \setminus \{j\}} (1-U_{il}) \1(t \leq U_{ij})$, and $\Hb_{A,j}$ is a tight process. 
\end{lem}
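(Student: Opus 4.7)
The plan is to express $\Hb_{n,A,j}$ as a continuous linear functional of the non-sequential multivariate empirical process
\[
\B_n^\circ(\vec u) = \frac{1}{\sqrt n} \sum_{i=1}^n \{\1(\vec U_i \leq \vec u) - C(\vec u)\}, \qquad \vec u \in [0,1]^d,
\]
whose weak convergence to the tight Gaussian process $\B_C^\circ$ in $\ell^\infty([0,1]^d;\R)$ is already guaranteed by \cite{Buc14} under the much weaker condition $\alpha_r = O(r^{-a})$ with $a > 1$, and then to invoke the continuous mapping theorem twice. The $a > 6$ assumption inherited from the calling propositions is thus much more than is actually needed for this lemma alone.

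First, I would linearize the random weight $\prod_{l \in A \setminus \{j\}}(1 - U_{il})$ via the identity $1 - U_{il} = \int_0^1 \1(U_{il} \leq v_l) \dd v_l$, valid almost surely since $U_{il}$ has a continuous distribution (so $\1(v_l > U_{il}) = \1(U_{il} \leq v_l)$ for a.e.\ $v_l$). Writing $\vec v_{-j} = (v_l)_{l \in A \setminus \{j\}}$ and applying Fubini (everything is uniformly bounded by~1), this yields
\[
\Hb_{n,A,j}(t) = \int_{[0,1]^{|A|-1}} \G_n(t, \vec v_{-j}) \dd \vec v_{-j} \qquad \text{a.s.,}
\]
where $\G_n(t,\vec v_{-j})$ denotes the centered partial-sum process built from $\prod_{l \in A \setminus \{j\}} \1(U_{il} \leq v_l) \, \1(t \leq U_{ij})$.

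Next, I would rewrite $\G_n$ as a difference of two evaluations of $\B_n^\circ$ at suitable corners of $[0,1]^d$. Since $U_{ij}$ has a continuous distribution, $\1(t \leq U_{ij}) = 1 - \1(U_{ij} \leq t)$ almost surely, so the joint indicator splits as
\[
\prod_{l \in A \setminus \{j\}} \1(U_{il} \leq v_l) \, \1(t \leq U_{ij}) = \1\{\vec U_i \leq \vec u^\star(\vec v_{-j})\} - \1\{\vec U_i \leq \vec u^{\star\star}(t,\vec v_{-j})\}
\]
a.s., where $\vec u^\star(\vec v_{-j}) \in [0,1]^d$ has $u_l = v_l$ for $l \in A \setminus \{j\}$ and $u_l = 1$ otherwise, and $\vec u^{\star\star}(t,\vec v_{-j})$ is the same vector with its $j$th coordinate replaced by $t$. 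The same decomposition holds for the corresponding expectations, yielding $\G_n(t, \vec v_{-j}) = \B_n^\circ\{\vec u^\star(\vec v_{-j})\} - \B_n^\circ\{\vec u^{\star\star}(t, \vec v_{-j})\}$ a.s.

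Finally, I would observe that the two maps
\[
\Phi : g \longmapsto \Bigl[ (t, \vec v_{-j}) \mapsto g\{\vec u^\star(\vec v_{-j})\} - g\{\vec u^{\star\star}(t, \vec v_{-j})\} \Bigr]
\]
from $\ell^\infty([0,1]^d;\R)$ to $\ell^\infty([0,1]^{|A|};\R)$, and
\[
\Psi : h \longmapsto \Bigl[ t \mapsto \int_{[0,1]^{|A|-1}} h(t, \vec v_{-j}) \dd \vec v_{-j} \Bigr]
\]
from $\ell^\infty([0,1]^{|A|};\R)$ to $\ell^\infty([0,1];\R)$, are bounded linear and therefore continuous. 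Combining $\Hb_{n,A,j} = (\Psi \circ \Phi)(\B_n^\circ)$ a.s.\ with $\B_n^\circ \leadsto \B_C^\circ$ and applying the continuous mapping theorem twice gives $\Hb_{n,A,j} \leadsto \Hb_{A,j} := (\Psi \circ \Phi)(\B_C^\circ)$, which is tight as the continuous image of a tight process. I do not anticipate a genuine obstacle: the only delicate point is the repeated use of continuity of the marginal distributions to freely swap strict and non-strict inequalities, which is entirely standard.
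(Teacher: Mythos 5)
Your route is genuinely different from the paper's: you reduce $\Hb_{n,A,j}$ to a bounded linear image of the ordinary empirical process $\B_n(0,1,\cdot)$, whose weak convergence is already available from \cite{Buc14} under $a>1$, and then invoke the continuous mapping theorem; the paper instead proves the lemma from scratch by checking the two conditions of Theorem~2.1 of \cite{Kos08} -- finite-dimensional convergence via Theorem~3.23 of \cite{DehPhi02} and asymptotic equicontinuity via a fourth-moment bound (Lemma~3.22 of \cite{DehPhi02}, which is where $a>6$ enters) combined with a monotonicity-based chaining argument and the maximal inequality of Theorem~2.12 of \cite{Bil68}. Your observation that the functional-of-$\B_n$ route only needs $a>1$ is correct, and the linearization $1-U_{il}=\int_0^1 \1(U_{il}\le v_l)\,\dd v_l$ together with the corner decomposition is sound.

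There is, however, one concrete gap: the identity $\1(t\le U_{ij})=1-\1(U_{ij}\le t)$ fails on the event $\{U_{ij}=t\}$, and you need it simultaneously for all $t\in[0,1]$, not for a fixed $t$. Consequently $\G_n(t,\vec v_{-j})$ and $\B_n^\circ\{\vec u^\star(\vec v_{-j})\}-\B_n^\circ\{\vec u^{\star\star}(t,\vec v_{-j})\}$ are \emph{not} a.s.\ equal as elements of $\ell^\infty$; their sup-norm distance equals $n^{-1/2}\max_{t}\#\{i\le n: U_{ij}=t\}$ (the centering terms do agree, since $\Pr(U_{1j}=t)=0$). This is not ``entirely standard'' in the present setting: the paper explicitly stresses that under serial dependence, continuity of the margins does \emph{not} preclude ties among the $U_{ij}$, so the maximal multiplicity need not be $1$ and the discrepancy is not trivially $n^{-1/2}$. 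The hole is repairable with the tools you already have -- e.g.\ bound $n^{-1/2}\sum_i\1(U_{ij}=t)\le \{\B_n^\circ(\vec u_j(t))-\B_n^\circ(\vec u_j(t-\delta_n))\}+\sqrt{n}\,\delta_n$ for $\delta_n\downarrow 0$ with $\sqrt{n}\,\delta_n\to 0$ and invoke the asymptotic uniform equicontinuity of $\B_n^\circ$ to get $o_\Pr(1)$ uniformly in $t$ -- but as written the claimed a.s.\ identity on which the whole continuous-mapping argument rests is false, so this step must be added before the proof is complete. The paper's direct argument sidesteps the issue entirely because it never swaps strict and non-strict inequalities.
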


\begin{proof}
Fix $A \subseteq D$, $|A| \geq 1$ and $j \in A$. To simplify the notation, we write $\Hb_n$ instead of $\Hb_{n,A,j}$ and $Y_i$ instead of $Y_{i,A,j}$ as we continue. To prove the desired result, we mostly adapt the arguments used in the proof of Proposition 2.11 of \cite{DehPhi02}. From Theorem 2.1 in \cite{Kos08}, two conditions are needed to obtain the desired weak convergence. The first condition (which is the weak convergence of the finite-dimensional distributions) is a consequence of Theorem 3.23 of \cite{DehPhi02} as $a > 6$ and $Y_i(t) \in [0,1]$ for all $t \in [0,1]$. To prove the second condition, we shall show that $\Hb_n$ is asymptotically $|\cdot|$-equicontinuous in probability. To do so, we shall first prove that, for any $\eps, \delta > 0$, there exists a grid $0 = t_0 < t_1 < \dots < t_k = 1$ such that, for all $n$ sufficiently large,
\begin{equation}
\label{eq:goal}
\Pr \left\{  \max_{1 \leq i \leq k} \sup_{t \in [t_{i-1},t_i]} | \Hb_n(t) - \Hb_n(t_{i-1}) | \geq \eps \right\} \leq \delta.
\end{equation}
We first note that there exists constants $c \geq 1$ and $\epsilon \in (0,1)$ such that $\alpha_r \leq c r^{-6-\epsilon}$. Then, using the fact that, for $t,t' \in [0,1]$,
$$
\Ex[ \{ Y_1(t) - Y_1(t') \}^2 ] \leq \Ex[ | Y_1(t) - Y_1(t') | ] \leq \Ex\{ \1(t \wedge t' \leq U_{ij} \leq t \vee t') \} = |t - t'|,
$$
we apply Lemma~3.22 of \cite{DehPhi02} with $\xi_i = Y_i(t) - Y_i(t')$ to obtain that
$$
\Ex[ \{ \Hb_n(t) - \Hb_n(t') \}^4 ] \leq 10^4 \frac{c}{\epsilon} \left( |t - t'|^\eta + n^{-1} |t - t'|^{\eta/2}  \right) = \lambda \left( |t - t'|^\eta + n^{-1} |t - t'|^{\eta/2}  \right),
$$
where $\eta = 1 + \epsilon/10 > 1$ and $\lambda = 10^4 c/\epsilon$. It follows that, for any $t,t' \in [0,1]$ such that $|t - t'| \geq n^{-2/\eta}$, 
\begin{equation}
\label{eq:4moment}
\Ex[ \{ \Hb_n(t) - \Hb_n(t') \}^4 ] \leq 2 \lambda |t - t'|^\eta.
\end{equation}
Next, consider a grid $0 = t_0 < t_1 < \dots < t_k = 1$ to be specified later. Furthermore, it can be verified that the function $G:t \mapsto \Ex\{Y_1(t)\}$ is continuous and strictly decreasing on $[0,1]$. Then, fix $i \in \{1,\dots,k\}$, let $\tau = \eps n^{-1/2}/4$, let $m = m_i = \ip{ \{ G(t_{i-1}) - G(t_i) \} / \tau }$ and define a subgrid $t_{i-1} = s_0 < s_1 < \dots < s_m = t_i$ such that $G(s_j) = G(s_0) - j \tau$ for $j \in \{1,\dots,m-1\}$. Notice that this ensures that, for any $j \in \{1,\dots,m\}$, $\tau \leq G(s_{j-1}) - G(s_j) \leq 2 \tau$. Now, fix $j \in \{1,\dots,m\}$. Using the fact that the function $t \mapsto n^{-1} \sum_{i=1}^n Y_i(t)$ is also decreasing, it can be verified that, for any $t \in [s_{j-1}, s_j]$,
$$
\Hb_n(t) - \Hb_n(t_{i-1}) \leq |\Hb_n(s_{j-1}) - \Hb_n(t_{i-1})| + \eps/2
$$
and
$$
- \eps/2 -  |\Hb_n(s_j) - \Hb_n(t_{i-1})| \leq \Hb_n(t) - \Hb_n(t_{i-1}).
$$
The above inequalities imply that, for any $t \in [t_{i-1},t_i] = \bigcup_{j=1}^m [s_{j-1},s_j]$,
$$
- \eps/2  + \min_{1 \leq j \leq m} \{ -  |\Hb_n(s_j) - \Hb_n(t_{i-1})| \} \leq \Hb_n(t) - \Hb_n(t_{i-1}) \leq \max_{2 \leq j \leq m} |\Hb_n(s_{j-1}) - \Hb_n(t_{i-1})| + \eps/2,
$$
and thus that
$$
\sup_{t \in [t_{i-1},t_i]} | \Hb_n(t) - \Hb_n(t_{i-1}) | \leq  \max_{1 \leq j \leq m} |\Hb_n(s_j) - \Hb_n(t_{i-1})| + \eps/2.
$$
Hence,
\begin{equation} 
\label{eq:probineq}
\Pr \left\{  \sup_{t \in [t_{i-1},t_i]} | \Hb_n(t) - \Hb_n(t_{i-1}) | \geq \eps \right\} \leq \Pr \left\{  \max_{1 \leq j \leq m} |\Hb_n(s_j) - \Hb_n(t_{i-1})| \geq \eps/2 \right\}.
\end{equation}
Now, let $\zeta_l = \Hb_n(s_l) - \Hb_n(s_{l-1})$, $l \in \{1,\dots,m\}$ with $\zeta_0 = 0$, and let $S_j = \sum_{l=0}^j \zeta_l$, $j \in \{0,\dots,m\}$. From~\eqref{eq:4moment}, we then have that, for any $0 \leq j < j' \leq m$ and $n$ sufficiently large, 
\begin{multline*}
\Ex\{ (S_{j'} - S_{j})^4 \} = \Ex\left\{ \left(\sum_{l=j+1}^{j'} \zeta_l \right)^4 \right\} = \Ex\left[ \left \{ \Hb_n(s_{j'}) -\Hb_n(s_{j})  \right\}^4 \right] \\ \leq 2 \lambda (s_{j'} - s_j)^\eta = 2 \lambda \left\{ \sum_{j < l \leq j'}
 (s_l - s_{l-1}) \right\}^\eta.
\end{multline*}
Indeed, by construction of the subgrid, for any $0 \leq j < j' \leq m$, $ n^{-1/2} \eps/4 \leq G(s_j) - G(s_{j'}) \leq s_{j'} - s_j$, and $n^{-1/2} \eps/4$ can be made larger than $n^{-2/\eta}$  by taking $n$ sufficiently large since $2/\eta > 1/2$. The assumption of Theorem~2.12 of \cite{Bil68} being satisfied \citep[see also Lemma~2.10 in][]{DehPhi02}, we obtain that there exists a constant $K \geq 0$ such that, for any $\nu \geq 0$,
$$
\Pr \left( \max_{1 \leq j \leq m} |S_j| \geq \nu \right) \leq \nu^{-4} K (s_m - s_0)^\eta = \nu^{-4} K (t_i - t_{i-1})^\eta.
$$ 
Applying the previous inequality to the right-hand side of~\eqref{eq:probineq}, we obtain that  
$$
\Pr \left\{  \sup_{t \in [t_{i-1},t_i]} | \Hb_n(t) - \Hb_n(t_{i-1}) | \geq \eps \right\} \leq \eps^{-4} 2^4 K (t_i - t_{i-1})^\eta.
$$
It follows that 
\begin{align*}
\Pr &\left\{  \max_{1 \leq i \leq k} \sup_{t \in [t_{i-1},t_i]} | \Hb_n(t) - \Hb_n(t_{i-1}) | \geq \eps \right\} \leq \eps^{-4} 2^4 K \sum_{i=1}^k (t_i - t_{i-1})^\eta \\
&\leq \eps^{-4} 2^4 K \times \max_{1 \leq i \leq k} (t_i - t_{i-1})^{\eta-1} \times \sum_{i=1}^k (t_i - t_{i-1}).
\end{align*}
By choosing the initial grid such that $\max_{1 \leq i \leq k} (t_i - t_{i-1}) \leq \{ \delta \eps^4 2^{-4} K^{-1} \}^{1/(\eta-1)}$, we obtain~\eqref{eq:goal}.

It remains to verify that $\Hb_n$ is asymptotically $|\cdot|$-equicontinuous in probability. By Problem 2.1.5 in \cite{vanWel96}, this amounts to showing that for any positive sequence $a_n \downarrow 0$ and any $\eps,\delta > 0$, 
\begin{equation}
\label{eq:goal2}
\Pr \left\{ \sup_{s,t \in [0,1] \atop |t - s| \leq a_n} | \Hb_n(s) - \Hb_n(t) | > 3 \eps \right\} \leq \delta
\end{equation}
for $n$ sufficiently large. Fix $\eps,\delta > 0$ and $a_n \downarrow 0$, and choose a grid $0 = t_0 < \dots < t_k = 1$ such that~\eqref{eq:goal} holds for all $n$ sufficiently large. Furthermore, let $\mu = \min_{1 < i < k} (t_i - t_{i-1})$. Then, from \citet[Theorem 7.4]{Bil99}, we have that, for all $n$ sufficiently large such that $a_n \leq \mu$, 
$$
\sup_{s,t \in [0,1] \atop |t - s| \leq a_n} | \Hb_n(s) - \Hb_n(t) | \leq 3 \max_{1 \leq i \leq k} \sup_{t \in [t_{i-1},t_i]} | \Hb_n(t) - \Hb_n(t_{i-1}) |.
$$
Finally,~\eqref{eq:goal2} follows for all $n$ sufficiently large by combining the previous inequality with~\eqref{eq:goal}.
\end{proof}

\begin{proof}[\bf Proof of Proposition~\ref{prop:convsigma}]
We shall only prove the result under~(ii), the proof being simpler under~(i). Recall $\sigma_{n,C,f}^2$ defined in~\eqref{eq:sigmaBn}. From~\eqref{eq:MSE}, we immediately have that $\sigma_{n,C,f}^2 \p \sigma_{C,f}^2$. 
It remains to show that $\check \sigma_{n,C_{1:n},f}^2 - \sigma_{n,C,f}^2 \p 0$. 

Recall $\vec h_{1:n}$ defined in~\eqref{eq:hnkl} and that $\pobs{U}_i^{1:n} = \vec h_{1:n}(\vec U_i)$ for all $i \in \{1,\dots,n\}$. Then, starting from~\eqref{eq:checksigma} and~\eqref{eq:sigma_nCf}, it can be verified that
\begin{multline}
\label{eq:ineqsigma}
| \check \sigma_{n,C_{1:n},f}^2 - \sigma_{n,C,f}^2  | \leq \left\{ \frac{1}{n} \sum_{i,j=1}^n \varphi \left(\frac{i-j}{\ell_n}\right) \right\}\\ 
\times  \left[ \sup_{\vec u \in [0,1]^d} |f \{ \I_C(\vec u) - \psi_{C}(C) \} |  + \sup_{\vec u \in [0,1]^d} \left| f [ \I_{C_{1:n}} \{ \vec h_{1:n}(\vec u) \} - \psi_{C_{1:n}}(C_{1:n}) ] \right| \right] \\ \times \sup_{\vec u \in [0,1]^d} \left| f [ \I_{C_{1:n}} \{ \vec h_{1:n}(\vec u) \} - \I_C(\vec u) - \psi_{C_{1:n}}(C_{1:n}) + \psi_{C}(C)   ] \right| .
\end{multline}
Some algebra shows that the second term on the right of the previous inequality is smaller than
$$
\sup_{\vec u \in [0,1]^d} | f \circ \I_C(\vec u) | + | f \circ \psi_C(C) |  + 2 \sup_{\vec u \in [0,1]^d} | f \circ \I_{C_{1:n}}(\vec u) |. 
$$
From~\eqref{eq:ICA} and~\eqref{eq:psiCA}, we have that, for any $A \subseteq D$, $|A| \geq 1$,  $\sup_{\vec u \in [0,1]^d} |\I_{C,A}(\vec u)| \leq 1$, $\sup_{\vec u \in [0,1]^d} |\I_{C_{1:n},A}(\vec u)| \leq 1$ and $|\psi_{C,A}(C)| \leq 1$. Hence, by~\eqref{eq:psiC},~\eqref{eq:IC} and linearity of $f$, we have that the second term (between square brackets) on the right of inequality~\eqref{eq:ineqsigma} is bounded by $4 \sup_{\vec x \in [-1,1]^{2^d-1}}|f(\vec x)| < \infty$. Concerning the first term on the right of~\eqref{eq:ineqsigma}, we have
$$
\frac{1}{n} \sum_{i,j=1}^n \varphi \left(\frac{i-j}{\ell_n}\right) = \frac{1}{n} \sum_{k=-\ell_n}^{\ell_n} (n - |k|) \varphi \left(\frac{k}{\ell_n}\right)  \leq 2 \ell_n + 1 = O(n^{1/2-\eps}).
$$
We will now show that the last supremum on the right of~\eqref{eq:ineqsigma} is $O_\Pr(n^{-1/2})$, which will complete the proof. By the triangle inequality,
\begin{multline*}
\sup_{\vec u \in [0,1]^d} \left| f [ \I_{C_{1:n}} \{ \vec h_{1:n}(\vec u) \} - \I_C(\vec u) - \psi_{C_{1:n}}(C_{1:n}) + \psi_{C}(C)   ] \right| \\ \leq \sup_{\vec u \in [0,1]^d} \left| f [ \I_{C_{1:n}} \{ \vec h_{1:n}(\vec u) \} - \I_C(\vec u) ] \right| +   \left| f \{ \psi_{C_{1:n}}(C_{1:n}) - \psi_{C}(C) \} \right|.
\end{multline*}
By linearity of $f$, from~\eqref{eq:ICA} and~\eqref{eq:IC}, to show that the first term on the right on the previous inequality is $O_\Pr(n^{-1/2})$, it suffices to show that, for any $A \subseteq D$, $|A| \geq 1$, 
\begin{equation}
\label{eq:show1}
\sup_{\vec u \in [0,1]^d} \left| \I_{C_{1:n},A} \{ \vec h_{1:n}(\vec u) \} - \I_{C,A}(\vec u) \right| = O_\Pr(n^{-1/2}).
\end{equation} 
Similarly, for the second term on the right, it suffices to show that, for any $A \subseteq D$, $|A| \geq 1$, $\left| \psi_{C_{1:n},A}(C_{1:n}) - \psi_{C,A}(C) \right| = O_\Pr(n^{-1/2})$. Now, from Fubini's theorem, $\psi_{C,A}(C) = \psi_{C,A}[\Ex\{\1(\vec U_1 \leq \cdot) \}] = \Ex\{\I_{C,A}(\vec U_1)\}$. Hence, $\left| \psi_{C_{1:n},A}(C_{1:n}) - \psi_{C,A}(C) \right|$ is smaller than
\begin{multline*}
\left| \frac{1}{n} \sum_{i=1}^n \left\{ \I_{C_{1:n},A}(\pobs{U}_i^{1:n}) - \I_{C,A}(\vec U_i) \right\} \right| + \left| \frac{1}{n} \sum_{i=1}^n \left[ \I_{C,A}(\vec U_i) - \Ex\{\I_{C,A}(\vec U_1)\}  \right] \right| \\
\leq \sup_{\vec u \in [0,1]^d} \left| \I_{C_{1:n},A} \{ \vec h_{1:n}(\vec u) \} - \I_{C,A}(\vec u) \right| + \left| \frac{1}{n} \sum_{i=1}^n \left[ \I_{C,A}(\vec U_i) - \Ex\{\I_{C,A}(\vec U_1)\}  \right] \right|.
\end{multline*}
The proof is therefore complete if we show~\eqref{eq:show1} and that the second term on the right of the previous inequality is $O_\Pr(n^{-1/2})$. The latter is a consequence of the weak convergence of $n^{-1/2} \sum_{i=1}^n \left[ \I_{C,A}(\vec U_i) - \Ex\{\I_{C,A}(\vec U_1)\} \right]$ which follows from Theorem 3.23 of \cite{DehPhi02} as a consequence of the fact that $\sup_{\vec u \in [0,1]^d} |\I_{C,A}(\vec u)| \leq 1$ and the assumption on the mixing rate.

It remains to prove~\eqref{eq:show1}. The latter will follow by the triangle inequality if we show that, for any $A \subseteq D$, $|A| \geq 1$,
\begin{align}
\label{eq:one}
&\sup_{\vec u \in [0,1]^d} | \I_{C,A} \{ \vec h_{1:n}(\vec u) \} - \I_{C,A} (\vec u) | = O_\Pr(n^{-1/2}), \\
\label{eq:two}
&\sup_{\vec u \in [0,1]^d} | \I_{H_{1:n},A} (\vec u) - \I_{C,A} (\vec u) | = O_\Pr(n^{-1/2}), \\
\label{eq:three}
&\sup_{\vec u \in [0,1]^d} | \I_{C_{1:n},A} (\vec u) -  \I_{H_{1:n},A} (\vec u) | = O_\Pr(n^{-1/2}).
\end{align}

Fix $A \subseteq D$, $|A| \geq 1$. 

{\em Proof of~\eqref{eq:one}.} We have
\begin{multline*}
\sup_{\vec u \in [0,1]^d} | \I_{C,A} \{ \vec h_{1:n}(\vec u) \} - \I_{C,A} (\vec u) |  \leq \sup_{\vec u \in [0,1]^d} \left| \prod_{l \in A} \{1 - H_{1:n,l}(u_l) \} - \prod_{l \in A} (1 - u_l) \right| \\ + \sum_{j \in A} \sup_{u \in [0,1]} \left| \int_{[0,1]^d} \prod_{l \in A \setminus \{j\}} (1 - v_l) \left[ \1 \{ H_{1:n,j}(u) \leq v_j \} - \1 ( u \leq v_j ) \right] \dd C(\vec v) \right|.
\end{multline*} 
By an application of the mean value theorem similar to that performed in the proof of Proposition~\ref{prop:weak_SnA_sm}, it is easy to verify that the first supremum is $O_\Pr(n^{-1/2})$ since, for any $j \in D$, $\sup_{u \in [0,1]} |H_{1:n,j}(u) - u| = O_\Pr(n^{-1/2})$ as a consequence of the weak convergence of $\B_n$ defined in~\eqref{eq:seqep}. The second term is smaller than
\begin{multline*}
\sum_{j \in A} \sup_{u \in [0,1]}  \int_{[0,1]} \left| \1 \{ H_{1:n,j}(u) \leq v \} - \1 ( u \leq v ) \right| \dd v \\ \leq \sum_{j \in A} \sup_{u \in [0,1]}  \int_{[0,1]} \1 \{ u \wedge H_{1:n,j}(u) \leq v \leq  u \vee H_{1:n,j}(u) \} \dd v \\= \sum_{j \in A} \sup_{u \in [0,1]} | H_{1:n,j}(u) -  u  | = O_\Pr(n^{-1/2}).
\end{multline*}

{\em Proof of~\eqref{eq:two}:} From~\eqref{eq:ICA} and the triangle inequality, it suffices to show that, for any $j \in A$, 
$$
\sup_{u \in [0,1]} \left| \frac{1}{n} \sum_{i=1}^n \prod_{l \in A \setminus \{j\}} (1-U_{il}) \1(u \leq U_{ij}) - \int_{[0,1]^d} \prod_{l \in A \setminus \{j\}} (1-v_l) \1(u \leq v_j) \dd C(\vec v) \right| = O_\Pr(n^{-1/2}).
$$
The latter is an immediate consequence of the weak convergence result stated in Lemma~\ref{lem:wcHn} and the continuous mapping theorem.

{\em Proof of~\eqref{eq:three}:} The supremum on the left of~\eqref{eq:three} is smaller than $I_n + II_n + III_n$, where 
\begin{align}
\nonumber
I_n &= \sup_{\vec u \in [0,1]^d} | \I_{C_{1:n},A}(\vec u) -  \I_{H_{1:n},A} \{ \vec h_{1:n}^{-1}(\vec u) \} |, \\ 
\label{eq:IIn}
II_n &= \sup_{\vec u \in [0,1]^d} | \I_{H_{1:n},A} \{ \vec h_{1:n}^{-1}(\vec u) \} - \I_{C,A}\{ \vec h_{1:n}^{-1}(\vec u) \} -   \I_{H_{1:n},A} (\vec u) + \I_{C,A}(\vec u)  |, \\   
\label{eq:IIIn}
III_n &= \sup_{\vec u \in [0,1]^d} | \I_{C,A}\{ \vec h_{1:n}^{-1}(\vec u) \} -  \I_{C,A} (\vec u) |, 
\end{align}
with $\vec h_{1:n}^{-1}$ is defined in~\eqref{eq:hnklinv}. The term $I_n$ is smaller
\begin{multline*}
\sup_{\vec u \in (0,1]^d} \left| \prod_{l \in A} (1 - u_l) - \prod_{l \in A} \{ 1 - H_{1:n,l}^{-1}(u_l) \}  \right| \\ + \sup_{\vec u \in [0,1]^d} \left| \frac{1}{n} \sum_{i=1}^n \sum_{j \in A} \prod_{l \in A \setminus \{j\}} \{ 1-H_{1:n,l}(U_{il}) \} \1 \{ u_j \leq H_{1:n,j}(U_{ij}) \} \right. \\ \left. - \frac{1}{n} \sum_{i=1}^n \sum_{j \in A} \prod_{l \in A \setminus \{j\}} (1-U_{il}) \1 \{ H_{1:n,j}^{-1}(u_j) \leq U_{ij} \} \right|.
\end{multline*}
Since, for any $j \in D$, $\sup_{u \in [0,1]} |H_{1:n,j}^{-1}(u) - u| = \sup_{u \in [0,1]} |H_{1:n,j}(u) - u|$ (for instance, by symmetry arguments on the graphs of $H_{1:n,j}$ and $H_{1:n,j}^{-1}$), and by an application of the mean value theorem as above, we obtain that the first supremum is $O_\Pr(n^{-1/2})$. Using the fact that, for all $u \in [0,1]$, $u \leq H_{1:n,j}(U_{ij})$ is equivalent to $H_{1:n,j}^{-1}(u) \leq U_{ij}$, it can be verified that the second supremum is smaller than
\begin{align*}
\sum_{j \in A} \sup_{u \in [0,1]} \left| \frac{1}{n} \sum_{i=1}^n \left[ \prod_{l \in A \setminus \{j\}} \{ 1-H_{1:n,l}(U_{il}) \} - \prod_{l \in A \setminus \{j\}} (1-U_{il}) \right] \1\{u \leq H_{1:n,j}(U_{ij})\} \right| \\ \leq \sum_{j \in A} \sup_{\vec u \in [0,1]^d} \left| \prod_{l \in A \setminus \{j\}} \{ 1-H_{1:n,l}(u_l) \} - \prod_{l \in A \setminus \{j\}} (1-u_l) \right| = O_\Pr(n^{-1/2}),
\end{align*}
where the last equality follows again by an application of the mean value theorem as above. Hence, $I_n = O_\Pr(n^{-1/2})$. For $II_n$ defined in~\eqref{eq:IIn}, we have
$$
II_n \leq n^{-1/2} \sum_{j \in A} \sup_{u \in [0,1]} \left| \Hb_{n,A,j}\{H_{1:n,j}^{-1}(u)\} - \Hb_{n,A,j}(u) \right| = o_\Pr(n^{-1/2}),
$$
where $\Hb_{n,A,j}$ is defined in Lemma~\ref{lem:wcHn}. The last equality is a consequence of the asymptotic equicontinuity in probability of $\Hb_{n,A,j}$ and the fact that $\sup_{u \in [0,1]} |H_{1:n,j}^{-1}(u) - u| = \sup_{u \in [0,1]} |H_{1:n,j}(u) - u| \as 0$. The latter convergence follows from the almost sure invariance principle established in \cite{BerPhi77} and \cite{Yos79}. It implies a functional law of the iterated logarithm for $u \mapsto H_{1:n,j}(u) - u$ as soon as $a > 3$, which in turn implies the Glivenko--Cantelli lemma under strong mixing.

It remains to show that $III_n$ defined in~\eqref{eq:IIIn} is $O_\Pr(n^{-1/2})$. The proof of the latter is similar to that of~\eqref{eq:one}. 
\end{proof}

\begin{proof}[\bf Proof of Proposition~\ref{prop:convsigmatilde}]
We only show the result under~(ii), the proof being simpler under~(i). To prove the desired result, we shall show that $\tilde \sigma_{n,b_n,C_{1:n},f}^2 - \check \sigma_{n,C_{1:n},f}^2 \p 0$. Proceeding as in the proof of Proposition~\ref{prop:convsigma} for~\eqref{eq:ineqsigma}, it can be verified that to prove the above, it suffices to show that, for any $A \subseteq D$, $|A| \geq 1$,
$$
\sup_{\vec u \in [0,1]^d} \left| \I_{b_n,C_{1:n},A} (\vec u) - \I_{C_{1:n},A}(\vec u) \right| = O_\Pr(n^{-1/2}).
$$
Fix $A \subseteq D$, $|A| \geq 1$. From~\eqref{eq:ICA} and~\eqref{eq:IbnCA}, we have that the supremum on the right of the previous display is smaller than $\sum_{j \in A} I_{n,j}$, where
$$
I_{n,j} = \sup_{u \in [0,1]} \int_{[0,1]^d} \left| \LL_{b_n} (u,v_j) - \1(u \leq v_j) \right| \dd C_{1:n}(\vec v).
$$
Fix $j \in A$. From~\eqref{eq:ineqLL}, we have that $I_{n,j} \leq n^{-1/2} J_{n,j}$, where 
\begin{align*}
J_{n,j} &= \sup_{u \in [0,1]} \frac{1}{\sqrt{n}} \sum_{i=1}^n \{ \1(u_- \leq \hat U_{ij}^{1:n} ) - \1(u_+ \leq \hat U_{ij}^{1:n} ) \} \\
&= \sup_{u \in [0,1]} \frac{1}{\sqrt{n}} \sum_{i=1}^n \{ \1(\hat U_{ij}^{1:n} < u_+) - \1(\hat U_{ij}^{1:n} < u_-) \} \\
&\leq \sup_{u \in [0,1]} \frac{1}{\sqrt{n}} \sum_{i=1}^n \{ \1(\hat U_{ij}^{1:n} \leq u_+) - \1(\hat U_{ij}^{1:n} \leq u_-) \} + \sup_{u \in [0,1]} \frac{1}{\sqrt{n}} \sum_{i=1}^n \1(\hat U_{ij}^{1:n} = u). 
\end{align*}
Proceeding as for~\eqref{eq:In3}, we obtain that the first supremum on the right of the previous display converges in probability to zero. The second supremum is smaller than
$$
\sup_{u \in [0,1]} \frac{1}{\sqrt{n}} \sum_{i=1}^n \{ \1(\hat U_{ij}^{1:n} \leq u) - \1(\hat U_{ij}^{1:n} \leq u - 1/n) \}
$$
and can be dealt with along the same lines. Hence, $J_{n,j} \p 0$, which implies that $I_{n,j} = o(n^{-1/2})$ and completes the proof.
\end{proof}

\end{document}